\definecolor{blue}{HTML}{1F77B4}
\definecolor{orange}{HTML}{FF7F0E}
\definecolor{green}{HTML}{2CA02C}
\pgfplotsset{compat=1.14}
\newcommand{\argmin}[1]{\underset{#1}{\operatorname{arg}\operatorname{min}}\;}
\newcommand{\argmax}[1]{\underset{#1}{\operatorname{arg}\operatorname{max}}\;}
\newcommand\norm[1]{\left\lVert#1\right\rVert}
\newcommand{\abs}[1]{\left\vert{#1}\right\vert}
\newtheorem*{assumption*}{\assumptionnumber}
\providecommand{\assumptionnumber}{}
\newenvironment{assumption}[1]
 {%
  \renewcommand{\assumptionnumber}{Assumption #1}%
  \begin{assumption*}%
  \protected@edef\@currentlabel{$#1$}%
 }
 {%
  \end{assumption*}
 }
\begin{document}
%
\title{Measuring tail risk at high-frequency: An $L_1$-regularized extreme value regression approach with unit-root predictors}
%
\author{Julien Hambuckers$^{\dagger}$, Li Sun$^{\dagger}$\thanks{Corresponding author: ir.li.sun@gmail.com} and Luca Trapin$^{\ddagger}$}

%
%
\institute{$\dagger$ University of Liège - HEC Li\`{e}ge, Belgium
\\
$\ddagger$ University of Bologna, Italy
}
\maketitle              
\begin{abstract}


We study tail risk dynamics in high-frequency financial markets and their connection with trading activity and market uncertainty. We introduce a dynamic extreme value regression model accommodating both stationary and local unit-root predictors to appropriately capture the time-varying behaviour of the distribution of high-frequency extreme losses. To characterize trading activity and market uncertainty, we consider several volatility and liquidity predictors, and propose a two-step adaptive $L_1$-regularized maximum likelihood estimator to select the most appropriate ones. We establish the oracle property of the proposed estimator for selecting both stationary and local unit-root predictors, and show its good finite sample properties in an extensive simulation study. Studying the high-frequency extreme losses of nine large liquid U.S. stocks using 42 liquidity and volatility predictors, we find the severity of extreme losses to be well predicted by low levels of price impact in period of high volatility of liquidity and volatility.

\keywords{high-frequency financial data \and  peaks-over-threshold (POT) \and time-varying generalized Pareto distribution\and $L_1$-regularized maximum likelihood estimation \and nonstationary variable selection}
\end{abstract}
\section{Introduction}


Measuring tail risk at high-frequency has become of utmost importance to market players and regulators \citep{weller2018}. While much efforts have been devoted to the measurement of tail risk at low-frequency \citep{nieto2016frontiers}, few attempts have been made to measure risk at high-frequency, see \cite{giot2005market}, \cite{dionne2009intraday} and \cite{chavez2012modelling}. Moreover, although these models can be very accurate, they explain the tail risk evolution in a ``reduced form'' manner, i.e., using autoregressive terms exploiting the persistence of the time series. They thus fail to provide a deeper structural understanding of the factors driving tail risk. As much as understanding the macroeconomic determinants of tail risk is a relevant problem at low-frequency \citep{massacci2017tail}, it is important to understand how market uncertainty and trading activity impacts tail risk at high-frequency. 

From a market microstructure perspective, though the intensification of high-frequency trading has improved trading costs and liquidity \citep{hendershott2011}, it is also suspected to be responsible for more frequent extreme price movements over short periods of time \citep{brogaard2018}. Such extreme fluctuations are often the result of an aggressive directional market making activity initiated when the market is already under stress. \cite{brogaard2018} find that market wide extreme shocks are likely to trigger the risk controls of high-frequency liquidity providers that thus withdraw from the market to reduce their risk exposure. Similarly, \cite{kirilenko2017flash} find that during the market turbulence induced by the 2010 Flash Crash, many high-frequency liquidity providers withdrew from the market, thus exacerbating the price fall. Studying how market uncertainty and trading activity affect extreme losses can thus provide a deeper understanding of the evolution of tail risk at high-frequency, and this paper proposes appropriate econometric techniques to do so.

We consider a dynamic extreme value regression framework \citep{chavez2016extreme,massacci2017tail,schwaab2021modeling} where the distribution of extreme losses is assumed to be well approximated by a generalized Pareto distribution (GPD) with time-varying parameters driven by exogenous preditors and autoregressive terms. To assess the impact of market uncertainty and trading activity on extreme losses, we consider several volatility predictors, proxing for market uncertainty, and liquidity predictors, characterizing trading activity. Despite extreme value regression techniques have been widely applied in finance \citep{chavez2016extreme, hambuckers2018understanding,bee2019realized}, our investigation presents new challenges: (i) as the financial literature proposes several volatility and liquidity measures, we face a variable selection problem aimed at identifying predictors capturing the most relevant aspects of trading activity affecting extremes as well as improving the predictive accuracy of tail risk; (ii) volatility and liquidity measures observed at high-frequency exhibit strong persistence and seasonalities, thus violating the classical stationary assumptions required for inference with the maximum likelihood estimator (MLE). To overcome these issues, we develop a two-step adaptive $L_1$-regularized maximum likelihood estimator (ALMLE) that allows performing variable selection with both stationary and local unit-root predictors \citep{lee2021LASSO}, and establish its oracle property.

We investigate the impact of 42 liquidity and volatility indicators on the distribution of high-frequency extreme losses of nine large liquid U.S. stocks observed from 2006 to 2014. We find that the severity of tail risk, as measured by the shape parameter of the GPD, is well predicted by low price impact \citep{goyenko2009liquidity} during periods of high volatility of volatility and high volatility of liquidity. This finding is coherent with the evidence in \cite{brogaard2018} that market markers liquidity supply is outstripped by liquidity demand after large uncertainty shocks, and their rush to leave the market to lower their risk exposures amplify extreme price movements. Our two-step ALMLE is necessary to reveal this pattern as the standard MLE finds almost all predictors to be significant. To validate our estimating strategy, we provide an out-of-sample VaR forecast analysis and find that the estimated model performs well in the out-of-sample.
The remainder of the paper is organized as follows: Section~\ref{sec:model_specifcation} presents the time-varying GPD model accommodating stationary and local unit-root predictors as well as autoregressive components; Section~\ref{sec:MLE_Inference} presents the MLE and shows its asymptotic non-normality when local unit-root predictors are included in the model;
Section~\ref{sec:LASSO} introduces the two-step ALMLE and prove the oracle property of this estimator in selecting both stationary and local unit-root predictors; Section~\ref{sec:simulation} provides an extensive simulation study comparing the performance of the two-step ALMLE to those of the MLE, showing the superiority of the former in finite samples. Section~\ref{sec:empirical_study} discusses the results of the empirical study whereas Section~\ref{sec:conclusion} concludes. Additional results and mathematical proofs are relegated to the Appendix.
\section{Extreme value regression}
\label{sec:model_specifcation}
We denote the logarithmic loss and return time series of a financial asset by $\{ l_t\}_{t=1}^T$ and $\{ r_t\}_{t=1}^T$, respectively, with $l_t = - r_t$, and denote $\bm{z}_t$ a vector of exogenous predictors observed at time $t$.
\begin{assumption}{M.1}
\label{assump:prob_space}
 $\{l_t\}_{t=1}^T$ and $\{\bm{z}_t\}_{t=1}^T$ are on a complete probability space $(\Omega, \mathcal{F}, P)$. At each time $t \in\{ 1,2,\ldots,T\}$, we have an information set $\mathcal{F}_{t-1}$ available which is the $\sigma$-algebra generated by $\{\bm{z}_{t-1}, l_{t-1}, \bm{z}_{t-2},$ $l_{t-2}, \ldots \}$. 
\end{assumption}

Let assume $\{ l_t\}_{t=1}^T$ is independent and identically distributed (i.i.d.) with a cumulative distribution function (c.d.f.) $F(\cdot)$. Probabilistic results from extreme value theory show that if there exist real sequences $a_T > 0$ and $\beta_T$ such that $\lim_{T\rightarrow \infty} F^T\left( a_T\,x + \beta_T\right)$ converges to a non-degenerate distribution $G(\cdot)$, then $F(\cdot)$ belongs to the max-domain of attraction of $G(\cdot)$, i.e. $F\in \mathcal{D}(G)$, and $G(\cdot)$ must be the generalized extreme value (GEV)  distribution (see Theorem 3.1.1. of \cite{coles2001introduction}).

Let $\{ y_t\}_{t=1}^T$ be a censored sequence of excess losses above a high threshold $u$, such that the excess loss $y_t = l_t - u$, if $l_t > u$, and $y_t = 0$ otherwise. Define the conditional distribution of excess losses,
$$
F_{|u}(y) := P\left\{ l_t - u \leq y \middle| l_t > u\right\} = P\left\{ y_t \leq y \middle| y_t > 0\right\}, \quad 0< y_t\leq L^F - u,
$$
with $L^F := \sup\{x: F(x) < 1\}$ the right end point of $F(\cdot)$. \cite{pickands1975statistical} and \cite{balkema1974residual} show that if $F(\cdot)\in \mathcal{D}(G)$ then the limiting distribution of $F_{|u}(y)$ is a GPD, i.e.
\begin{equation}
\label{thm:POT}
  \lim_{u\rightarrow +\infty} \, \sup_{0 < y < +\infty} |F_{|u}(y) - \text{GPD}(y;k,\sigma_u)| = 0,
\end{equation}
where $\text{GPD}(\cdot; k,\sigma)$ denotes the GPD with shape parameter $k\in\mathbb{R}$ and scale parameter $\sigma>0$, 
\begin{equation}\label{eq:gpd_dist}
    \text{GPD}(y;k,\sigma) = 1 - \left(1 + k\frac{y}{\sigma}\right)^{-1/k}.
\end{equation}
Eq.~\eqref{thm:POT} suggests that $F_{|u}(y)$ with $u$ large enough can be approximated by a $\text{GPD}(\cdot; k,\sigma_{u})$, where the scale parameter $\sigma_u$ depends on $u$. The peaks-over-threshold (POT) approach assumes this relationship holds exactly above a fixed threshold $u$ and uses the exceedances of such threshold to estimate the GPD parameters $\sigma$ and $k$ (see section 4.3 of \cite{coles2001introduction}).

\subsection{Time-varying peaks-over-threshold (POT) approach}
The classical POT approach assumes that $\{l_t\}$ is i.i.d. However, financial data typically exhibit dependence features such as time-varying heteroscedasticity and extremal clustering that violate this assumption. To capture these aspects, we adopt a dynamic POT approach. Let $\{ y_t\}_{t=1}^T$ be a censored sequence of excess losses over a threshold time series $\{u_t\}_{t=1}^T$, we model the excess loss distribution conditional on the information set $\mathcal{F}_{t-1}$, $F_{t|u_t}(y):=P\{ y_t\leq y | y_t>0, \mathcal{F}_{t-1}\}$, using a GPD with time-varying parameters $k_t$ and $\sigma_t$. See, e.g., \cite{chavez2014extreme,massacci2017tail,bee2019realized}. 
Consider the vector-valued time series of $p\in \mathbb{N}$ explanatory variables $\{\bm{z}_t := [z_{1,t},\ldots,z_{p,t}]'\}^T_{t=1}$. Given the information set $\mathcal{F}_{t-1}$, we consider the following specification for $\{ (k_t, \sigma_{t})\}$,
\begin{align}
\label{Model:k_sigma_models}
& \log\left(\frac{ k_t }{ 0.5 - k_t } \right)  =	\beta_{1,0} + \sum_{j=1}^{p} \beta_{1,j}\,z_{j,t-1} ,
\\
\label{Model:k_sigma_models_2ndEqu}
& \log(\sigma_t) 	=	\beta_{2,0} + \sum_{j=1}^{p}\beta_{2,j}\,z_{j,t-1}  + \beta_{2,p+1}\,\log(\sigma_{t-1}).
\end{align}
We impose that $0<k_t<0.5$ and $\sigma_t>0$ (see ~\cite{hosking1987parameter}) to ensure a finite conditional variance of $y_t$ and numerical stability in the estimation. 
As the scale parameter $\sigma_t$ can be associated with the variance of the underlying distribution $F_{t}(\cdot)$, we accommodate an autoregressive term in $\log(\sigma_t)$ in the spirit of GARCH models \citep{engle2001garch}.
We allow for both stationary and unit-root explanatory variables in \eqref{Model:k_sigma_models} and \eqref{Model:k_sigma_models_2ndEqu}, such that persistent predictors can be accommodated. 

\section{Maximum likelihood estimation}
\label{sec:MLE_Inference}
Let $\bm{\beta}: = [\beta_{1,0}, \beta_{1,1}, \ldots, \beta_{1,p}, \beta_{2,0}, \beta_{2,1}, \ldots, \beta_{2,p+1}]'$ denote the vector of the model coefficients in \eqref{Model:k_sigma_models}-\eqref{Model:k_sigma_models_2ndEqu}, and define the coefficient space $\Theta$ of $\bm{\beta}$ as a subspace of $\mathbb{R}^{2p + 2}\times(-1,1)$ accomodating all permissible coefficient vectors $\bm{\beta}$. We present the MLE of the model coefficients in \eqref{Model:k_sigma_models}-\eqref{Model:k_sigma_models_2ndEqu} and show it is consistent but asymptotically non-normal when local unit-root explanatory variables are included in the model. 

\subsection{Maximum likelihood estimator}
\label{subsec:mle_intro}

\begin{assumption}{M.2}
\label{assump:correct_model_speci}
We assume that for a given $\{u_t\}$, the conditional c.d.f. $F_{t}(\cdot)$ of $l_t$ given $\mathcal{F}_{t-1}$ exists for $t\in \{1,2,\ldots, T\}$ and $y_t:=l_t-u_t>0$ follows a time-varying GPD, i.e.
\begin{equation}
F_{t|u_t}(y_t) = \text{GPD}(y_t;k_t,\sigma_{t}) =
1 - \left( 1 + k_t\frac{y_t}{\sigma_t} \right)^{-\frac{1}{k_t}},
\label{eq:conditional_GPD}
\end{equation}
where $\{k_t\}$ and $\{\sigma_t\}$ are specified by \eqref{Model:k_sigma_models}-\eqref{Model:k_sigma_models_2ndEqu} with the true coefficient vector $\bm{\beta}^o\in \Theta\subset \mathbb{R}^{2p + 2}\times(-1,1)$. Moreover, $\{u_t\}$ returns a constant unconditional exceedance rate, i.e., $P\{y_t>0\}=\tau$ for all $t\in\{1,\ldots,T\}$ with a constant $\tau$ close to zero.
\end{assumption}

\begin{assumption}{M.3}
\label{assump:linear_processes_Zt}
Among the explanatory variables in Model~\eqref{Model:k_sigma_models}-\eqref{Model:k_sigma_models_2ndEqu}, we assume that $\{ z_{i,t}, i = 1, \ldots, p_0\} \in I(0)$ and $\{ z_{j,t}, j = p_0+1, \ldots, p\} \in I(1)$ with $\epsilon_{j,t} := z_{j,t} - z_{j,t-1} $ and $\{ \epsilon_{j,t}, j = p_0+1, \ldots, p\} \in I(0)$. We denote by $I(0)$ and $I(1)$ the set of stationary and unit-root predictors, respectively.
\end{assumption}

Under Assumption~\ref{assump:correct_model_speci}, the conditional probability density function (p.d.f.) of $y_t |\{y_t>0, \mathcal{F}_{t-1}\}$ is
\begin{equation}
f_t(y_t) = 
\frac{1}{\sigma_t}\left( 1 + k_t\frac{y_t}{\sigma_t} \right)^{-\frac{1}{k_t}-1},
\label{eq:conditional_GP_pdf}
\end{equation}
and the log-likelihood function $L(\cdot)$ of $\{y_t| y_t>0, \mathcal{F}_{t-1}\}$ can be defined as \citep{schwaab2021modeling},
\begin{equation}
\begin{aligned}
\mathcal{L}(\bm{\beta}; \{ y_t \}, \{ \bm{z}_{t-1} \})	&  = \sum_{t=1}^{T}\mathbbm{1}\{y_t>0\} \, \log(f_t(y_t))
				\\
				& = \sum_{t=1}^{T} \mathbbm{1}\{y_t>0\}\,\left( -\log(\sigma_t) - \left( \frac{1}{k_t} +1 \right)\, \log(1 + k_t\frac{y_t}{\sigma_t}) \right),
\end{aligned}
\label{eq:likelihood_function}
\end{equation}
where
\begin{equation}
\left\{
\begin{aligned}
& k_t(\bm{\beta})  = 0.5 \left( 1 + \exp\left(-(\beta_{1,0} + \sum_{j=1}^{p} \beta_{1,j}\,z_{j,t-1} ) \right) \right)^{-1},
\\
& \sigma_t(\bm{\beta}) 	=	\exp\left(\beta_{2,0} + \sum_{j=1}^{p}\beta_{2,j}\,z_{j,t-1} +  \beta_{2,p+1}\,\log(\sigma_{t-1}) \right),
\end{aligned}
\right.
\label{Model:k_sigma_models_rewrite}
\end{equation}
for $t = 1,2,\ldots,T$, with $\mathbbm{1}\{\cdot\}$ the indicator function taking value one if the input is true and zero otherwise. 

We consider standardized predictors $\{\bm{z}^*_t\}$ in the estimation to get stochastically bounded variables, i.e., for each $t\in\{1,\ldots,T\}$, we standardize $\bm{z}_t$ as follows,
\begin{equation}
    \bm{z}_t^*:=  [\; z_{1,t},\ldots, z_{p_0,t}, \frac{z_{p_0+1,t}}{\sqrt{T}}, \ldots, \frac{z_{p,t}}{\sqrt{T}} \;]'.
\end{equation}
Replacing $\{\bm{z}_t\}$ with $\{\bm{z}^*_t\}$ into the likelihood function in \eqref{eq:likelihood_function} and maximizing we obtain 
\begin{equation}
\widehat{\bm{\beta}}^{\text{mle}} = \argmax{\bm{\beta}\in \Theta } \mathcal{L}(\bm{\beta}; \{ y_t \}, \{ \bm{z}^*_{t-1} \}),	
\label{eq:ML_estimator}
\end{equation}
where $\Theta \subset \mathbb{R}^{2p + 2}\times(-1,1)$. We denote the corresponding vector of true coefficients $\bm{\beta}^{o*}$.

\subsubsection{Remark.}
Assumption \ref{assump:correct_model_speci} assumes a constant unconditional probability for the exceedance $\mathbbm{1}\{y_t>0\}$ for $t\in\{1,\ldots,T\}$, which is more general than assuming a constant conditional probability for $\mathbbm{1}\{y_t>0|\mathcal{F}_{t-1}\}$. This causes us no extra burden to obtain the limiting behaviour of the MLE because $\mathbbm{1}\{y_t>0\}$ is bounded and not a function of the model coefficients.
Assumption~\ref{assump:linear_processes_Zt} allows for both stationary and unit-root predictors among $\bm{z}_t$.

\subsection{Asymptotic properties of the MLE}
\label{subsec:limiting_mle}
\cite{smith1985maximum} establishes the asymptotic properties of the MLE of a GPD with constant $k$ and $\sigma$ in an i.i.d. setting. We extend \cite{smith1985maximum} establishing the consistency and limiting distribution of the MLE of the dynamic GPD with stationary and unit-root predictors in \eqref{eq:ML_estimator}. In what follows, we list the assumptions required to derive the asymptotic behaviour of the MLE, and establish the consistency and limiting distribution of $\widehat{\bm{\beta}}$.

\begin{assumption}{M.4}
\label{assump:magnitude_of_coefs} 
We assume that,
\begin{equation}
\left\{
\begin{aligned}
    & \beta_{s,i}^{o*}:= \beta_{s,i}^o = O(1) , \quad\text{for}\quad i = 0, 1, \ldots, p_0, \;\text{and}\; s = 1,2 ;
    \\
    &
    \beta_{s,j}^{o*}:=\sqrt{T}\beta_{s,j}^o = O(1), 
    \quad\text{for}\quad j = p_0+1,\ldots, p_0 + p \;\text{and}\; s = 1,2 ;
    \\
   &
    \beta_{2,p+1}^{o*}:=\beta_{2,p+1}^o \in (-1,1),
\end{aligned} 
\right.
\end{equation}
and $\bm{\beta}^{o*}:= [\beta_{1,0}^{o*}, \ldots, \beta_{1,p}^{o*}, \beta_{2,0}^{o*}, \ldots, \beta_{2,p+1}^{o*}]'\in\mathbb{R}^{2p+2}\times(-1,1)$.
\end{assumption}

\begin{assumption}{M.5}
\label{assump:limiting_Zt}
$\{\bm{\epsilon}_t := \left[ z_{1,t}, \ldots, z_{p_0,t}, \epsilon_{p_0+1,t}, \ldots, \epsilon_{p,t} \right]'\}_{t=1}^T$ is assumed $\text{i.i.d.}\,(\bm{0}, \Sigma^{(0)}) $ with mean $\bm{0}$ and positive definite covariance matrix $\Sigma^{(0)}$. With $
    \left\{\bm{z}_t^* 
    := 
    [\; z_{1,t},\ldots, z_{p_0,t}, \frac{z_{p_0+1,t}}{\sqrt{T}}, \ldots, \frac{z_{p,t}}{\sqrt{T}} \;]'\right\}
$, we assume that as $T\rightarrow\infty$, we have that
\begin{equation*}
\begin{aligned}
    & (1) 
    \left\{ 
    \begin{aligned}
    & 
    \frac{1}{\sqrt{T}}\sum_{t=1}^{T} z_{i,t}^* = O_p(1), \quad 
    \frac{1}{T}\sum_{t=1}^{T}  (z^*_{i,t})^2 = O_p(1), \quad i = 1,2,\ldots,p_0; 
    \\
    &
    \frac{1}{T}\sum_{t=1}^{T} z^*_{j,t} = O_p(1), \quad 
    \frac{1}{T}\sum_{t=1}^{T} (z^*_{j,t})^2 = O_p(1), \quad j = (p_0+1), \ldots,p;
    \\
    & 
    \frac{1}{T} \sum_{t=1}^{T} \bm{z}_t^* \bm{z}_t^{*'} \text{ is positive definite in probability one} ;
    \end{aligned}
    \right.
    \\
    & \text{and there exists a positive definite matrix $\Sigma:=[\Sigma_{i,j}]_{i,j = 1,\dots, p}$ such that }
    \\
    & (2)
    \left\{
    \begin{aligned}
    & 
    \lim_{T\rightarrow\infty}\frac{1}{T}\sum_{t=1}^{T} z^*_{i,t} = 0,
    \quad 
    \lim_{T\rightarrow\infty}\frac{1}{T}\sum_{t=1}^{T} (z^*_{i,t})^2 = 1, 
    \quad
    \frac{1}{\sqrt{T}}\sum_{t=1}^{t_0} z^*_{i,t} \overset{D}{\sim} \,W_i( \frac{t_0}{T}), \quad i = 1,2,\ldots,p_0; 
    \\
    &
    \frac{1}{T}\sum_{t=1}^{T} z^*_{j,t}  \overset{D}{\sim} \int^1_0 \Sigma_{j,j}^{1/2}\, W_j(t)\,dt,
    \quad
    \frac{1}{T}\sum_{t=1}^{T} (z^*_{j,t})^2 \overset{D}{\sim} \int^1_0 \Sigma_{j,j} W_j^2( t)\,dt, 
    \qquad j = (p_0+1), \ldots, p; 
    \\
    &
    \frac{1}{T}\sum_{t=1}^{T} \bm{z}_t^* \bm{z}_t^{*'} \overset{D}{\sim} \int^1_0 \left(\Sigma^{1/2}\,\bm{W_{z*}}(t)\right)\,\left(\Sigma^{1/2}\,\bm{W_{z*}}(t)\right)'\,dt,
    \\
    & 
    \frac{1}{T} \sum_{t=1}^{T}  \bm{z}_t^* \bm{z}_t^{*'} \text{ is positive definite in probability one};
    \end{aligned}
    \right.
\end{aligned}    
\end{equation*}    
where $1\leq t_0 \leq T$, and $W_i(\cdot), W_j(\cdot)$ are independent Brownian motions. And denote \lq\, $\overset{D}{\sim}$ \rq \,for convergence in distribution and $\bm{W_{z*}}(t) : = [\frac{d W_1(t)}{\sqrt{dt}}, \ldots, \frac{d W_{p_0}(t)}{\sqrt{dt}}, W_{p_0+1}(t), \ldots, W_{p}(t)]' $. 
\end{assumption}

\begin{assumption}{M.6}
\label{assump:narrow_Theta}
$\Theta$ is a compact subspace of $\mathbb{R}^{2p+2}\times(-1,1)$ containing the true coefficient vector $\bm{\beta}^{o*}$ such that $ H_{\mathcal{L}}(\cdot)$ is positive definite in $\Theta$ almost surely. 
\end{assumption}


\begin{assumption}{M.7}
\label{assump:limiting_score_Z}
Given Assumptions~\ref{assump:linear_processes_Zt} and \ref{assump:limiting_Zt}, we further assume that as $T \rightarrow \infty$, it holds that 
\begin{equation}
  \frac{1}{\sqrt{T}} \sum^T_{t=1}\, \bm{\psi}_t(\bm{\beta}^{o*})
  \overset{\mathcal{D}}{\sim}
  S_{\psi},
\end{equation}
where $S_{\psi}$ is a non-degenerate distribution:
\begin{equation}
   S_{\psi} = \Sigma^{1/2}_{\psi} \int_0^1\; 
   \begin{bmatrix}
    dW_{g_k}(t)
   \\
    \bm{W_{z*}}(t) dW_{g_k}(t)
    \\
    dW_{g_{\sigma}}(t)
    \\
    \bm{W_{z*}}(t) dW_{g_{\sigma}}(t)
    \\
    d W_{ar}(t)\,dW_{g_{\sigma}}(t)
    \end{bmatrix}
\end{equation}
where $\Sigma_{\psi}$ is a positive definite $(2p+3)\times(2p+3)$ matrix and $\{W_{g_k}(t)\}, \{W_{g_{\sigma}}(t)\}, \{W_{ar}(t)\}$ are Brownian motions.
\end{assumption}

\begin{assumption}{M.8}
\label{assump:weakconverge_HL}
$\frac{H_L(\bm{\beta}^{o*}; \{l_t\}, \{\bm{z}^*_t\})}{T}$ at $\bm{\beta}^{o*}$ is assumed to weakly converge to a stochastic integral $\Omega_{H}$, i.e.,
\begin{equation}
    \label{def:HL_weakconverge}
     \frac{H_{\mathcal{L}}( \bm{\beta}^{o*}; \{l_t\}, \{\bm{z}_t^*\}}{T} \overset{D}{\sim} \Omega_{H},
     \qquad \text{when } T\rightarrow\infty,
\end{equation}
where $\Omega_{H}^{-1}$ exists upon the limiting behaviours of ${\bm{z}^*_t}$ in Assumption~\ref{assump:limiting_Zt}. 
\end{assumption}

\subsubsection{Remark.}
Assumption~\ref{assump:magnitude_of_coefs} imposes the orders of magnitude of $\bm{\beta}^{o}$ to ensure that the unit-root explanatory variables $z_{p_0+1,t}, \ldots, z_{p,t} $ have coefficients of local-to-zero rate being $\frac{1}{2}$, see \cite{phillips2013predictive} and \cite{lee2021LASSO}.
Assumption~\ref{assump:limiting_Zt} ensures that the partial sums of $\{\bm{z}_t^*\}$ and $\{\bm{z}_t^*\,\bm{z}_t^{*'}\}$ converge at specific rates. Assumption~\ref{assump:limiting_Zt} is also used by \cite{saikkonen1993continuous, saikkonen1995problems,lee2021LASSO}, and was shown to hold for time series with a moderate degree of temporal dependence and heteroscedasticity of $\{\bm{\epsilon}_t\}$. See, e.g.,Theorem 18.2 of~\cite{billingsley2013convergence}, ~\cite{phillips1986multiple, phillips1991optimal}.
Assumption~\ref{assump:narrow_Theta} restricts the permissible parameter space $\Theta$ for the ML estimation, especially maintaining the positive definiteness of $H_{\mathcal{L}}(\bm{\beta})$ in analogy to Assumption~9 of~\cite{smith1985maximum} for settling the uniqueness of the estimator.
Assumption~\ref{assump:limiting_score_Z} assumes the limiting distribution of the likelihood gradient function at $\bm{\beta}^{o*}$, see Lemma A.1.1 of \cite{lee2016predictive}.
Assumption~\ref{assump:weakconverge_HL} assumes the existence of the limiting distribution of the likelihood Hessian matrix at $\bm{\beta}^{o*}$, see Lemma A.1.2 of \cite{lee2016predictive}. 

\begin{theorem} [MLE consistency]\ \\
\label{thm:MLE_consistency}
Under Assumptions~\ref{assump:prob_space}, \ref{assump:correct_model_speci}, \ref{assump:linear_processes_Zt}, \ref{assump:limiting_Zt}(1), \ref{assump:magnitude_of_coefs} and \ref{assump:narrow_Theta}, and for any $\epsilon >0$,
\begin{equation}
 \lim_{T\rightarrow\infty} P\left\{\norm{\widehat{\bm{\beta}}^{\text{mle}} - \bm{\beta}^{o*} }>\epsilon \right\} = 0,   
\end{equation} 
\end{theorem}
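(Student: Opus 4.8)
The plan is to work throughout in the standardized coordinates, so that the true value $\bm{\beta}^{o*}$ is an $O(1)$ interior point of the compact set $\Theta$ (Assumptions M.4 and M.6), and to run a Wald-type argument driven by a second-order Taylor expansion of $\mathcal{L}$ around $\bm{\beta}^{o*}$. Since $\widehat{\bm{\beta}}^{\text{mle}}$ maximizes $\mathcal{L}$, it suffices to show that for every $\epsilon>0$ the log-likelihood is, with probability tending to one, strictly smaller than $\mathcal{L}(\bm{\beta}^{o*})$ on the sphere $\{\bm{\beta}\in\Theta:\norm{\bm{\beta}-\bm{\beta}^{o*}}=\epsilon\}$; concavity of $\mathcal{L}$ on $\Theta$ (equivalently positive definiteness of the observed information $H_{\mathcal{L}}$, Assumption M.6) then propagates this to the whole region $\norm{\bm{\beta}-\bm{\beta}^{o*}}\ge\epsilon$ and forces the maximizer into the open ball. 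Writing $\bm{\psi}_T:=\sum_{t=1}^T\bm{\psi}_t(\bm{\beta}^{o*})$ for the score and $H_{\mathcal{L}}(\bar{\bm{\beta}})$ for the observed information at a mean-value point, the expansion
\begin{equation*}
\mathcal{L}(\bm{\beta})-\mathcal{L}(\bm{\beta}^{o*}) = \bm{\psi}_T'(\bm{\beta}-\bm{\beta}^{o*}) - \tfrac{1}{2}(\bm{\beta}-\bm{\beta}^{o*})'H_{\mathcal{L}}(\bar{\bm{\beta}})(\bm{\beta}-\bm{\beta}^{o*})
\end{equation*}
reduces the whole problem to two bounds: an upper bound of order $\sqrt{T}$ on $\norm{\bm{\psi}_T}$ and a lower bound of order $T$ on the smallest eigenvalue of $H_{\mathcal{L}}$.

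For the score, the first step is to note that under correct specification (Assumption M.2) each summand $\bm{\psi}_t(\bm{\beta}^{o*})$ is a martingale difference with respect to $\mathcal{F}_{t-1}$, since the GPD score has conditional mean zero and the exceedance indicator factors out through the tower property. Each component of $\bm{\psi}_t$ has the form $z_{j,t-1}^*\,g_t$ with $g_t$ a generalized residual bounded on $\Theta$, so its accumulated conditional variance is controlled by $\sum_{t}(z_{j,t-1}^*)^2$. By Assumption M.5(1) this quantity is $O_p(T)$ for both the stationary components (where moreover $T^{-1/2}\sum_t z_{i,t}^*=O_p(1)$) and the local unit-root components (where $z_{j,t}^*=z_{j,t}/\sqrt{T}$ is stochastically bounded and $T^{-1}\sum_t(z_{j,t}^*)^2=O_p(1)$). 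A martingale variance inequality then yields $\norm{\bm{\psi}_T}=O_p(\sqrt{T})$. I emphasize that only the boundedness rates in M.5(1), and not the distributional limits of M.5(2) or the weak convergence of M.7, are needed here.

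For the information, Assumption M.6 supplies positive definiteness of $H_{\mathcal{L}}$ on all of $\Theta$, while the factor $\frac{1}{T}\sum_t\bm{z}_t^*\bm{z}_t^{*'}$, which the information inherits up to multipliers bounded on $\Theta$, is positive definite with probability one by M.5(1). Combining these over the compact set $\Theta$ gives a uniform lower bound $\lambda_{\min}\big(\tfrac{1}{T}H_{\mathcal{L}}(\bar{\bm{\beta}})\big)\ge c>0$ in probability. Plugging the two bounds into the expansion, on $\norm{\bm{\beta}-\bm{\beta}^{o*}}=\epsilon$ the right-hand side is at most $O_p(\sqrt{T})\,\epsilon-\tfrac{c}{2}T\epsilon^2$, which is strictly negative with probability tending to one once $T$ is large; this is exactly the statement needed, and concavity then closes the argument.

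I expect the delicate step to be the uniform-in-$\bm{\beta}$ lower bound on $\tfrac{1}{T}H_{\mathcal{L}}(\bar{\bm{\beta}})$: because the local unit-root regressors, even after the $\sqrt{T}$ standardization, converge to random Brownian functionals rather than to constants, the normalized information matrix has a genuinely random (not deterministic) limit, so positive definiteness must be argued ``in probability one'' on all of $\Theta$ rather than through a deterministic uniform law of large numbers. Keeping the mean-value point $\bar{\bm{\beta}}$ inside $\Theta$ and controlling the smallest eigenvalue uniformly there---using compactness of $\Theta$ (M.6), the nondegeneracy of $\frac{1}{T}\sum_t\bm{z}_t^*\bm{z}_t^{*'}$ in M.5(1), and continuity of $H_{\mathcal{L}}$ in $\bm{\beta}$ under the bounds on $\bm{z}_t^*$---is where the mixed stationary/nonstationary structure must be handled with care, and it is the analogue in this nonstationary setting of Assumption~9 and the corresponding consistency step in \cite{smith1985maximum}.
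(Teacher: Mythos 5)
Your proposal is essentially correct but follows a genuinely different route from the paper. You run the classical Wald/likelihood-comparison argument: a second-order expansion of $\mathcal{L}$ around $\bm{\beta}^{o*}$, an $O_p(\sqrt{T})$ bound on the score at the truth via its martingale-difference structure and the variance rates in Assumption~\ref{assump:limiting_Zt}(1), and an order-$T$ lower bound on the information, so that the likelihood drops below $\mathcal{L}(\bm{\beta}^{o*})$ on every sphere of radius $\epsilon$. The paper instead works entirely at the level of the score equation: it first shows (Proposition~\ref{prop:assumpB2_continuity}) that $\bm{\psi}_t(\cdot)$ is uniformly continuous and tight on $\Theta$, then (Proposition~\ref{prop:mean_of_score}) that $\frac{1}{T}\sum_t\bm{\psi}_t(\bm{\beta}^{o*})\to 0$ in probability via a martingale strong law (\cite{csorgHo1968strong}) and that $\bm{\beta}^{o*}$ is the unique zero of the expected score by the mean value theorem and positive definiteness of $H_{\mathcal{L}}$; consistency then follows by tracking a sequence of points along which $\abs{\frac{1}{T}\sum_t\bm{\psi}_t(\cdot)}$ decreases to the root $\widehat{\bm{\beta}}^{\text{mle}}$. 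The trade-off is that the paper's Z-estimator argument needs only the qualitative statements (average score $\to 0$, positive definiteness, uniform continuity), whereas your route demands two quantitative inputs that go beyond the literal text of the assumptions: (i) the $O_p(\sqrt{T})$ score bound requires bounded conditional second moments of $\bm{\psi}_t$, not just the stochastic boundedness the paper establishes; and (ii) Assumption~\ref{assump:narrow_Theta} only asserts positive definiteness of $H_{\mathcal{L}}$ on $\Theta$, not a uniform smallest-eigenvalue bound of order $T$, so your $-\frac{c}{2}T\epsilon^2$ term needs the additional construction you sketch (and which you correctly flag as the delicate step, given the random Brownian-functional limit of $\frac{1}{T}\sum_t\bm{z}_t^*\bm{z}_t^{*\prime}$). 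Note also that you implicitly read $H_{\mathcal{L}}$ as the observed information (negative Hessian of $\mathcal{L}$), which is the convention that makes both your expansion and Assumption~\ref{assump:narrow_Theta} coherent; the paper's sign convention is internally inconsistent on this point, and your version is the corrected one. With those two inputs made explicit, your argument is a valid and arguably more standard alternative to the paper's.
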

\begin{proof}
See Appendix~\ref{append:MLE}.
\end{proof}

\begin{theorem} [MLE asymptotics]\ \\
\label{thm:MLE_asym}
Under Assumptions~\ref{assump:prob_space} to \ref{assump:weakconverge_HL}, we have
\begin{equation}
\sqrt{T}\left(\widehat{\bm{\beta}}^{\text{mle}}-\bm{\beta}^{o*}\right) \overset{D}{\sim} \Omega_{H}^{-1} \; S_{\psi},
\qquad \text{as } T\rightarrow\infty.
\end{equation} 
\end{theorem}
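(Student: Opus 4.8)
The plan is to obtain the limiting law from the first-order condition of the maximization in \eqref{eq:ML_estimator}, following the usual M-estimation route of a mean-value expansion of the score, but with the nonstationary regressors handled through weak convergence to stochastic integrals rather than an ordinary law of large numbers.

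First I would invoke the consistency result of Theorem~\ref{thm:MLE_consistency} together with the compactness of $\Theta$ (Assumption~\ref{assump:narrow_Theta}) to conclude that, with probability tending to one, $\widehat{\bm{\beta}}^{\text{mle}}$ lies in the interior of $\Theta$ and therefore solves the score equation $\sum_{t=1}^{T}\bm{\psi}_t(\widehat{\bm{\beta}}^{\text{mle}}) = \bm{0}$. A componentwise mean-value expansion of the score about $\bm{\beta}^{o*}$ then yields $\bm{0} = \sum_{t=1}^{T}\bm{\psi}_t(\bm{\beta}^{o*}) + H_{\mathcal{L}}(\bar{\bm{\beta}})(\widehat{\bm{\beta}}^{\text{mle}} - \bm{\beta}^{o*})$ for an intermediate point $\bar{\bm{\beta}}$ on the segment joining $\widehat{\bm{\beta}}^{\text{mle}}$ and $\bm{\beta}^{o*}$, so that, whenever the scaled Hessian is invertible,
\[
\sqrt{T}\left(\widehat{\bm{\beta}}^{\text{mle}} - \bm{\beta}^{o*}\right) = -\left(\frac{H_{\mathcal{L}}(\bar{\bm{\beta}})}{T}\right)^{-1}\frac{1}{\sqrt{T}}\sum_{t=1}^{T}\bm{\psi}_t(\bm{\beta}^{o*}).
\]
Invertibility of $T^{-1}H_{\mathcal{L}}$ in the limit is guaranteed by the existence of $\Omega_{H}^{-1}$ in Assumption~\ref{assump:weakconverge_HL}. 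The standardization $\bm{z}_t^*$ and the reparametrization in Assumption~\ref{assump:magnitude_of_coefs} are what make the single rate $\sqrt{T}$ appropriate for all coordinates simultaneously, absorbing the faster ($T$-rate) convergence of the unit-root coefficients in the original parametrization.

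The second step is to freeze the Hessian at the true value, i.e.\ to show $T^{-1}\big(H_{\mathcal{L}}(\bar{\bm{\beta}}) - H_{\mathcal{L}}(\bm{\beta}^{o*})\big)\overset{p}{\to}\bm{0}$. Since $\bar{\bm{\beta}}$ is trapped between $\widehat{\bm{\beta}}^{\text{mle}}$ and $\bm{\beta}^{o*}$, Theorem~\ref{thm:MLE_consistency} gives $\bar{\bm{\beta}}\overset{p}{\to}\bm{\beta}^{o*}$. I would then bound the third-order derivatives of the log-likelihood by an integrable envelope and control the resulting partial sums of the standardized regressors using the rate conditions of Assumption~\ref{assump:limiting_Zt}; multiplying this bound by $\norm{\bar{\bm{\beta}}-\bm{\beta}^{o*}}=o_p(1)$ shows the difference is negligible, even though $T^{-1}H_{\mathcal{L}}$ itself has a nondegenerate random limit. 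Combined with Assumption~\ref{assump:weakconverge_HL}, this gives $T^{-1}H_{\mathcal{L}}(\bar{\bm{\beta}})\overset{D}{\sim}\Omega_{H}$.

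Finally I would establish the joint weak convergence of the pair $\big(T^{-1/2}\sum_{t}\bm{\psi}_t(\bm{\beta}^{o*}),\, T^{-1}H_{\mathcal{L}}(\bar{\bm{\beta}})\big)$ to $(S_{\psi},\Omega_{H})$, using Assumptions~\ref{assump:limiting_score_Z} and \ref{assump:weakconverge_HL}. Because both limits are functionals of the same Brownian motions entering through the unit-root regressors (Assumption~\ref{assump:limiting_Zt}), this is genuine joint convergence and cannot be reduced to a Slutsky/independence argument: the score limit and the Hessian limit are statistically dependent. Applying the continuous mapping theorem to the map $(s,H)\mapsto H^{-1}s$, which is continuous on the set where $H$ is invertible (almost sure here by Assumption~\ref{assump:weakconverge_HL}), and absorbing the sign into the normalizations adopted for $\Omega_{H}$ and $S_{\psi}$ in Assumptions~\ref{assump:limiting_score_Z}--\ref{assump:weakconverge_HL}, delivers the stated limit $\Omega_{H}^{-1}S_{\psi}$. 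The main obstacle is this last step together with the Hessian-freezing in the second: unlike the stationary case one cannot use a uniform law of large numbers to replace the Hessian by a deterministic information matrix, so the argument must carry the joint stochastic-integral limits throughout and verify that evaluating the Hessian away from $\bm{\beta}^{o*}$ leaves the random limit $\Omega_{H}$ intact.
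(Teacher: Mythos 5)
Your proposal follows essentially the same route as the paper: a mean-value expansion of the score equation $\sum_t \bm{\psi}_t(\widehat{\bm{\beta}}^{\text{mle}})=\bm{0}$ about $\bm{\beta}^{o*}$, replacement of the Hessian at the intermediate point by its value at $\bm{\beta}^{o*}$ via consistency and continuity, and then the limits supplied by Assumptions~\ref{assump:limiting_score_Z} and~\ref{assump:weakconverge_HL}. One point where you are actually more careful than the paper: the paper invokes Slutsky's theorem to combine $T^{-1/2}\sum_t\bm{\psi}_t(\bm{\beta}^{o*})$ and $\left(T^{-1}H_{\mathcal{L}}(\bm{\beta}^{o*})\right)^{-1}$, even though $\Omega_H$ is a random (stochastic-integral) limit so Slutsky does not directly apply; your insistence on joint weak convergence of the pair followed by the continuous mapping theorem applied to $(s,H)\mapsto H^{-1}s$ is the correct formalization of that step.
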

\begin{proof}
See Appendix~\ref{append:MLE}.
\end{proof}

\section{Adaptive $L_1$-regularized maximum likelihood estimation}
\label{sec:LASSO}

Variable selection facilitates interpretation of a regression model and solves the trade-off issue between bias and efficiency so as to achieve predictive accuracy, see \cite{james2013introduction}.
Although variable selection performed via inferential tests based on the asymptotic normality of the MLE might seem a viable solution, it is not appropriate in our setting because of the following three issues: (i) the inability to control type I error for multiple predictor selection; (ii) severe size distortion for selecting unit-root predictors because of the non-normal limiting distribution; (iii) low power in selecting predictors for the shape parameter due to high standard errors of coefficients, see simulations in Section \ref{sec:simulation}. 

To circumvent these issues, we adopt $L_1$-regularized MLE for automatic variable selection \citep{tibshirani1996regression}. 
Due to the constraining nature of $L_1$-regularization, this estimator sets some coefficients exactly to zero so as to perform variable selection. \cite{zou2006adaptive} explore the advantages of using weighted $L_1$-regularization on model coefficients and proposed the adaptive LASSO. With proper adaptive weights, the adaptive LASSO exhibits the oracle property, which produces an asymptotic efficient estimator of variable selection consistency as if the true underlying model were given from the outset.
\cite{medeiros2016} prove the oracle property for the adaptive LASSO in high-dimensional time series with non-Gaussian and heteroscedastic errors as well as with highly correlated regressors. \cite{kock2016consistent} show that the adaptive LASSO is oracle efficient in stationary and non-stationary autoregressions. 
\cite{lee2021LASSO} prove the oracle property of the adaptive LASSO with stationary and local unit-root predictors, and propose a novel post-selection adaptive LASSO for selecting mixed-root predictors i.e. stationary, local unit root, and cointegrated predictors.

Drawing on this literature, we extend the adaptive LASSO to the MLE in \eqref{eq:ML_estimator} to estimate and select stationary and local unit-root predictors in \eqref{Model:k_sigma_models}-\eqref{Model:k_sigma_models_2ndEqu}.
A general form of adaptive $L_1$-regularized maximum likelihood estimator (ALMLE) can be drawn directly from \cite{zou2006adaptive} and is formulated as follows:
\begin{equation}
	\widehat{\bm{\beta}}^{\text{al}}  = \argmin{\bm{\beta}\in \Theta^* } - \mathcal{L}(\bm{\beta}; \{ y_t \}, \{ \bm{z}_t^* \}) + \lambda_{k, T}\sum_{i=1}^{p} w_{k,i}|\beta_{1,i}| + \lambda_{\sigma, T}\sum_{j=1}^{p+1} w_{\sigma,j}|\beta_{2,j}|,
	\label{eq:ML_adaL1_estimator}
\end{equation}
where $\mathcal{L}(\bm{\beta}; \{ y_t \}, \{ \bm{z}_t^* \})$ is the log-likelihood function specified in~\eqref{eq:likelihood_function}; $\lambda_{k, T}, \lambda_{\sigma, T}>0$ are tuning parameters; and $w_{k,i}, w_{\sigma,j}$ are adaptive weights for penalizing coefficients differently. 
We consider two tuning parameters instead of one to be less restrictive on tuning parameter selection, thereby stabilizing the variable selection for both the shape and scale models in \eqref{Model:k_sigma_models}-\eqref{Model:k_sigma_models_2ndEqu}. 
To set the tuning parameters, we start off with large enough values of $\lambda_{k, T}$ and $\lambda_{\sigma, T}$ such that no predictors are selected by $\widehat{\bm{\beta}}^{\text{al}}$, and denote these two values as $\lambda_{k, T, \max}$ and $\lambda_{\sigma, T, \max}$, respectively. 
We then search for the optimal tuning parameters using an information criterion (IC) over equally-spaced grids of $n_{\lambda_k}$ and $n_{\lambda_{\sigma}}$ nodes\footnote{We use $n_{\lambda_k}=50$ and $n_{\lambda_{\sigma}}=30$ across this paper unless stated otherwise. We also have tried $n_{\lambda_k}=100$ and $n_{\lambda_{\sigma}}=100$ to check the sufficiency of $n_{\lambda_k}=50$ and $n_{\lambda_{\sigma}}=30$, and found that differences in the results are small.} 
defined on the intervals  $[\lambda_{k, T, \max},10^{-6}]$ and $[\lambda_{\sigma, T, \max},10^{-6}]$. Formally, the grids for the shape and scale parameters are defined as $S_{\lambda_{k,T}}:=\{\exp(\log(\lambda_{k, T, \max}) - j\,\frac{\log(\lambda_{k, T, \max}) - \log(10^{-6})}{n_{\lambda_k} - 1} ), j=0,1,\ldots,(n_{\lambda_k}-1)\}$ and $S_{\lambda_{\sigma,T}}:=\{\exp(\log(\lambda_{\sigma, T, \max}) - j\,\frac{\log(\lambda_{\sigma, T, \max}) - \log(10^{-6})}{n_{\lambda_{\sigma}} - 1} ), j=0,1,\ldots,(n_{\lambda_{\sigma}}-1)\}$, respectively. We consider different information criteria, namely the Bayesian Information Criterion (BIC), the Hannan–Quinn information criterion (HQ) and the Akaike Information Criterion (AIC), and thus select the optimal tuning parameters $(\widehat{\lambda}_{k,T}, \widehat{\lambda}_{\sigma,T})$  according to the following rules,
\begin{align}
\text{AIC:}\qquad
& 
\resizebox{.9\hsize}{!}{$
	(\widehat{\lambda}_{k,T}, \widehat{\lambda}_{\sigma,T}) = \argmin{\lambda_{k,T}\in S_{\lambda_{k,T}}\,,\, \lambda_{\sigma,T}\in S_{\lambda_{\sigma,T}}}\, -2\log(\mathcal{L}(\widehat{\bm{\beta}}^{\text{al}}(\lambda_{k, T}, \lambda_{\sigma, T}); \{ y_t \}, \{ \bm{z}_t^* \})) + 2\left( \sum_{i=1,\ldots,p} \mathbbm{1}\{\widehat{\beta}^{\text{al}}_{1,i}\neq 0\} + \sum_{j=1,\ldots,(p+1)} \mathbbm{1}\{\widehat{\beta}^{\text{al}}_{2,j}\neq 0\} \right)
	$}
\\
\text{HQ:}\qquad
&
\resizebox{.9\hsize}{!}{$
	(\widehat{\lambda}_{k,T}, \widehat{\lambda}_{\sigma,T}) = \argmin{\lambda_{k,T}\in S_{\lambda_{k,T}}\,,\, \lambda_{\sigma,T}\in S_{\lambda_{\sigma,T}}}\, -2\log(\mathcal{L}(\widehat{\bm{\beta}}^{\text{al}}(\lambda_{k, T}, \lambda_{\sigma, T}); \{ y_t \}, \{ \bm{z}_t^* \})) + 2\log(\log(T))\left( \sum_{i=1,\ldots,p} \mathbbm{1}\{\widehat{\beta}^{\text{al}}_{1,i}\neq 0\} + \sum_{j=1,\ldots,(p+1)} \mathbbm{1}\{\widehat{\beta}^{\text{al}}_{2,j}\neq 0\} \right)
	$}
\\
\text{BIC:}\qquad
&
\resizebox{.9\hsize}{!}{$
	(\widehat{\lambda}_{k,T}, \widehat{\lambda}_{\sigma,T}) = \argmin{\lambda_{k,T}\in S_{\lambda_{k,T}}\,,\, \lambda_{\sigma,T}\in S_{\lambda_{\sigma,T}}}\, -2\log(\mathcal{L}(\widehat{\bm{\beta}}^{\text{al}}(\lambda_{k, T}, \lambda_{\sigma, T}); \{ y_t \}, \{ \bm{z}_t^* \})) + \log(T)\left( \sum_{i=1,\ldots,p} \mathbbm{1}\{\widehat{\beta}^{\text{al}}_{1,i}\neq 0\} + \sum_{j=1,\ldots,(p+1)} \mathbbm{1}\{\widehat{\beta}^{\text{al}}_{2,j}\neq 0\} \right)
	$}
\end{align}

The sequential strong rules of~\cite{tibshirani2012strong} is typically employed for computing LASSO-type problems. However, when $k_t$ presents persistent dynamics the sequential strong rules for $\widehat{\bm{\beta}}^{\text{al}}$ fails to screen among truly active and inactive predictors due to estimation bias when the tuning parameters are not small enough, and, as a byproduct, favors the boundary solution $k_t=0.5$. To reach variable selection consistency, it is necessary to enforce the optimizer to stay away from the boundary of the parameter space.
Theorem~\ref{thm:ALMLE_nece_cond} illustrates the restriction on the permissible coefficient space $\Theta$ in order to achieve the model selection consistency of $\widehat{\bm{\beta}}^{\text{al}}$, i.e.,
\begin{equation}
\lim_{T\rightarrow\infty} P\left\{ \mathcal{A}_T^{\text{al}} = \mathcal{A}\right\} = 1,
\end{equation} 
where $ \mathcal{A}_T^{\text{al}} :=  \mathcal{A}_{k,T}^{\text{al}} \cup \mathcal{A}_{\sigma,T}^{\text{al}}$ with $ \mathcal{A}_{k,T}^{\text{al}} := \left\{ (1,i): i\geq 1,  \widehat{\beta}_{1,i}^{\text{al}} \neq 0\right\}$ and $\mathcal{A}_{\sigma,T}^{\text{al}} := \left\{ (2,j): j\geq 1, \widehat{\beta}_{2,j}^{\text{al}} \neq 0\right\}$, and $ \mathcal{A} :=  \mathcal{A}_{k} \cup \mathcal{A}_{\sigma}$ with $ \mathcal{A}_{k} := \left\{ (1,i): i\geq 1,  \beta_{1,i}^{o*} \neq 0\right\}$ and $\mathcal{A}_{\sigma} := \left\{ (2,j): j\geq 1, \beta_{2,j}^{o*} \neq 0\right\}$. 

\begin{theorem}
\label{thm:ALMLE_nece_cond}
Under the assumptions in Theorem~\ref{thm:MLE_asym}, if there is no $\widehat{\bm{\beta}}^{\text{al}}(\lambda_{k, T}, \lambda_{\sigma, T})$ with $\lambda_{k, T}, \lambda_{\sigma, T}\in O(T^{\frac{1}{2}})$ such that
\begin{equation}
		\det\left( \frac{\partial^2 \mathcal{L}(\bm{\beta}) }{\partial [\bm{\beta}_{\mathcal{A}_k}', \bm{\beta}_{\mathcal{A}_{\sigma}}']' \partial [\bm{\beta}_{\mathcal{A}_k}', \bm{\beta}_{\mathcal{A}_{\sigma}}'] } \middle|_{\bm{\beta} = \widehat{\bm{\beta}}^{\text{al}}(\lambda_{k, T}, \lambda_{\sigma, T})} \right) 
		\neq 0,
 \label{eq:Thm3_DETcondition}
\end{equation}
then $\lim_{T\rightarrow\infty} P\left\{ \mathcal{A}_T^{\text{al}} = \mathcal{A}\right\} \neq 1$, where $\text{det}(\cdot)$ is the matrix determinant operator; $w_{k,i}$ and $w_{\sigma,j}$ are set using the MLE in section~\ref{sec:MLE_Inference} such that $\sqrt{T}(\frac{1}{w_{k,i}} - \beta_{1,i}^{*o})=O_p(1) $ and $\sqrt{T}(\frac{1}{w_{\sigma,j}} - \beta_{2,j}^{*o})=O_p(1) $, for $i=1,\ldots, p$, $j=1,\ldots, p+1$.
\end{theorem}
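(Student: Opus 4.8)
The plan is to prove the contrapositive: assuming selection consistency $\lim_{T\to\infty} P\{\mathcal{A}_T^{\text{al}} = \mathcal{A}\} = 1$, I would exhibit a regularization-path solution $\widehat{\bm{\beta}}^{\text{al}}(\lambda_{k,T}, \lambda_{\sigma,T})$ with $\lambda_{k,T}, \lambda_{\sigma,T} \in O(T^{1/2})$ whose restricted Hessian in \eqref{eq:Thm3_DETcondition} is nonsingular. The backbone of the argument is the Karush--Kuhn--Tucker (KKT) characterization of the penalized optimum in \eqref{eq:ML_adaL1_estimator}, combined with a second-order local-uniqueness analysis on the true active set, so that nonsingularity of the restricted Hessian emerges as exactly the condition guaranteeing that the active-set estimating equations pin down a single, well-separated minimizer.

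First I would work on the high-probability event $\{\mathcal{A}_T^{\text{al}} = \mathcal{A}\}$, on which $\widehat{\bm{\beta}}^{\text{al}}$ is supported exactly on $\mathcal{A}$ and its active block $\widehat{\bm{\beta}}^{\text{al}}_{\mathcal{A}}$ minimizes the restricted penalized objective with the zero coordinates held fixed. Invoking Assumption~\ref{assump:narrow_Theta} to keep the optimizer interior (in particular away from the $k_t = 0.5$ boundary flagged above, so the log-likelihood is twice differentiable there), the active coordinates satisfy the stationarity condition
\[
-\nabla_{\mathcal{A}}\mathcal{L}(\widehat{\bm{\beta}}^{\text{al}}) + \lambda_{k,T}\,\bm{s}_k + \lambda_{\sigma,T}\,\bm{s}_{\sigma} = \bm{0},
\]
where $\bm{s}_k$ and $\bm{s}_{\sigma}$ collect the weighted signs $w_{k,i}\,\mathrm{sign}(\widehat{\beta}^{\text{al}}_{1,i})$ and $w_{\sigma,j}\,\mathrm{sign}(\widehat{\beta}^{\text{al}}_{2,j})$ over the active indices. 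Because the $L_1$ penalty is linear on the smooth region away from zero, its second derivative vanishes there, so the second-order necessary condition for a local minimizer reduces to (positive) definiteness of the restricted Hessian of $-\mathcal{L}$; isolation of the minimizer, which consistent recovery of a single limit point requires, upgrades semidefiniteness to definiteness, i.e.\ exactly $\det(\partial^2\mathcal{L}/\partial\bm{\beta}_{\mathcal{A}}\partial\bm{\beta}_{\mathcal{A}}') \neq 0$ as in \eqref{eq:Thm3_DETcondition}.

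Next I would fix the admissible rate of the tuning parameters. Using the weight calibration $\sqrt{T}(1/w_{k,i} - \beta^{*o}_{1,i}) = O_p(1)$ and $\sqrt{T}(1/w_{\sigma,j} - \beta^{*o}_{2,j}) = O_p(1)$, the active weights are $O_p(1)$ (since the corresponding $\beta^{o*}\neq 0$, stationary or standardized unit-root alike) while inactive weights diverge at rate $T^{1/2}$. A first-order Taylor expansion of the restricted score about $\bm{\beta}^{o*}$ gives $\nabla_{\mathcal{A}}\mathcal{L}(\widehat{\bm{\beta}}^{\text{al}}) = \nabla_{\mathcal{A}}\mathcal{L}(\bm{\beta}^{o*}) + H_{\mathcal{A}}(\bar{\bm{\beta}})(\widehat{\bm{\beta}}^{\text{al}}_{\mathcal{A}} - \bm{\beta}^{o*}_{\mathcal{A}})$, with $\nabla_{\mathcal{A}}\mathcal{L}(\bm{\beta}^{o*}) = O_p(T^{1/2})$ by Assumption~\ref{assump:limiting_score_Z} and $H_{\mathcal{A}} = O_p(T)$ by Assumption~\ref{assump:weakconverge_HL}, while Theorem~\ref{thm:MLE_asym} delivers $\widehat{\bm{\beta}}^{\text{al}}_{\mathcal{A}} - \bm{\beta}^{o*}_{\mathcal{A}} = O_p(T^{-1/2})$ in the standardized parametrization. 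Matching the KKT balance then forces the active penalty $\lambda_{k,T}\bm{s}_k + \lambda_{\sigma,T}\bm{s}_{\sigma} = O_p(T^{1/2})$, hence $\lambda_{k,T}, \lambda_{\sigma,T} \in O(T^{1/2})$, the same rate making the inactive penalty $\lambda_T w_{\text{inactive}} = O(T)$ dominate the inactive score and enforce the zero coordinates. Solving the expanded stationarity equation, $\widehat{\bm{\beta}}^{\text{al}}_{\mathcal{A}} - \bm{\beta}^{o*}_{\mathcal{A}} = -H_{\mathcal{A}}^{-1}[\nabla_{\mathcal{A}}\mathcal{L}(\bm{\beta}^{o*}) - \lambda_{k,T}\bm{s}_k - \lambda_{\sigma,T}\bm{s}_{\sigma}]$, requires $H_{\mathcal{A}}$ to be invertible; were the restricted Hessian singular for every solution at this rate, the linearized system would admit a flat direction $\bm{v}$ with $H_{\mathcal{A}}\bm{v} = \bm{0}$ along which the objective is locally constant to second order, the minimizer would fail to be isolated, and the estimated support and signs could perturb off $\mathcal{A}$ with non-vanishing probability, contradicting $P\{\mathcal{A}_T^{\text{al}} = \mathcal{A}\} \to 1$. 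This establishes the contrapositive and hence the theorem.

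The main obstacle I anticipate is controlling the mixed-rate, random limit of the restricted Hessian. After the $\sqrt{T}$ standardization of the unit-root predictors, $H_{\mathcal{A}}/T$ has a stationary block converging to a deterministic information matrix but a local-unit-root block converging in distribution to the stochastic-integral limit of Assumption~\ref{assump:weakconverge_HL}, the two coupled through cross terms. Showing that nonsingularity of this limiting random matrix is genuinely necessary -- and, conversely, that its degeneracy obstructs consistent recovery -- requires the weak-convergence-to-stochastic-integrals machinery underlying Assumptions~\ref{assump:limiting_Zt}--\ref{assump:weakconverge_HL} together with the almost-sure positive-definiteness in Assumptions~\ref{assump:narrow_Theta} and \ref{assump:weakconverge_HL}, and a careful verification that the optimum remains interior so that the smooth KKT and second-order analysis is valid throughout.
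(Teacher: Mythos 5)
Your contrapositive strategy is logically equivalent to the implication being proved, but the pivotal step does not hold up. You derive the determinant condition \eqref{eq:Thm3_DETcondition} from selection consistency by arguing that, on the event $\{\mathcal{A}_T^{\text{al}} = \mathcal{A}\}$, the second-order necessary condition gives positive semidefiniteness of the restricted Hessian and that ``isolation of the minimizer'' then upgrades this to definiteness. That inference is invalid: an isolated, even unique and strict, minimizer can sit at a point where the Hessian is singular (the map $x \mapsto x^4$ at the origin is the standard counterexample), and consistency of the estimator sequence as $T \to \infty$ says nothing about local nondegeneracy of the finite-$T$ penalized objective. So your second-order analysis only yields semidefiniteness, and $\det \neq 0$ does not follow. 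The closing ``flat direction'' paragraph, which is meant to show that singularity of the restricted Hessian at every $O(T^{1/2})$-rate solution obstructs consistency, asserts that ``the estimated support and signs could perturb off $\mathcal{A}$ with non-vanishing probability'' --- but that is exactly the conclusion to be proven, and no mechanism for it is supplied.

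The paper's proof supplies precisely that missing mechanism, and it is the substance of the theorem. It tracks the KKT conditions down the tuning-parameter path and identifies the failure of \eqref{eq:Thm3_DETcondition} with the existence of a truly active coordinate $(1,i_a)$ or $(2,j_a)$ that remains stuck at zero (its KKT inequality still holds) for every $\lambda_{k,T},\lambda_{\sigma,T} \in O(T^{1/2})$, even though the adaptive weights make the active penalty thresholds $O_p(T^{1/2})$ and the inactive ones $O_p(T)$. The omitted active predictor then contaminates, via a mean-value expansion of the score, the partial derivatives $\abs{\partial \mathcal{L}/\partial \widehat{\beta}_{1,i}}$ of correlated truly inactive predictors, inflating them to $O_p(T)$ --- the same order as their thresholds $\lambda_{k,T}\,w_{k,i}$ --- so those inactive predictors violate their KKT inequalities and enter the model with probability bounded away from zero, whence $P\{\mathcal{A}_T^{\text{al}} = \mathcal{A}\} \not\to 1$. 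Your proposal contains neither the link from a degenerate restricted Hessian to a missing active predictor nor the resulting false selection of correlated inactive predictors, so the argument as written does not close.
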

\begin{proof}
See Appendix~\ref{append:ALMLE}.
\end{proof}
Theorem~\ref{thm:ALMLE_nece_cond} shows that if not all the truly active predictors are able to enter the regression model with $\lambda_{k, T}, \lambda_{\sigma, T}\in O(T^{\frac{1}{2}})$, then truly inactive predictors start to be selected for compensating for the missing ones since $\lambda_{k, T}, \lambda_{\sigma, T}\in O(T^{\frac{1}{2}})$ and thereby fail $ \widehat{\bm{\beta}}^{\text{al}}$ in the variable selection. The necessary condition in Theorem~\ref{thm:ALMLE_nece_cond} tends to be broken when the underlying  $\{k_t(\bm{\beta}^{o*})\}$ involves local unit-root predictors.
To solve this issue we propose a two-step ALMLE and prove its oracle property. 

\subsection{Two-Step ALMLE}
\label{sec:tuning_paras}
From the previous discussion, we know that ALMLE can be improved if we ensure the estimation to stay away from $\{k_t(\cdot)=0.5\}$ for every $\lambda_{k, T}$. Therefore, we propose a two-step ALMLE, denoted as $\widehat{\bm{\beta}}^{\text{tal}}$, to avoid the local minimizer issue of $\widehat{\bm{\beta}}^{\text{al}}$ by selecting predictors for the shape at the first step and running the ALMLE in \eqref{eq:ML_adaL1_estimator} at the second step with the selected $\widehat{\lambda}_{k, T}$ in the first step. Specifically, the two-step ALMLE $\widehat{\bm{\beta}}^{\text{tal}}$ is obtained using the following procedure:
\begin{description}
\item{\textbf{Step 1}}: Select the optimal tuning parameter $\widehat{\lambda}_{k, T}\in S_{\lambda_{k,T}}$ using an IC as follows,
    \begin{align*}
    \text{AIC:}\qquad
    & 
    \widehat{\lambda}_{k,T} = \argmin{\lambda_{k,T}\in S_{\lambda_{k,T}}}\, -2\log(\mathcal{L}(\widehat{\bm{\beta}}^{\text{k,al}}(\lambda_{k, T}); \{ y_t \}, \{ \bm{z}_t^* \})) + 2\,\sum_{i=1,\ldots,p} \mathbbm{1}\{\widehat{\beta}^{\text{k,al}}_{1,i}\neq 0\}
    \\
    \text{HQ:}\qquad
    &
    \widehat{\lambda}_{k,T} = \argmin{\lambda_{k,T}\in S_{\lambda_{k,T}}}\, -2\log(\mathcal{L}(\widehat{\bm{\beta}}^{\text{k,al}}(\lambda_{k, T}); \{ y_t \}, \{ \bm{z}_t^* \})) + 2\,\log(\log(T)) \sum_{i=1,\ldots,p} \mathbbm{1}\{\widehat{\beta}^{\text{k,al}}_{1,i}\neq 0\} 
    \\
    \text{BIC:}\qquad
    &
    \widehat{\lambda}_{k,T} = \argmin{\lambda_{k,T}\in S_{\lambda_{k,T}}}\, -2\log(\mathcal{L}(\widehat{\bm{\beta}}^{\text{k,al}}(\lambda_{k, T}); \{ y_t \}, \{ \bm{z}_t^* \})) + \log(T) \sum_{i=1,\ldots,p} \mathbbm{1}\{\widehat{\beta}^{\text{k,al}}_{1,i}\neq 0\} \,,
    \end{align*}
where $\widehat{\bm{\beta}}^{\text{k,al}}(\lambda_{k, T}):= [\widehat{\beta}^{\text{k,al}}_{1,0},\ldots,\widehat{\beta}^{\text{k,al}}_{1,p},\widehat{\beta}^{\text{k,al}}_{2,0}, 0,\ldots,0]'$ restricts $\widehat{\beta}^{\text{k,al}}_{2,1},\dots,\widehat{\beta}^{\text{k,al}}_{2,p+1}$ to zero and define
\begin{equation}
\label{eq:betahat_S1_TALMLE}
    \left[\widehat{\beta}^{\text{k,al}}_{1,0},\ldots,\widehat{\beta}^{\text{k,al}}_{1,p}, \widehat{\beta}^{\text{k,al}}_{2,0}\right]
     =
    \argmin{\beta_{1,0},\ldots,\beta_{1,p},\beta_{2,0}} - \mathcal{L}(\bm{\beta}; \{ y_t \}, \{ \bm{z}_t^* \}) +  \lambda_{k, T}\sum_{i=1}^{p} \widetilde{w}_{k,i} |\beta_{1,i}|.
\end{equation}

\item{\textbf{Step 2}}: Select the optimal tuning parameter $\widehat{\lambda}_{\sigma, T} \in S_{\lambda_{\sigma,T}}$ using the IC and $\widehat{\lambda}_{k,T}$ from Step 1 as follows,
    \begin{align*}
    \text{AIC:}\qquad
    & 
    \resizebox{.9\hsize}{!}{$
    \widehat{\lambda}_{\sigma,T} = \argmin{ \lambda_{\sigma,T}\in S_{\lambda_{\sigma,T}}}\, -2\log(\mathcal{L}(\widehat{\bm{\beta}}^{\text{tal}}( \lambda_{\sigma, T}); \{ y_t \}, \{ \bm{z}_t^* \})) + 2\left( \sum_{i=1,\ldots,p} \mathbbm{1}\{\widehat{\beta}^{\text{tal}}_{1,i}\neq 0\} + \sum_{j=1,\ldots,(p+1)} \mathbbm{1}\{\widehat{\beta}^{\text{tal}}_{2,j}\neq 0\} \right)
    $}
    \\
    \text{HQ:}\qquad
    &
    \resizebox{.9\hsize}{!}{$
    \widehat{\lambda}_{\sigma,T} = \argmin{ \lambda_{\sigma,T}\in S_{\lambda_{\sigma,T}}}\, -2\log(\mathcal{L}(\widehat{\bm{\beta}}^{\text{tal}}(\lambda_{\sigma, T}); \{ y_t \}, \{ \bm{z}_t^* \})) + 2\log(\log(T))\left( \sum_{i=1,\ldots,p} \mathbbm{1}\{\widehat{\beta}^{\text{tal}}_{1,i}\neq 0\} + \sum_{j=1,\ldots,(p+1)} \mathbbm{1}\{\widehat{\beta}^{\text{tal}}_{2,j}\neq 0\} \right)
    $}
    \\
    \text{BIC:}\qquad
    &
    \resizebox{.9\hsize}{!}{$
    \widehat{\lambda}_{\sigma,T} = \argmin{ \lambda_{\sigma,T}\in S_{\lambda_{\sigma,T}}}\, -2\log(\mathcal{L}(\widehat{\bm{\beta}}^{\text{al}}(\lambda_{\sigma, T}); \{ y_t \}, \{ \bm{z}_t^* \})) + \log(T)\left( \sum_{i=1,\ldots,p} \mathbbm{1}\{\widehat{\beta}^{\text{tal}}_{1,i}\neq 0\} + \sum_{j=1,\ldots,(p+1)} \mathbbm{1}\{\widehat{\beta}^{\text{tal}}_{2,j}\neq 0\} \right)\,,
    $}
    \end{align*}
where $\widehat{\bm{\beta}}^{\text{tal}}(\lambda_{\sigma, T}):= [\widehat{\boldsymbol{\beta}}^{\text{tal}'}_{1\cdot},\widehat{\boldsymbol{\beta}}^{\text{tal}'}_{2\cdot}]'=[\widehat{\beta}^{\text{tal}}_{1,0},\ldots,\widehat{\beta}^{\text{tal}}_{1,p},\widehat{\beta}^{\text{tal}}_{2,0},\ldots,\widehat{\beta}^{\text{tal}}_{2,(p+1)}]'$, with $\widehat{\beta}^{\text{tal}}_{1,i}=\widehat{\beta}^{\text{al}}_{1,i}=0, \, \forall (1,i) \not\in \mathcal{A}_T^{k,al}$ and
\begin{equation}
\label{eq:betahat_TALMLE}
\resizebox{.95\hsize}{!}{$
\begin{aligned}
    \left[\left[\widehat{\beta}^{\text{tal}}_{1,i}\right]_{(1,i) \in \{(1,0)\}\cup \mathcal{A}_T^{k,al}}, \widehat{\boldsymbol{\beta}}^{\text{tal}'}_{2\cdot} \right] 
     & = 
    \argmin{\left\{\beta_{1,i}|(1,i) \in \{(1,0)\}\cup \mathcal{A}_T^{k,al}\right\}, \boldsymbol{\beta}_{2\cdot}} - \mathcal{L}(\bm{\beta}; \{ y_t \}, \{ \bm{z}_t^* \})  + \widehat{\lambda}_{k, T} \sum_{i=1}^{p} \widetilde{w}_{k,i}|\beta_{1,i}| + \lambda_{\sigma, T}\sum_{j=1}^{p+1} \widetilde{w}_{\sigma,j} |\beta_{2,j}|,
\end{aligned}
$}
\end{equation}
where $\mathcal{A}_T^{k,al} := \left\{(1,i):\,i\geq 1, \widehat{\beta}_{1,i}^{k,al} \neq 0\right\}$.
\end{description}
 
The final two-step ALMLE $\widehat{\bm{\beta}}^{tal}$ is obtained using the optimal tuning parameters $\widehat{\lambda}_{k,T}$ and $\widehat{\lambda}_{\sigma,T}$.

We use two MLEs to set up $\widetilde{w}_{k,i}$ and $\widetilde{w}_{\sigma,j}$ as the two-step ALMLE involves two different likelihood functions in each step. Specifically, we set
\begin{equation}
\label{eq:adaptive_weights_twostep}
\left\{
    \begin{aligned}
    \widetilde{w}_{k,i}
    &= 
    \frac{1}{\widehat{\beta}^{\text{k,mle}}_{1,i}}\,\frac{1}{\widehat{\beta}^{\text{mle}}_{1,i}},\; i=1,\ldots,p,
    \\
    \widetilde{w}_{\sigma,j}
    &=
    \frac{1}{\widehat{\beta}^{\text{mle}}_{2,j}} ,\; j=1,\ldots,p+1.
    \end{aligned}
    \right.
\end{equation}
where $\widehat{\bm{\beta}}^{\text{mle}} := [\widehat{\beta}^{\text{mle}}_{1,0}, \ldots,\widehat{\beta}^{\text{mle}}_{1,p},\widehat{\beta}^{\text{mle}}_{2,0}, \ldots,\widehat{\beta}^{\text{mle}}_{2,p+1}]$ is the full-model MLE~\eqref{eq:ML_estimator} and  $ \widehat{\bm{\beta}}^{\text{k,mle}} := [\widehat{\beta}^{\text{k,mle}}_{1,0}, \ldots,\widehat{\beta}^{\text{k,mle}}_{1,p},\widehat{\beta}^{\text{k,mle}}_{2,0},$ $ 0,\ldots,0]$ is the partial-model MLE defined below
\begin{equation}
    \widehat{\bm{\beta}}^{\text{k,mle}} 
            = \argmin{\{\bm{\beta}\in \Theta | \beta_{2,j} = 0, j=1,\ldots,p+1 \}} 
    - \mathcal{L}(\bm{\beta}; \{ y_t \}, \{ \bm{z}_t^* \}).
    \label{eq:betahat_KMLE}
\end{equation}
In this way, we choose $\widetilde{w}_{k,i}$ and $\widetilde{w}_{\sigma,j}$ such that truly active predictors are ensured to be selected efficiently with $S_{\lambda_{k,T}}$ and $ S_{\lambda_{\sigma,T}}$ before the truly inactive ones in both Step 1 and Step 2. Therefore, we achieve the oracle property of $\widehat{\bm{\beta}}^{tal}$ as shown in Theorem~\ref{thm:ALMLE_tal_oracle}.

\begin{assumption}{L1}
\label{assump:NEW_tuning}
There exist $\lambda_{k,T} = O(T^{\frac{1}{2} - \gamma_1})$ and  $\lambda_{\sigma,T} =  O(T^{\frac{1}{2} - \gamma_2})$ with $0<\gamma_1<\frac{1}{2}$ and $0<\gamma_2<\frac{1}{2}$.
\end{assumption}
\begin{assumption}{L2}
\label{assump:kmle_tal}
We assume that there exists $\bm{\beta}^{\text{k,o}}:=[\beta^{\text{k,o}}_{1,0},\beta^{\text{k,o}}_{1,1},\ldots,\beta^{\text{k,o}}_{1,p},\beta^{\text{k,o}}_{2,0},0,\ldots,0 ]'\in \{\bm{\beta}\in \mathbb{R}^{2p+3} | \beta_{2,j} = 0, j=1,\ldots,p+1 \}$ such that for any $\epsilon>0$
\begin{equation}
   \lim_{T\rightarrow\infty} P\left\{ \left| \widehat{\beta}^{\text{k,mle}}_{1,i} - \beta^{\text{k,o}}_{1,i}  \right| > \epsilon \right\} = 0,\quad i=1,\ldots,p;
\end{equation}
and $\beta^{\text{k,o}}_{1,i}\neq 0$ for any $(1,i)\in\mathcal{A}_k$.
\end{assumption}

\begin{theorem}[Oracle Property of $\widehat{\bm{\beta}}^{tal}$]
\label{thm:ALMLE_tal_oracle}~\\
Under Assumptions~\ref{assump:NEW_tuning}, ~\ref{assump:kmle_tal} and the assumptions in Theorem~\ref{thm:MLE_asym}, we have that
\\
(a) Model selection consistency:
\\
\begin{equation}
    \lim_{T\rightarrow\infty}P\left\{ \mathcal{A}_{T}^{tal} = \mathcal{A}\right\} = 1,
\end{equation}
where $\mathcal{A}_{T}^{tal}:= \mathcal{A}_{k,T}^{tal}\cup\mathcal{A}_{\sigma,T}^{tal}$ with $\mathcal{A}_{k,T}^{tal}:=\left\{ (1,i): \widehat{\beta}_{1,i}^{tal} \neq 0, i=1,\ldots, p.\right\} $ and $\mathcal{A}_{\sigma,T}^{tal}:=\{ (2,j): \widehat{\beta}_{2,j}^{tal} \neq 0,$ $ j=1,\ldots,p+1.\} $.
\\
(a) Limiting distribution of $\widehat{\bm{\beta}}^{tal}$:
\\
\begin{equation}
    \begin{aligned}
    &
    \sqrt{T}\left(\widehat{\bm{\beta}}^{tal}_{\mathcal{A}} - \bm{\beta}^{o*}_{\mathcal{A}}\right) \overset{D}{\sim} \Omega^{-1}_{H_\mathcal{A}} \; S_{\psi_{\mathcal{A}}},
    \\
    &
    \sqrt{T}\left(\widehat{\bm{\beta}}^{tal}_{\mathcal{A}^c} - \bm{\beta}^{o*}_{\mathcal{A}^c}\right)\rightarrow 0,
    \end{aligned}
\end{equation}
as $T\rightarrow\infty$, where $S_{\psi_\mathcal{A}}$ and  $\Omega_{H_\mathcal{A}}$ are defined in Assumption~\ref{assump:limiting_score_Z} and \ref{assump:weakconverge_HL} under the model specification with only the truly active predictors involved and ordered according to $\mathcal{A}$.
\end{theorem}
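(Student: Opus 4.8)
The plan is to establish the oracle property by showing that, with probability tending to one, the two-step estimator $\widehat{\bm{\beta}}^{tal}$ coincides with the infeasible ``oracle'' estimator that knows the true active set $\mathcal{A}$ and maximises the likelihood only over $\bm{\beta}_{\mathcal{A}}$. Once this reduction is in place, the limiting distribution follows by applying the expansion behind Theorem~\ref{thm:MLE_asym} to the active submodel. The two parts are therefore not independent: the decisive statement is the selection consistency of part~(a), and the limiting distribution of part~(b) is a corollary obtained by checking that the penalty terms are asymptotically negligible on the active coordinates.

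For part~(a) I would treat the two steps in sequence. In Step~1 the optimisation \eqref{eq:betahat_S1_TALMLE} is an adaptive-LASSO problem for the shape model alone, so I would verify an oracle property for it directly. The decisive ingredient is the behaviour of the composite weights $\widetilde{w}_{k,i}=(\widehat{\beta}^{\text{k,mle}}_{1,i}\,\widehat{\beta}^{\text{mle}}_{1,i})^{-1}$: using consistency of the full MLE (Theorem~\ref{thm:MLE_consistency}) together with the $\sqrt{T}$ rate of Theorem~\ref{thm:MLE_asym} and the partial-MLE consistency in Assumption~\ref{assump:kmle_tal}, the weights attached to truly active shape predictors stay $O_p(1)$ while those attached to inactive ones diverge. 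Because the standardisation $\bm{z}^*_t$ puts all coefficients on an $O(1)$ scale and all estimators on a uniform $\sqrt{T}$ rate, the penalty on an active coordinate rescaled by $\sqrt{T}$ is $\widehat{\lambda}_{k,T}\widetilde{w}_{k,i}/\sqrt{T}=O_p(T^{-\gamma_1})\to 0$ under Assumption~\ref{assump:NEW_tuning}, whereas on an inactive coordinate the divergent weight times $\lambda_{k,T}$ dominates the local score. A subgradient (KKT) argument applied to \eqref{eq:betahat_S1_TALMLE} then forces $\widehat{\beta}^{\text{k,al}}_{1,i}=0$ exactly for inactive $i$ and leaves the active estimates interior, giving $P\{\mathcal{A}^{k,al}_T=\mathcal{A}_k\}\to 1$ and, crucially, keeping the optimiser bounded away from the boundary $k_t=0.5$ that defeated the one-step estimator in Theorem~\ref{thm:ALMLE_nece_cond}.

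Conditioning on the event $\{\mathcal{A}^{k,al}_T=\mathcal{A}_k\}$, Step~2 solves \eqref{eq:betahat_TALMLE} over the correctly selected shape coordinates and all scale coordinates, with $\widehat{\lambda}_{k,T}$ now frozen. I would repeat the subgradient analysis for the scale block: the weights $\widetilde{w}_{\sigma,j}=1/\widehat{\beta}^{\text{mle}}_{2,j}$ are $O_p(1)$ on $\mathcal{A}_\sigma$ and diverge off it, and under Assumption~\ref{assump:NEW_tuning} the rate $\lambda_{\sigma,T}=O(T^{1/2-\gamma_2})$ delivers $\lambda_{\sigma,T}\widetilde{w}_{\sigma,j}/\sqrt{T}\to 0$ on active scale coordinates and divergence off them. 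The frozen shape penalty contributes a term of order $\widehat{\lambda}_{k,T}/\sqrt{T}=O(T^{-\gamma_1})$ that is negligible relative to the active shape signal, so it neither removes a truly active shape predictor nor perturbs the scale selection; this yields $P\{\mathcal{A}^{tal}_T=\mathcal{A}\}\to 1$, which is part~(a). Here I must handle the autoregressive coefficient $\beta_{2,p+1}$ with $|\beta_{2,p+1}|<1$ separately, since it is always active and stationary: its presence affects only the Hessian block, not the selection logic.

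Finally, for part~(b), on the selection-consistency event the problem restricts to a smooth penalised maximisation over $\bm{\beta}_{\mathcal{A}}$. I would expand the log-likelihood to second order about $\bm{\beta}^{o*}_{\mathcal{A}}$, invoking Assumption~\ref{assump:limiting_score_Z} for the normalised score $T^{-1/2}\sum_t\bm{\psi}_t\overset{D}{\sim}S_{\psi_\mathcal{A}}$ and Assumption~\ref{assump:weakconverge_HL} for the normalised Hessian $T^{-1}H_{\mathcal{L}}\overset{D}{\sim}\Omega_{H_\mathcal{A}}$ on the active submodel. Because the penalty gradients on the active coordinates are $O_p(T^{-\min(\gamma_1,\gamma_2)})$ after the $\sqrt{T}$ rescaling, they drop out of the first-order condition, and a continuous-mapping/argmax step gives $\sqrt{T}(\widehat{\bm{\beta}}^{tal}_{\mathcal{A}}-\bm{\beta}^{o*}_{\mathcal{A}})\overset{D}{\sim}\Omega^{-1}_{H_\mathcal{A}}S_{\psi_\mathcal{A}}$, while the inactive block is identically zero so $\sqrt{T}(\widehat{\bm{\beta}}^{tal}_{\mathcal{A}^c}-\bm{\beta}^{o*}_{\mathcal{A}^c})=0\to 0$. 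The main obstacle I anticipate is not the algebra of the expansion but the uniform control of the composite weights and of the Hessian in the \emph{mixed-root} regime: I must show that the stochastic-integral limits of Assumptions~\ref{assump:limiting_Zt}, \ref{assump:limiting_score_Z} and~\ref{assump:weakconverge_HL} survive restriction to the active submodel with $\Omega_{H_\mathcal{A}}$ still invertible, and I must rule out that the data-dependent $\widehat{\lambda}_{k,T}$ carried over from Step~1 interacts with the unit-root components in a way that spoils the negligibility of the shape penalty in Step~2.
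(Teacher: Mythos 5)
Your overall architecture (local $\sqrt{T}$-reparametrization, penalty asymptotics that vanish on active coordinates and diverge on inactive ones, then a Taylor expansion on the selection event to read off the limit law from Assumptions~\ref{assump:limiting_score_Z} and \ref{assump:weakconverge_HL}) is the same as the paper's, which runs the Knight--Fu style comparison of $V(\bm{\nu})-V(\bm{0})$ rather than explicit KKT conditions; these are interchangeable. However, there is a genuine gap in your treatment of Step~1. You claim that the subgradient argument applied to \eqref{eq:betahat_S1_TALMLE} yields \emph{exact} selection consistency, $P\{\mathcal{A}^{k,al}_T=\mathcal{A}_k\}\to 1$, and you condition all subsequent steps on that event. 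But Step~1 maximizes a \emph{misspecified} restricted likelihood (all scale slopes forced to zero), whose pseudo-true parameter $\bm{\beta}^{k,o}$ from Assumption~\ref{assump:kmle_tal} is only guaranteed to be nonzero on $\mathcal{A}_k$; nothing prevents $\beta^{k,o}_{1,i}\neq 0$ for a shape predictor with $\beta^{o*}_{1,i}=0$ (the omitted scale dynamics can load onto it). For such a coordinate the restricted score at $\beta_{1,i}=0$ is $O_p(T)$, while the penalty is only $\lambda_{k,T}\widetilde{w}_{k,i}=O_p(T^{1-\gamma_1})$ (the weight diverges only through the full-model MLE factor, at rate $T^{1/2}$), so the KKT inequality fails and the coordinate \emph{is} selected. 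This is exactly the intermediate case the paper isolates in its rate computation for the Step~1 penalty, and it is why the paper proves only the one-sided statement $P\{\mathcal{A}^{k,al}_{k,T}\supseteq\mathcal{A}_k\}\to 1$ (no truly active shape predictor is missed) for Step~1.

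The over-selection is then removed in Step~2, not in Step~1: once the objective \eqref{eq:betahat_TALMLE} is recentred at the correct $\bm{\beta}^{o*}$, the penalty increment on a falsely selected shape coordinate becomes the convex term $\frac{\widehat{\lambda}_{k,T}}{\sqrt{T}}\widetilde{w}_{k,i}\,\abs{\nu_{1,i}}=O_p(T^{1/2-\gamma_1})\abs{\nu_{1,i}}\to\infty$, which forces $\widehat{\nu}_{1,i}=0$. Your Step~2 discussion says the frozen shape penalty ``neither removes a truly active shape predictor nor perturbs the scale selection,'' but it never addresses the removal of falsely selected shape predictors carried over from Step~1, because you have assumed there are none. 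To repair the proposal you should (i) weaken the Step~1 conclusion to $\mathcal{A}^{k,al}_T\supseteq\mathcal{A}_k$ with probability tending to one, and (ii) add to Step~2 the rate argument above showing that the same weights, now paired with the absolute-value increment at $\beta^{o*}_{1,i}=0$, annihilate the extra shape coordinates. Parts of your argument that do not depend on this conditioning, in particular the limiting-distribution derivation in part~(b) on the (correctly established) selection event, are sound and match the paper.
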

\begin{proof}
See Appendix~\ref{append:ALMLE}. 
\end{proof}

The superiority of the proposed two-step ALMLE to the ALMLE~\eqref{eq:ML_adaL1_estimator} is not just in the oracle property when local unit-root predictors are included in the regression model but also in the computing cost. The ALMLE~\eqref{eq:ML_adaL1_estimator} is computed over a two-dimensional tuning parameter grid in order to select an optimal pair of $(\lambda_{k,T}, \lambda_{\sigma,T}) \in S_{\lambda_{k,T}}\times S_{\lambda_{\sigma,T}}$, while the two-step ALMLE is computed over two separate one-dimensional tuning parameter grids in order to select the optimal $\lambda_{k,T}\in S_{\lambda_{k,T}}$ first and $\lambda_{\sigma,T}\in S_{\lambda_{\sigma,T}}$ after. 

\section{Simulation study}
\label{sec:simulation}
We assess the finite sample properties of $ \widehat{\bm{\beta}}^{mle}$ and $\widehat{\bm{\beta}}^{tal}$ from the perspectives of their biases, mean square errors (MSEs) and model selection using four data generating processes (DGPs). These four DGPs are designed to reflect the characteristics of the high-frequency financial data used in Section \ref{sec:empirical_study}. First, DGPs are heteroscedastic and the conditional exceedance rates can change over time. Second, DGPs involve predictors which are functions of lagged loss rates characterizing the serial dependence structure in $\{(k_t,\sigma_t)\}$. Third, we consider either stationary or local unit-root predictors or both.

We simulate $\{ l_t\}$ from the following conditional distribution,
 \begin{equation}
    l_t \; =
    \left\{
    \begin{aligned}
    &  F^{-1}_{t\left(\frac{1}{k_t}\right)}(\tau_t) , \qquad \text{if}\quad \tau_t \leq F_{t\left(\frac{1}{k_t}\right)}(u)
				\\
    & u + F_{\text{GPD}(k_t, \sigma_t)}^{-1}\left(\frac{\tau_t - F_{t(\frac{1}{k_t})}(u)}{1-\tau_t}\right), \qquad \text{if}\quad \tau_t > F_{t\left(\frac{1}{k_t}\right)}(u)				,
    \end{aligned}
    \right.
    \label{DGP:threshold_tdist|GPD}
    \end{equation}
where $\{\tau_t\}$ is i.i.d. standard uniform distributed, $F_{t\left(\nu\right)}(\cdot)$  and $F^{-1}_{t\left(\nu\right)}(\cdot)$ denote the distribution and quantile functions of a Student's t distribution with $\nu$ degrees of freedom. The processes of $\{k_t\}$ and $\{ \sigma_t\}$ are specified according to the following specifications:
    \begin{equation}
    \left\{
        \begin{aligned}
        & \log\left(\frac{ k_t }{ 0.5 - k_t } \right)  =	\beta_{10} + \beta_{11}\, \log(|l_{t-1}| + 1 - r_m) + \sum_{j=1}^{14} \beta_{1,1+j}\,z_{j,t-1}\,,
        \\
        & \log(\sigma_t) 	=	\beta_{20} + \beta_{21}\,\log(\sigma_{t-1}) + \beta_{22}\,  \log(|l_{t-1}| + 1 - r_m) + \sum_{j=1}^{14}\beta_{2,2+j}\,z_{j,t-1}\,,
        \\
        & z_{i,t} 
        = \phi_i\,z_{i,t-1} + \epsilon_{i,t-1}, \qquad i = 1,2,\ldots, 14,
        \\
        & \bm{\phi} := [\phi_1, \ldots, \phi_{14}],
        \\
        & \{\bm{\epsilon}_t 
        := [\epsilon_{1,t},\ldots, \epsilon_{14,t}]'\} \overset{i.i.d.}{\sim} \mathcal{N}(0,I_{14\times 14})\,,
        \\
        & \bm{\beta^o_{1\cdot}} 
        := [\beta_{1,0}^o, \beta_{1,1}^o,\ldots, \beta_{1,15}^o]'
         \,,
        \\
        & \bm{\beta^o_{2\cdot}} 
        := [\beta_{2,0}^o, \beta_{2,1}^o,\ldots, \beta_{2,16}^o]'
        \,,
        \\
        & \bm{\beta^o} 
        := [\bm{\beta}^{o'}_{1\cdot}, \bm{\beta}^{o'}_{2\cdot}  ]'\,.
        \end{aligned}
        \right.
     \label{eq:DGP1-4}   
    \end{equation}
We set $u = F^{-1}_{t(3)}(0.8)$ and $ r_m = 0.05$, but use different $\bm{\phi}$ and $\bm{\beta^o}$ to obtain different degrees of serial dependence.

\begin{description}
    \item[\textbf{DGP 1.}] There are five truly active stationary predictors for both $\{k_t\}$ and $\{\sigma_t\}$, namely $\log(|l_{t-1}| + 1 - r_m)$, $z_{1,t-1},\ldots, z_{4,t-1}$. Among truly inactive predictors $z_{5,t-1},\ldots, z_{14,t-1}$, two of them are local unit-root, i.e. $z_{13,t-1}$ and $z_{14,t-1}$, and the others are stationary.

    \begin{equation}
    \left\{
        \begin{aligned}
        & \bm{\phi} = [0,0,0,0,0\ldots,0,1,1],
        \\
        & \bm{\beta^o_{1\cdot}} 
         = [-1, 0.3, -0.4, 0.2, 0.6, 0.6, 0, \ldots,0 ]'\,,
        \\
        & \bm{\beta^o_{2\cdot}} 
         = [-1, 0, 0.7, 0.4, 0.3, 0.5, 0.6, 0, \ldots,0 ]'\,,
        \end{aligned}
        \right.
     \label{eq:DGP3}   
    \end{equation}

\item[\textbf{DGP 2.}] As DGP 1 but with the difference that $\beta_{2,1}^o$ is changed to nonzero, and hence $\log(\sigma_{t-1})$ is now truly active. We set $\beta_{2,1}^o=0.7$ and keep the true values of the other coefficients unchanged.

    \begin{equation}
    \left\{
        \begin{aligned}
        & \bm{\phi} = [0,0,0,0,0\ldots,0,1,1],
        \\
        & \bm{\beta^o_{1\cdot}} 
         = [-1, 0.3, -0.4, 0.2, 0.6, 0.6, 0, \ldots,0 ]'\,,
        \\
        & \bm{\beta^o_{2\cdot}} 
         = [-1, 0.7, 0.7, 0.4, 0.3, 0.5, 0.6, 0, \ldots,0 ]'\,,
        \end{aligned}
        \right.
     \label{eq:DGP4}   
    \end{equation}

\item[\textbf{DGP 3.}] As DGP 1 but with the difference that $\phi_{4} = 1$ and $(\beta_{1,5}^o,\beta_{2,6}^o) = (\frac{0.6}{\sqrt{T}}, \frac{0.6}{\sqrt{T}})$. 

    \begin{equation}
    \left\{
        \begin{aligned}
        & \bm{\phi} = [0,0,0,1,0\ldots,0,1,1],
        \\
        & \bm{\beta^o_{1\cdot}} 
         = [-1, 0.3, -0.4, 0.2, 0.6, \frac{0.6}{\sqrt{T}}, 0, \ldots,0 ]'\,,
        \\
        & \bm{\beta^o_{2\cdot}} 
         = [-1, 0, 0.7, 0.4, 0.3, 0.5, \frac{0.6}{\sqrt{T}}, 0, \ldots,0 ]'\,,
        \end{aligned}
        \right.
     \label{eq:DGP5}   
    \end{equation}

\item[\textbf{DGP 4.}] As DGP 3 but with the difference that $\log(\sigma_{t-1})$ is truly active. We set $\beta_{2,1}^o=0.7$ and and keep the true values of the other coefficients unchanged. 

    \begin{equation}
    \left\{
        \begin{aligned}
        & \bm{\phi} = [0,0,0,1,0\ldots,0,1,1],
        \\
        & \bm{\beta^o_{1\cdot}} 
         = [-1, 0.3, -0.4, 0.2, 0.6, \frac{0.6}{\sqrt{T}}, 0, \ldots,0 ]'\,,
        \\
        & \bm{\beta^o_{2\cdot}} 
         = [-1, 0.7, 0.7, 0.4, 0.3, 0.5, \frac{0.6}{\sqrt{T}}, 0, \ldots,0 ]'\,,
        \end{aligned}
        \right.
     \label{eq:DGP6}   
    \end{equation}

\end{description}

In each simulation, we obtain a sample $\{ l_t\}_{t=1}^T$ of $T$ observations, and extract the excess time series $\{ y_t=\max(l_t - u, 0)\}$ using the true threshold $u$. We standardize the predictors using their empirical standard deviations. We then fit the full model specification \eqref{eq:DGP1-4} to $\{y_t\}$ using  standardized predictors, estimating the model parameter by $\widehat{\bm{\beta}}^{mle}$ and $ \widehat{\bm{\beta}}^{tal}$. Bias and mean squared error (MSE) are then computed as 

\begin{equation}
    \text{Bias} 
    = \frac{1}{\#\bm{\beta}^o}\,\left(
    \sum_{i=1,2;j=0} \abs{\widehat{\beta}_{i,j} - \beta_{i,j}^o }
    +
    \sum_{i=1,2;j\geq 1} s_{i,j}\abs{\widehat{\beta}_{i,j}\, \widehat{s}_{i,j} - \beta_{i,j}^o }
     \right)
    \label{eq:Bias}
\end{equation}
\begin{equation}
    \text{MSE} 
     = \frac{1}{\#\bm{\beta}^o}\,\left(
     \sum_{i=1,2;j=0} \left(\widehat{\beta}_{i,j}- \beta_{i,j}^o \right)^2 
     + 
     \sum_{i=1,2;j\geq 1} s_{i,j}^2\left(\widehat{\beta}_{i,j}\, \widehat{s}_{i,j} - \beta_{i,j}^o \right)^2 
      \right)
    \label{eq:MSE}
\end{equation}
where $\#\bm{\beta}^o$ denotes the number of parameters in $\bm{\beta}^o$, $\widehat{s}_{i,j}$ denotes the empirical standard deviation of the $(i,j)$-th predictor, and $s_{i,j}=1$ for $I(0)$ predictors and $s_{i,j}=\sqrt{T}$ for $I(1)$ predictors. 

{Table~\ref{tab:DGP3-6_estimates_bias_MSE} presents the average absolute bias and average MSE of the coefficient estimates obtained over 100 replications. These results show that $\widehat{\bm{\beta}}^{mle}$ and $\widehat{\bm{\beta}}^{tal}$ have decreasing biases and MSEs when $T$ increases, coherently with the theoretical results presented in Sections \ref{subsec:limiting_mle} and \ref{sec:LASSO}. 
Moreover $\widehat{\bm{\beta}}^{tal}$ under BIC always has the lowest bias and MSE across the DGPs, supporting the use of $\widehat{\bm{\beta}}^{tal}$ with BIC in the empirical section. Boxplots for the bias in Figure \ref{fig:boxplots_bias_DGPs} support these conclusions.

Table~\ref{tab:DGP3-6_variable_selection} presents the variable selection results for both $\widehat{\bm{\beta}}^{tal}$ and $\widehat{\bm{\beta}}^{mle}$. Note that for the latter, we perform variable selection based on the significance of the t-statistics associated to the candidate predictors. To measure the ability to select the correct predictors, we assess the average selection rates of truly active and inactive predictors for both the shape and scale parameters. Moreover, we compute the correct classification rate (CCR) of each estimator, i.e. the proportion of selected truly active and unselected truly inactive predictors on the total candidate predictors. 
Results in Table \ref{tab:DGP3-6_variable_selection} show that variable selection improves as $T$ increases for each estimator. For $\widehat{\bm{\beta}}^{mle}$ the average selection rates of truly inactive stationary predictors approach the significance level $\alpha=0.05$, whereas the average selection rates of truly inactive local unit-root predictors are much higher than $\alpha=0.05$, for both $k$ and $\sigma$, and regardless of the DGP. These results are coherent with the asymptotic results derived in Section \ref{subsec:limiting_mle}, and echo the size distortion concerns of using t-tests to select non-stationary predictors discussed in Section \ref{sec:LASSO}. Remarkably, the average selection rates of truly active predictors for $\widehat{\bm{\beta}}^{mle}$ are much lower than those for $\widehat{\bm{\beta}}^{tal}$. Moreover, we see that the power of t-tests performed with $\widehat{\bm{\beta}}^{mle}_{1\cdot}$ is lower than the one for $\widehat{\bm{\beta}}^{mle}_{2\cdot}$ due to the uncertainty in the estimation of $\widehat{\bm{\beta}}^{mle}_{1\cdot}$. Finally, Table~\ref{tab:DGP3-6_variable_selection} shows that $\widehat{\bm{\beta}}^{tal}$ with BIC always has the highest CCR and produces the most accurate selection regardless the DGP, supporting the use of $\widehat{\bm{\beta}}^{tal}$ with BIC for the empirical application. 


\begin{table}[htbp]
  \centering
  \caption{Average absolute bias \eqref{eq:Bias} and MSE \eqref{eq:MSE} over 100 replications obtained with $\widehat{\bm{\beta}}^{mle}$ and $\widehat{\bm{\beta}}^{tal}$ using the optimal tuning parameters selected by AIC, HQ and BIC criteria.}
    \begin{tabular}{clp{1.3cm}p{1.3cm}p{1.3cm}p{1.3cm}p{1.3cm}p{1.3cm}}
    \toprule
    \toprule
          &       & \multicolumn{3}{c}{Bias} & \multicolumn{3}{c}{MSE} \\
\midrule    
    DGPs  & \diagbox[innerrightsep=25pt]{Estimators}{T} & 25,000 & 50,000 & 100,000 & 25,000 & 50,000 & 100,000 \\
    \midrule
    \multirow{4}[2]{*}{DGP 1} & $\widehat{\bm{\beta}}^{mle}$ & 0.031 & 0.015 & 0.008 & 0.105 & 0.031 & 0.016 \\
          & $\widehat{\bm{\beta}}^{tal}$ + AIC & 0.014 & 0.007 & 0.004 & 0.036 & 0.014 & 0.007 \\
          & $\widehat{\bm{\beta}}^{tal}$ + HQ & 0.010 & 0.006 & 0.004 & 0.030 & 0.012 & 0.006 \\
          & $\widehat{\bm{\beta}}^{tal}$ + BIC & 0.010 & 0.005 & 0.004 & 0.026 & 0.011 & 0.006 \\
    \midrule
    \multirow{4}[2]{*}{DGP 2} & $\widehat{\bm{\beta}}^{mle}$ & 0.027 & 0.017 & 0.010 & 0.091 & 0.031 & 0.014 \\									
          & $\widehat{\bm{\beta}}^{tal}$ + AIC & 0.015 & 0.011 & 0.006 & 0.047 & 0.020 & 0.009 \\									
          & $\widehat{\bm{\beta}}^{tal}$ + HQ & 0.012 & 0.009 & 0.004 & 0.034 & 0.012 & 0.007 \\									
          & $\widehat{\bm{\beta}}^{tal}$ + BIC & 0.009 & 0.007 & 0.003 & 0.030 & 0.011 & 0.005 \\									

    \midrule
    \multirow{4}[2]{*}{DGP 3} & $\widehat{\bm{\beta}}^{mle}$ & 0.047 & 0.023 & 0.009 & 0.197 & 0.051 & 0.023 \\
          & $\widehat{\bm{\beta}}^{tal}$ + AIC & 0.021 & 0.015 & 0.005 & 0.082 & 0.024 & 0.013 \\
          & $\widehat{\bm{\beta}}^{tal}$ + HQ & 0.018 & 0.013 & 0.003 & 0.067 & 0.021 & 0.011 \\
          & $\widehat{\bm{\beta}}^{tal}$ + BIC & 0.017 & 0.012 & 0.003 & 0.060 & 0.020 & 0.010 \\
    \midrule
    \multirow{4}[2]{*}{DGP 4} & $\widehat{\bm{\beta}}^{mle}$ & 0.029 & 0.025 & 0.019 & 0.128 & 0.165 & 0.263 \\
          & $\widehat{\bm{\beta}}^{tal}$ + AIC & 0.018 & 0.017 & 0.006 & 0.071 & 0.099 & 0.013 \\
          & $\widehat{\bm{\beta}}^{tal}$ + HQ & 0.015 & 0.014 & 0.006 & 0.065 & 0.093 & 0.012 \\
          & $\widehat{\bm{\beta}}^{tal}$ + BIC & 0.015 & 0.014 & 0.005 & 0.057 & 0.091 & 0.011 \\
    \bottomrule
    \bottomrule
    \multicolumn{1}{l}{} &       &       &       &       &       &       &  \\
    \end{tabular}%
  \label{tab:DGP3-6_estimates_bias_MSE}%
\end{table}%

\begin{figure}
\begin{subfigure}{.9\textwidth}
  \centering
  \includegraphics[width=.9\linewidth, height=5.8cm]{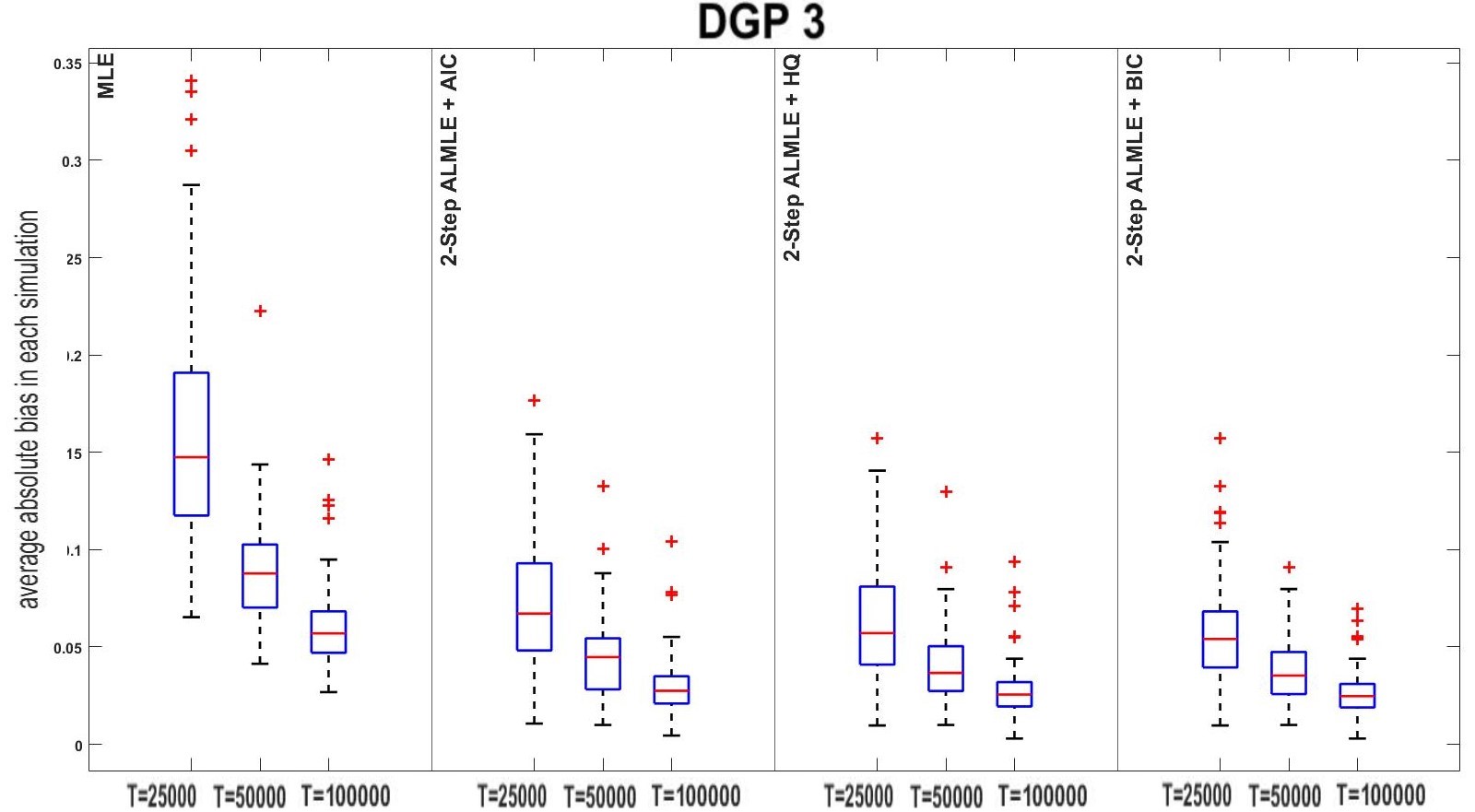}  
  \caption{}
  \label{fig:boxplot_DGP3}
\end{subfigure}
\newline
\begin{subfigure}{.9\textwidth}
  \centering
  \includegraphics[width=.9\linewidth, height=5.8cm]{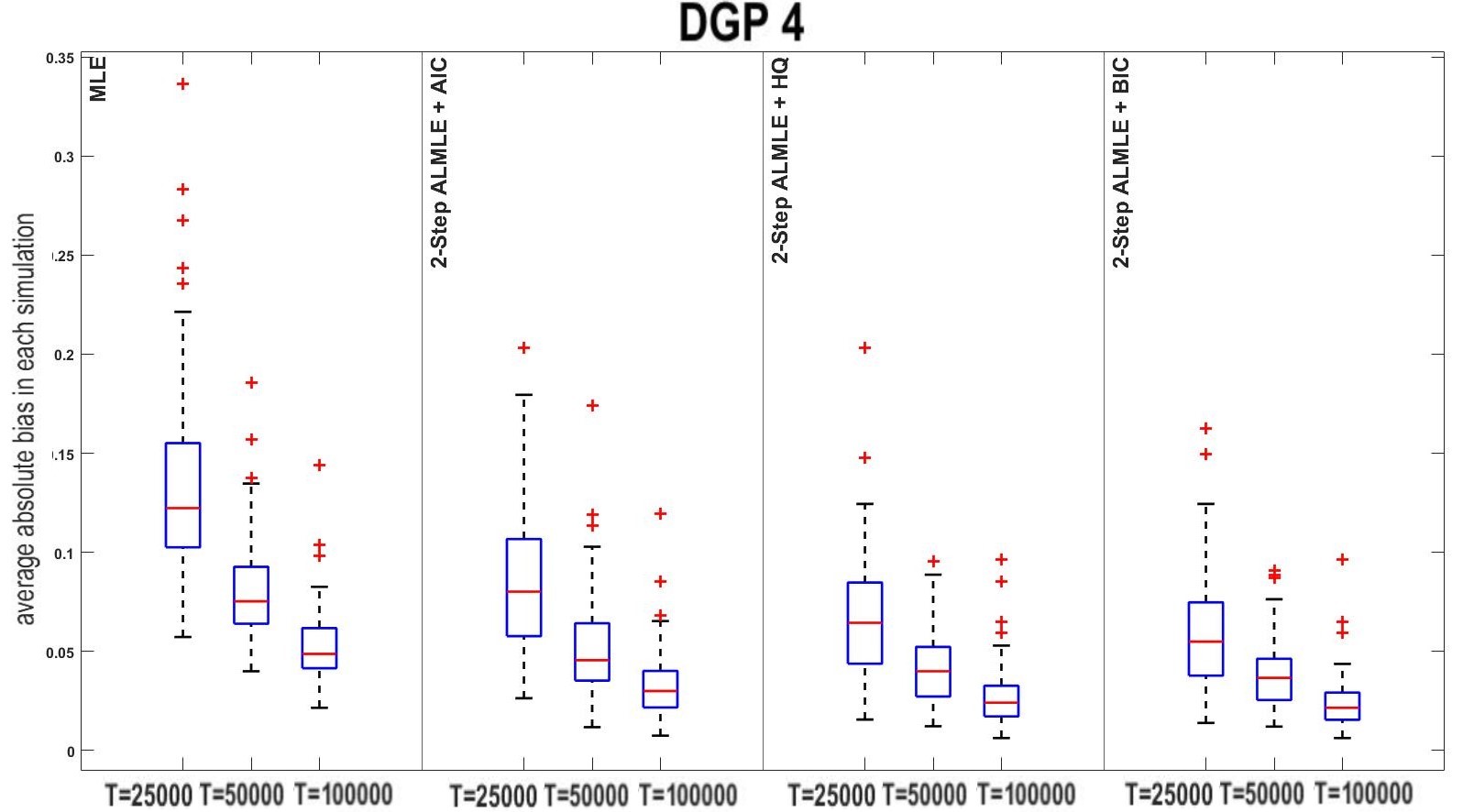}
  \caption{}
  \label{fig:boxplot_DGP4}
\end{subfigure}
\newline
\begin{subfigure}{.9\textwidth}
  \centering
  \includegraphics[width=.9\linewidth, height=5.8cm]{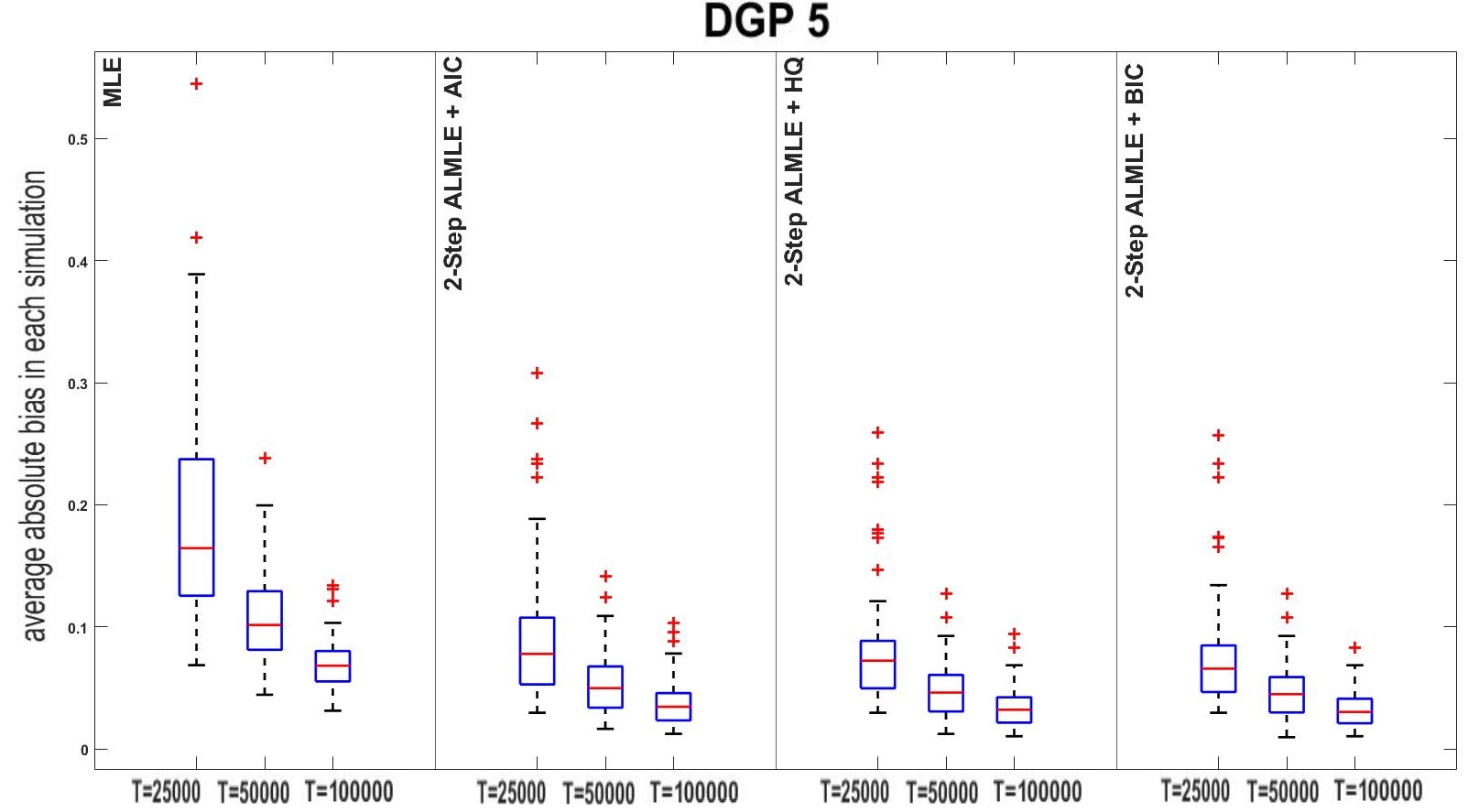}
  \caption{}
  \label{fig:boxplot_DGP5}
\end{subfigure}
\newline
\begin{subfigure}{.9\textwidth}
  \centering
  \includegraphics[width=.9\linewidth, height=5.8cm]{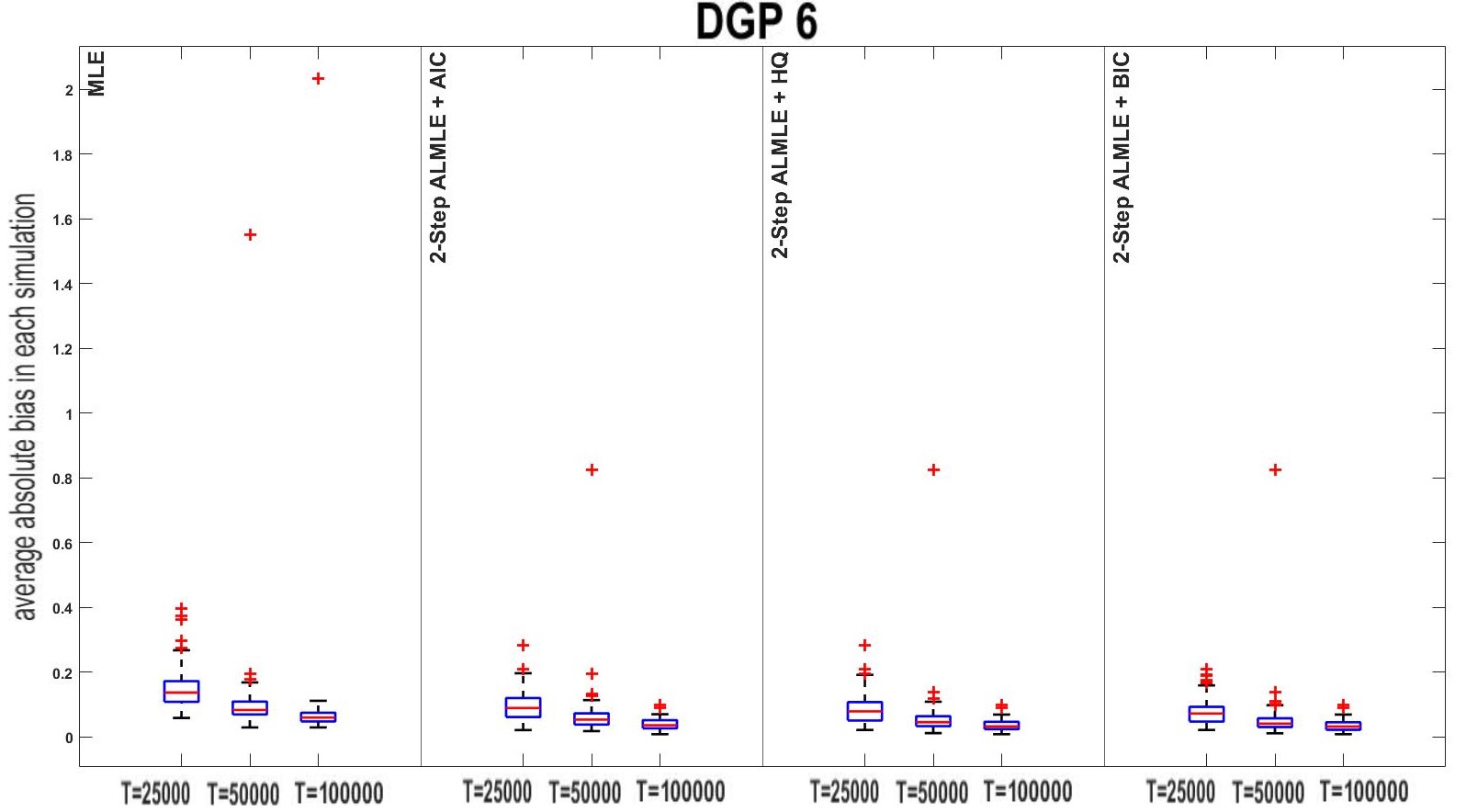}  
  \caption{}
  \label{fig:boxplot_DGP6}
\end{subfigure}
\caption{Boxplots of bias \eqref{eq:Bias} obtained from 100 replications  with $\widehat{\bm{\beta}}^{mle}$ and $\widehat{\bm{\beta}}^{tal}$ using the optimal tuning parameters selected by AIC, HQ and BIC criteria.}
\label{fig:boxplots_bias_DGPs}
\end{figure}

\begin{sidewaystable}[htbp]
  \caption{Average selection rate across 100 replication for truly active (t.p.) and truly inactive (f.p.) stationary (I(0)) and local unit-root (I(1)) predictors in the shape ($k$) and scale ($\sigma$) parameters, correct classification rate (CCR), and selection rates of $\log(\sigma_{t-1})$.}
  \label{tab:DGP3-6_variable_selection}
  \resizebox{0.9\textwidth}{!}{%
    \begin{tabular}{cc|l|l|cccccccccc}
    \toprule
    \toprule
          DGPs& T     & estimators & selection criteria & \multicolumn{1}{l}{  t.p.(k) of $I(0)s$\;\;\;\;} & \multicolumn{1}{l}{f.p.(k) of $I(0)s$\;\;\;\;} & \multicolumn{1}{l}{t.p.(k) of $I(1)s$\;\;\;\;} & \multicolumn{1}{l}{f.p.(k) of $I(1)s$\;\;\;\;} & \multicolumn{1}{l}{t.p.($\sigma$) of $I(0)s$\;\;\;\;} & \multicolumn{1}{l}{f.p.($\sigma$) of $I(0)s$\;\;\;\;} & \multicolumn{1}{l}{t.p.($\sigma$) of $I(1)s$\;\;\;\;} & \multicolumn{1}{l}{f.p.($\sigma$) of $I(1)s$\;\;\;\;} & \multicolumn{1}{l}{ CCR\;\;\;\;} & \multicolumn{1}{l}{ $\log(\sigma_{t-1})$} \\
    \midrule
    \multirow{12}[6]{*}{\textbf{DGP 3} } & \multirow{4}[2]{*}{25,000} & $\widehat{\bm{\beta}}^{mle}$ & t test ($\alpha=0.05$) & 0.546 & 0.063 & $-$ & 0.110 & 1.000 & 0.110 & $-$ & 0.215 & 0.858 & 0.060 \\
          &       & \multirow{3}[1]{*}{$\widehat{\bm{\beta}}^{tal}$} & AIC   & 0.998 & 0.160 & $-$ & 0.465 & 1.000 & 0.156 & $-$ & 0.220 & 0.869 & 0.150 \\
          &       &       & HQ    & 0.996 & 0.123 & $-$ & 0.420 & 1.000 & 0.024 & $-$ & 0.060 & 0.930 & 0.020 \\
          &       &       & BIC   & 0.984 & 0.076 & $-$ & 0.350 & 1.000 & 0.006 & $-$ & 0.010 & 0.953 & 0.000 \\
\cmidrule{2-14}          & \multirow{4}[2]{*}{50,000} & $\widehat{\bm{\beta}}^{mle}$ & t test ($\alpha=0.05$) & 0.718 & 0.050 & $-$ & 0.085 & 1.000 & 0.112 & $-$ & 0.185 & 0.892 & 0.050 \\
          &       & \multirow{3}[1]{*}{$\widehat{\bm{\beta}}^{tal}$} & AIC   & 0.998 & 0.100 & $-$ & 0.450 & 1.000 & 0.147 & $-$ & 0.160 & 0.892 & 0.110 \\
          &       &       & HQ    & 0.998 & 0.075 & $-$ & 0.410 & 1.000 & 0.032 & $-$ & 0.040 & 0.942 & 0.020 \\
          &       &       & BIC   & 0.998 & 0.053 & $-$ & 0.330 & 1.000 & 0.004 & $-$ & 0.015 & 0.963 & 0.000 \\
\cmidrule{2-14}          & \multirow{4}[2]{*}{100,000} & $\widehat{\bm{\beta}}^{mle}$ & t test ($\alpha=0.05$) & 0.788 & 0.045 & $-$ & 0.130 & 1.000 & 0.109 & $-$ & 0.220 & 0.900 & 0.040 \\
          &       & \multirow{3}[1]{*}{$\widehat{\bm{\beta}}^{tal}$} & AIC   & 0.988 & 0.045 & $-$ & 0.435 & 1.000 & 0.154 & $-$ & 0.200 & 0.901 & 0.160 \\
          &       &       & HQ    & 0.988 & 0.034 & $-$ & 0.385 & 1.000 & 0.030 & $-$ & 0.040 & 0.953 & 0.020 \\
          &       &       & BIC   & 0.988 & 0.024 & $-$ & 0.345 & 1.000 & 0.001 & $-$ & 0.010 & 0.969 & 0.000 \\
    \midrule
    \multirow{12}[6]{*}{\textbf{DGP 4}} & \multirow{4}[2]{*}{25,000} & $\widehat{\bm{\beta}}^{mle}$ & t test ($\alpha=0.05$) & 0.588 & 0.071 & $-$ & 0.100 & 1.000 & 0.101 & $-$ & 0.195 & 0.870 & 1.000 \\
          &       & \multirow{3}[1]{*}{$\widehat{\bm{\beta}}^{tal}$} & AIC   & 0.982 & 0.414 & $-$ & 0.665 & 1.000 & 0.176 & $-$ & 0.155 & 0.792 & 1.000 \\
          &       &       & HQ    & 0.966 & 0.213 & $-$ & 0.525 & 1.000 & 0.044 & $-$ & 0.040 & 0.892 & 1.000 \\
          &       &       & BIC   & 0.916 & 0.086 & $-$ & 0.450 & 1.000 & 0.004 & $-$ & 0.005 & 0.934 & 1.000 \\
\cmidrule{2-14}          & \multirow{4}[2]{*}{50,000} & $\widehat{\bm{\beta}}^{mle}$ & t test ($\alpha=0.05$) & 0.718 & 0.055 & $-$ & 0.110 & 1.000 & 0.085 & $-$ & 0.170 & 0.900 & 1.000 \\
          &       & \multirow{3}[1]{*}{$\widehat{\bm{\beta}}^{tal}$} & AIC   & 0.996 & 0.310 & $-$ & 0.630 & 1.000 & 0.145 & $-$ & 0.140 & 0.832 & 1.000 \\
          &       &       & HQ    & 0.992 & 0.178 & $-$ & 0.535 & 1.000 & 0.038 & $-$ & 0.025 & 0.907 & 1.000 \\
          &       &       & BIC   & 0.982 & 0.109 & $-$ & 0.500 & 1.000 & 0.001 & $-$ & 0.005 & 0.936 & 1.000 \\
\cmidrule{2-14}          & \multirow{4}[2]{*}{100,000} & $\widehat{\bm{\beta}}^{mle}$ & t test ($\alpha=0.05$) & 0.828 & 0.039 & $-$   & 0.090 & 1.000 & 0.094 & $-$   & 0.190 & 0.920 & 1.000 \\								
          &       & \multirow{3}[1]{*}{$\widehat{\bm{\beta}}^{tal}$} & AIC   & 1.000 & 0.248 & $-$   & 0.560 & 1.000 & 0.134 & $-$   & 0.100 & 0.859 & 1.000 \\								
          &       &       & HQ    & 1.000 & 0.111 & $-$   & 0.435 & 1.000 & 0.036 & $-$   & 0.040 & 0.931 & 1.000 \\								
          &       &       & BIC   & 0.998 & 0.060 & $-$   & 0.345 & 1.000 & 0.000 & $-$   & 0.000 & 0.962 & 1.000 \\								

    \midrule
    \multirow{12}[6]{*}{\textbf{DGP 5}} & \multirow{4}[2]{*}{25,000} & $\widehat{\bm{\beta}}^{mle}$ & t test ($\alpha=0.05$) & 0.458 & 0.065 & 0.280 & 0.140 & 1.000 & 0.120 & 1.000 & 0.225 & 0.832 & 0.080 \\
          &       & \multirow{3}[1]{*}{$\widehat{\bm{\beta}}^{tal}$} & AIC   & 0.998 & 0.161 & 0.930 & 0.410 & 1.000 & 0.152 & 1.000 & 0.175 & 0.874 & 0.190 \\
          &       &       & HQ    & 0.998 & 0.098 & 0.900 & 0.365 & 1.000 & 0.039 & 1.000 & 0.045 & 0.934 & 0.050 \\
          &       &       & BIC   & 0.998 & 0.083 & 0.890 & 0.350 & 1.000 & 0.006 & 1.000 & 0.005 & 0.950 & 0.000 \\
\cmidrule{2-14}          & \multirow{4}[2]{*}{50,000} & $\widehat{\bm{\beta}}^{mle}$ & t test ($\alpha=0.05$) & 0.620 & 0.058 & 0.390 & 0.135 & 1.000 & 0.111 & 1.000 & 0.215 & 0.862 & 0.060 \\
          &       & \multirow{3}[1]{*}{$\widehat{\bm{\beta}}^{tal}$} & AIC   & 0.993 & 0.085 & 0.990 & 0.355 & 1.000 & 0.151 & 1.000 & 0.175 & 0.899 & 0.150 \\
          &       &       & HQ    & 0.993 & 0.058 & 0.980 & 0.285 & 1.000 & 0.036 & 1.000 & 0.020 & 0.954 & 0.030 \\
          &       &       & BIC   & 0.993 & 0.044 & 0.970 & 0.260 & 1.000 & 0.003 & 1.000 & 0.005 & 0.969 & 0.000 \\
\cmidrule{2-14}          & \multirow{4}[2]{*}{100,000} & $\widehat{\bm{\beta}}^{mle}$ & t test ($\alpha=0.05$) & 0.755 & 0.036 & 0.510 & 0.095 & 1.000 & 0.092 & 1.000 & 0.180 & 0.899 & 0.040 \\
          &       & \multirow{3}[1]{*}{$\widehat{\bm{\beta}}^{tal}$} & AIC   & 0.993 & 0.033 & 1.000 & 0.300 & 1.000 & 0.124 & 1.000 & 0.170 & 0.924 & 0.140 \\
          &       &       & HQ    & 0.993 & 0.025 & 1.000 & 0.220 & 1.000 & 0.028 & 1.000 & 0.040 & 0.968 & 0.010 \\
          &       &       & BIC   & 0.993 & 0.020 & 1.000 & 0.195 & 1.000 & 0.000 & 1.000 & 0.005 & 0.981 & 0.000 \\
    \midrule
    \multirow{12}[5]{*}{\textbf{DGP 6}} & \multirow{4}[2]{*}{25,000} & $\widehat{\bm{\beta}}^{mle}$ & t test ($\alpha=0.05$) & 0.480 & 0.058 & 0.290 & 0.095 & 1.000 & 0.091 & 1.000 & 0.215 & 0.852 & 1.000 \\
          &       & \multirow{3}[1]{*}{$\widehat{\bm{\beta}}^{tal}$} & AIC   & 0.890 & 0.304 & 0.970 & 0.640 & 1.000 & 0.148 & 1.000 & 0.135 & 0.818 & 1.000 \\
          &       &       & HQ    & 0.873 & 0.219 & 0.970 & 0.535 & 1.000 & 0.039 & 1.000 & 0.020 & 0.880 & 1.000 \\
          &       &       & BIC   & 0.828 & 0.144 & 0.950 & 0.445 & 1.000 & 0.004 & 1.000 & 0.000 & 0.909 & 1.000 \\
\cmidrule{2-14}          & \multirow{4}[2]{*}{50,000} & $\widehat{\bm{\beta}}^{mle}$ & t test ($\alpha=0.05$) & 0.683 & 0.070 & 0.370 & 0.095 & 1.000 & 0.098 & 1.000 & 0.195 & 0.877 & 1.000 \\
          &       & \multirow{3}[1]{*}{$\widehat{\bm{\beta}}^{tal}$} & AIC   & 0.940 & 0.241 & 0.990 & 0.600 & 1.000 & 0.175 & 1.000 & 0.180 & 0.834 & 1.000 \\
          &       &       & HQ    & 0.930 & 0.140 & 0.990 & 0.500 & 1.000 & 0.034 & 1.000 & 0.045 & 0.911 & 1.000 \\
          &       &       & BIC   & 0.903 & 0.076 & 0.990 & 0.425 & 1.000 & 0.011 & 1.000 & 0.015 & 0.936 & 1.000 \\
\cmidrule{2-14}          & \multirow{4}[1]{*}{100,000} & $\widehat{\bm{\beta}}^{mle}$ & t test ($\alpha=0.05$) & 0.813 & 0.048 & 0.450 & 0.070 & 1.000 & 0.069 & 1.000 & 0.155 & 0.914 & 1.000 \\
          &       & \multirow{3}[0]{*}{$\widehat{\bm{\beta}}^{tal}$} & AIC   & 0.918 & 0.095 & 0.990 & 0.510 & 1.000 & 0.125 & 1.000 & 0.090 & 0.894 & 1.000 \\
          &       &       & HQ    & 0.918 & 0.036 & 0.990 & 0.385 & 1.000 & 0.033 & 1.000 & 0.025 & 0.945 & 1.000 \\
          &       &       & BIC   & 0.913 & 0.023 & 0.990 & 0.340 & 1.000 & 0.001 & 1.000 & 0.005 & 0.960 & 1.000 \\
          \bottomrule
          \bottomrule
    \end{tabular}%
    }
\end{sidewaystable}

\section{Empirical Study}
\label{sec:empirical_study}
We study the high-frequency excess loss distributions of nine large liquid U.S. stocks: American Express (AXP), Boeing (BA), General Electric (GE), Home Depot (HD), IBM, Johnson and Johnson (JNJ), JPMorgan Chase (JPM), Coca-Cola (KO), and ExxonMobil (XOM). Our data covers all transactions observed from January 2006 to December 2014. Market uncertainty and liquidity being elusive concepts, we study their impact on the excess loss distribution using as predictors several high-frequency volatility and liquidity indicators, and select the most appropriate ones with the two-step ALMLE developed in Section \ref{sec:LASSO}. We perform an in-sample analysis providing an economic interpretation for the impact of the selected predictors on the excess loss distribution, and an out-of-sample VaR forecast analysis to assess the goodness of fit of the predicted excess loss distribution.

\subsection{Variables description} 
\label{subsec:var_prepare}
The raw intraday data of the studied stocks contain transaction timestamps in milliseconds, transaction prices per share, and transaction volume in shares for each trade. We cleaned the raw data according to standard procedures in \cite{brownlees2006financial} and \cite{barndorff2009realized}. Since transaction data are irregularly-spaced, we need to define an equally-spaced grid at a fixed frequency to analyse losses with our model. We choose to analyse losses at the five minute frequency. Let $P_{t,i}$ be the transaction price of the $i$-th trade in the $t$-th five minute interval, and let $V_{t,i}$ be the corresponding quantity of traded shares, with $0\leq i \leq n_t$ where $n_t$ is the number of trades in the $t$-th five minute interval and $0<t\leq T$. We define 5-min prices, $P_t$, as the median transaction price in the $t$-th five minute interval, and compute 5-min losses as the negative $t$-th return, $R_{t}:= \log( P_{t} ) - \log(P_{t-1} )$. To obtain the time series of excess losses we consider a dynamic threshold accounting for the time-varying behavior of losses at high-frequency. Specifically, the threshold $u_t$ at time $t$ is defined as the $90\%$-quantile of the losses observed over the period $\left(t-1,t-h\right)$, with $h>1$ the moving window size. We consider 12 possible values of $h$ ranging from one week to twelve weeks.

Liquidity refers to the ability to trade large volume of a financial instrument with low price impact, cost and postponement. As liquidity can be decomposed into different dimensions (Harris et al., 1990), we consider several liquidity indicators as possible predictors. Similarly, to characterize market uncertainty we consider several indicators for the observed dispersion of transaction prices. Moreover, to disentangle the impact of trading activity at different frequencies, we build our set of candidate predictors considering both information within the $t$-th five minute interval and across neighbourhoods of the $t$-th five minute interval. Let $P_{t,BU}:=[P_{t,1},\ldots,P_{t,n_t}]'$ and $R_{t,BU} :=[R_{t,1},\ldots,R_{t,n_t}]'$ be the vectors of traded prices and trade returns observed within the $t$-th five minute interval, with $R_{t,i}:= \log(P_{t,i}) - \log(P_{t,i-1})$. Let $T_w$ be a neighborhood size, and define $P_{t,T_w}:= [P_t,\ldots, P_{t-T_w+1}]'$ the vector of 5-min prices within a neighborhood of size $T_w$ and $ R_{t,T_w}: = \log(P_{t,T_w}) - \log(P_{t-1,T_w})$ the corresponding vector of returns. Let $\text{dur}_{t,i}$ denote the execution duration of the $i$-th transaction in the $t$-th five minute interval, i.e. the time difference between the order executed time and order placed time. Table \ref{tab:liquidity_indicators} lists the liquidity predictors we consider in the analysis. They are classified according to their frequency, i.e. within or across the five minute interval, and by their nature of price impact or spread proxies \citep{goyenko2009liquidity} or volatility of liquidity measures. Table \ref{tab:volatility_indicators} lists the volatility predictors we consider in the analysis and are classified according to the frequency at which they are computed, i.e., within or across the five minute interval.
\begin{table}[htbp]
  \centering
  \caption{Liquidity measures proxying for price impact (PI), spread (S) and volatility of liquidity (Vol) computed using information within the five minute interval (W), across (A) 5-min observations in a neighbourhood of size $T_w$, or as a ratio (R) between the two frequencies. $\bigtriangleup$ denotes the first difference operator for vectors; cov(,) and cor(,) denote the covariance and the correlation between two input variables, respectively; var() denotes the variance of the input variable or vector.}
  \resizebox{0.98\textwidth}{!}{%
    \begin{tabular}{llll}
    \hline
    \hline
    \\
    \parbox{2.5cm}{\textbf{Frequency}} & \parbox{2.5cm}{\textbf{Proxy}} & \parbox{6cm}{\textbf{Liquidity Predictors}} & \parbox{5cm}{\textbf{Formula}} \\[10pt] 
    \hline\\
    W & PI & Transaction Volume &  $\text{TV}_t = \sum_{i=1}^{n_t} P_{t,i}\, V_{t,i}$ \\[10pt] 
    W & PI & Transaction Quantity &  $\text{TQ}_t = \sum_{i=1}^{n_t} \, V_{t,i}$ \\[10pt] 
    W & Vol & Micro Transaction Volume Volatility &  $\text{MTVV}_t = \sqrt{ \frac{1}{n_t}\sum_{j=1}^{n_t} \left( P_{t,j}\, V_{t,j} - \frac{1}{n_t}\sum_{i=1}^{n_t} P_{t,i}\, V_{t,i} \right)^2 } $ \\[10pt] 
    W & Vol & Micro Volatility of Trading Quantity in shares  &  $\text{MTQV}_t = \sqrt{ \frac{1}{n_t}\sum_{j=1}^{n_t} \left( V_{t,j} - \frac{1}{n_t}\sum_{i=1}^{n_t} V_{t,i} \right)^2 }$\\[10pt] 
    W & PI & Amihud Illiquidity Measure  & 
        $\text{AM}_t = \frac{1}{n_t}\sum_{i=1}^{n_t} \frac{ \abs{R_{t,i} } }{ P_{t,i}\,V_{t,i} } $
         \\[10pt] 
    W & PI & Extended Amihud Measures \citep{goyenko2009liquidity}  &  $   \text{EAM}_t = \frac{\max(P_{t,BU}) - \min(P_{t,BU})}{ \text{TV}_{t} }$\\[10pt] 
    W & PI & Transaction Duration & $\text{dur}_t = \frac{\sum_{i=1}^{n_t} \text{dur}_{t,i}}{n_t}$
         \\[10pt]
    A & S & Roll~\citep{roll1984simple}  &  $\text{Roll}_t = \text{cov}\left(\bigtriangleup P_{t,T_w}, \bigtriangleup P_{t-1.T_w} \right)$\\[10pt] 
    A & S & Modified Roll  &  $\text{RollMod}_t = \frac{\text{cov}\left(\bigtriangleup P_{t,T_w}, \bigtriangleup P_{t-1.T_w} \right)}{P_t^m}$\\[10pt] 
    A & S & Negative Roll & $\text{Roll}_t^- = \text{Roll}_t\, \mathbbm{1}\{\text{Roll}_t < 0\}$  \\[10pt] 
    A & S & Negative Modified Roll   & $\text{RollMod}_t^- = \text{RollMod}_t\, \mathbbm{1}\{\text{RollMod}_t < 0\}$  \\[10pt] 
    A & S & Return Autocorrelation \citep{grossman1988liquidity}  &  $\text{RAC}_t = \text{cor}\left(R_{t,T_w}, R_{t-1.T_w}\right)$  \\[10pt] 
    A & PI & Amihud Illiquidity Measure  & 
        $ \text{AMI}_t = \frac{1}{T_w}\sum_{j=0}^{T_w-1} \frac{ \abs{R_{t-j} } }{ \text{TV}_{t-j} } $
         \\[10pt] 
    A & Vol & Transaction Volume Volatility  & $ \text{TVV}_t = \sqrt{ \frac{1}{T_w}\sum_{j=0}^{T_w-1} \left( \text{TV}_{t-j} - \overline{\text{TV}}_{t,T_w} \right)^2 }$ \\[10pt]  
    A & Vol & Relative Transaction Volume Volatility  & $ \text{RTVV}_t = \frac{ \text{TVV}_t }{ \frac{1}{T_w} \sum_{j=0}^{T_w - 1} \text{TV}_{t-j} }$ \\[10pt] 
    A & Vol & Trading Quantity Volatility  &  $\text{TQV}_{t} = \sqrt{ \frac{1}{T_w}\sum_{j=0}^{T_w-1} \left( \text{TQ}_{t-j} - \frac{1}{T_w} \sum_{j=0}^{T_w - 1} \text{TQ}_{t-j} \right)^2 }$ \\[10pt] 
    A & Vol & Relative Trading Quantity Volatility &  $\text{RTQV}_{t} = \frac{\text{TQV}_{t} }{ \overline{\text{TQ}}_{t,T_w} }$ \\[10pt] 
    R & S & 
    Variance Ratio \citep{hasbrouck1988liquidity}  & $ \text{VR}_t = \frac{T_w\cdot \text{var}( \log(P_{t,BU}) - \log(P_t)) }{n_t\,\text{var}(R_{t,T_w})}$ \\[10pt]
    \hline
    \hline
    \end{tabular}%
    }
  \label{tab:liquidity_indicators}%
\end{table}%

\begin{table}[htbp]
  \centering
  \caption{Volatility measures computed using information within the five minute interval (W), across (A) 5-min observations in a neighbourhood of size $T_w$, or as a ratio (R) between the two frequencies.}
    \begin{tabular}{lll}
    \toprule
    \toprule
    \parbox{4cm}{\textbf{Frequency} } & \parbox{5cm}{\textbf{Volatility Predictors} } & \textbf{Formulas} \\[10pt]
    \midrule
    W & Micro Noise Return Volatility   & $\text{MNRV}_{t} = \sqrt{\frac{1}{n_t}\sum_{i=1}^{n_t} \left( \log(P_{t,i}) - \log(P_t) \right)^2}$  \\[10pt]
    W & Micro Realized Volatility   & $\text{MRV}_{t} = \sqrt{\sum_{i=1}^{n_t} \left( \log(P_{t,i}) - \log(P_{t,i-1}) \right)^2}$ \\[10pt]
    \midrule
    A & Realized Volatility   &  $\text{RV}_{t,T_w} = \sqrt{\sum_{j=0}^{T_w-1 } R_{t-j}^2}$\\[10pt]
    \midrule
    R & MNRV2RV & $\text{MNRV2RV} = \frac{\text{MNRV}_{t}}{\text{RV}_{t}}$ \\[10pt]
    R & MRV2RV &  $\text{MRV2RV} = \frac{\text{MRV}_{t}}{\text{RV}_{t}}$ \\
    \bottomrule
    \bottomrule
    \end{tabular}%
  \label{tab:volatility_indicators}%
\end{table}%

\subsection{In-sample estimates}
We divide each time series into an in-sample period covering the first 90\% of the observations and an out-of-sample period spanning the last 10\% of the sample. 
We model the excess losses $\{y_t\}$ of each stock with the time-varying GPD regression model in \eqref{Model:k_sigma_models}-\eqref{Model:k_sigma_models_2ndEqu}, using the variables defined in Tables~\ref{tab:liquidity_indicators} and~\ref{tab:volatility_indicators}, with $T_w\in\{2,6,12\}$, as possible predictors in both scale and shape parameters.
Coefficient estimates obtained with the two-step ALMLE are presented in Tables~\ref{tab:model_estimation_results_IS_10stocks_K} and ~\ref{tab:model_estimation_results_IS_10stocks_Sigma} for the shape and scale parameters, respectively. 

Results for the shape parameter in Table~\ref{tab:model_estimation_results_IS_10stocks_K} show that estimated coefficients have almost always the same sign across the stocks. 
As to liquidity predictors, we find that price impact proxies are selected for almost all the stocks, suggesting that they better capture liquidity effects on extreme losses. In particular, TV and TQ  display positive coefficients while AM and EAM display negative coefficients, entailing that larger extreme losses are associated with high levels of liquidity in the last five minutes. Although counter-intuitive at first, this result is very interesting when read together with the other selected variables. As to the volatility of liquidity, we notice that RTVV$(T_w=6)$ and RTQV$(T_w=6)$ are selected across most of the stocks and display large and positive coefficients, indicating that extreme losses tend to be larger during periods of high volatility of liquidity. Almost for every stock, we select the ratio MRV2RV($T_w=12$), essentially capturing the impact of the volatility of volatility or jump risk on extreme losses, and associate a positive coefficient to it, conveying the idea that extreme losses tend to be larger during periods of high uncertainty. Altogether these results are coherent with the findings in \cite{brogaard2018}, i.e. that market markers amplify extreme price movements while withdrawing from the market after large uncertainty shocks that caused their liquidity supply to be outstripped by liquidity demand.

Table~\ref{tab:model_estimation_results_IS_10stocks_Sigma} shows that more variables are selected for the scale parameter but their pattern is less stable across stocks. In general, we notice that the autoregressive component contributes to the dynamics, and that the realized volatility predictor computed within the five-minute interval is always selected and displays positive coefficient. This is coherent with the fact that the scale parameter captures the time-varying heteroscedasticity in the data.

For comparison purposes, we report estimated regression coefficients for the shape and scale parameters obtained with MLE in Tables \ref{tab:model_estimation_results_IS_10stocks_K_MLE}-\ref{tab:model_estimation_results_IS_10stocks_sigma_MLE}. All of the estimated coefficients are nonzero and we cannot compute the corresponding standard errors because the obtained Fisher information matrix of the MLE is not positive definitive. This makes variable interpretation very difficult if not impossible.

\begin{table}[htbp]
  \centering
  \caption{Empirical estimates of the regression coefficients for the shape parameter obtained with the two-step ALMLE.}
    \begin{tabular}{l|rrrrrrrrr|r}
    \toprule
    \toprule
     \diagbox{Predictors}{Stocks} & \multicolumn{1}{l}{AXP} & \multicolumn{1}{l}{BA} & \multicolumn{1}{l}{GE} & \multicolumn{1}{l}{HD} & \multicolumn{1}{l}{IBM} & \multicolumn{1}{l}{JNJ} & \multicolumn{1}{l}{JPM} & \multicolumn{1}{l}{KO} & \multicolumn{1}{l|}{XOM} & \multicolumn{1}{l}{selection per stock} \\
    \midrule
    TV    & 1.105 & 0.676 &   & 0.581 & 1.912 & 1.205 & 0.660 & 1.109 & 2.014 & 0.89 \\
    TQ    & 1.314 & 1.101 & 0.501 & 0.874 & 0.813 & 1.130 & 0.756 &   & 0.0004 & 0.89 \\
    AM    & -0.345 &   & 0.719 & -0.037 &  &  & -0.522 & -0.030 &   & 0.56 \\
    MTVV   &  &   &   &   &   & 0.067 &   &   &   & 0.11 \\
    EAM   & -1.637 & -1.760 &  & -0.913 & -1.789 & -1.681 & -1.396 & -1.650 & -0.201 & 0.89 \\
    MTQV   & 0.200 &  & 0.166 &  &  & 0.009 & 0.177 &  &  & 0.44 \\
    MRV   &  &  & 0.328 &  & -0.901 & -0.367 & -0.274 &  &  & 0.44 \\
    MNRV  &  & -0.794 & -0.242 & -0.491 &  &  &  & 0.020 &  & 0.44 \\
    dur   &  &  &  &  &  &  &  &  &  &  0.00 \\
    AMI ($T_w=2$) &  &  &  &  &  &  &  &  &  &  0.00 \\
    VR ($T_w=2$) &  &  &  &  &  &  &  &  &  &  0.00 \\
    RV ($T_w=2$) &  &  &  &  &  &  &  &  &  &  0.00 \\
    MNRV2RV ($T_w=2$) &  &  &  &  &  &  &  & -0.084 &  & 0.11 \\
    MRV2RV ($T_w=2$) &  &  &  &  &  &  &  & 0.095 &  & 0.11 \\
    AMI ($T_w=6$) &  &  &  &  &  &  &  & -0.003 &  & 0.11 \\
    $\text{Roll}$ ($T_w=6$) &  &  &  &  &  &  &  &  &  &  0.00 \\
    $\text{Roll}^-$ ($T_w=6$) &  &  &  &  &  &  &  &  &  &  0.00 \\
    $\text{RollMod}$ ($T_w=6$) &  &  &  &  &  &  &  &  &  &  0.00 \\
    $\text{RollMod}^-$ ($T_w=6$) &  &  &  &  &  &  &  &  &  &  0.00 \\
    TVV ($T_w=6$) &  &  &  &  &  &  &  &  &  &   0.00\\
    TQV ($T_w=6$) &  &  &  &  &  &  &  &  &  &   0.00\\
    RTVV ($T_w=6$) &  & 0.068 &  & 0.298 &  &  & 0.159 & 0.218 & 0.264 & 0.56 \\
    RTQV ($T_w=6$) &  & 0.069 &  & 0.300 &  &  & 0.159 & 0.218 & 0.270 & 0.56 \\
    RAC ($T_w=6$) &  &  &  &  &  &  &  &  &  &   0.00\\
    VR ($T_w=6$) &  &  &  &  &  &  &  &  &  &   0.00\\
    RV ($T_w=6$) &  &  & -0.203 &  &  &  &  &  &  & 0.11 \\
    MNRV2RV ($T_w=6$) &  &  & -0.415 &  &  &  &  & -0.185 &  & 0.22 \\
    MRV2RV ($T_w=6$) &  &  &  &  &  &  & 0.194 & -0.010 & 0.0003 & 0.33 \\
    AMI ($T_w=12$) &  &  & 0.524 &  &  &  &  & 0.648 &  & 0.22 \\
    $\text{Roll}$ ($T_w=12$) &  &  &  &  &  &  &  &  &  &   0.00\\
    $\text{Roll}^-$ ($T_w=12$) &  &  &  &  &  &  &  &  &  &   0.00\\
    $\text{RollMod}$ ($T_w=12$) &  &  &  &  &  &  &  &  &  &   0.00\\
    $\text{RollMod}^-$ ($T_w=12$) &  &  &  &  &  &  &  &  &  &  0.00 \\
    TVV ($T_w=12$) &  &  & -0.851 &  &  &  &  &  &  & 0.11 \\
    TQV ($T_w=12$) &  &  &  &  &  &  &  &  &  &   0.00\\
    RTVV ($T_w=12$) &  &  &  &  &  &  &  & -0.090 &  & 0.11 \\
    RTQV ($T_w=12$) &  &  &  &  &  &  &  & -0.090 &  & 0.11 \\
    RAC ($T_w=12$) &  &  &  &  &  &  &  &  &  &  0.00\\
    VR ($T_w=12$) & 0.415 &  &  &  &  &  & 0.244 &  & -0.00002 & 0.33 \\
    RV ($T_w=12$) &  &  &  &  &  &  &  & -0.436 &  & 0.11 \\
    MNRV2RV ($T_w=12$) &  &  &  &  &  &  &  & 0.297 &  & 0.11 \\
    MRV2RV ($T_w=12$) &  & 0.994 & 0.608 & 0.287 & 0.554 & 0.548 & 0.487 & 0.544 & 0.766 & 0.89 \\
    \midrule
    total number of selected variables & 6     & 7     & 10    & 8     & 5     & 7     & 11    & 17    & 8     &  \\
    \bottomrule
    \bottomrule
    \end{tabular}%
  \label{tab:model_estimation_results_IS_10stocks_K}%
\end{table}%

\begin{table}[htbp]
  \centering
  \caption{Empirical estimates of the regression coefficients for the scale parameter obtained with the two-step ALMLE.}
    \begin{tabular}{l|rrrrrrrrr|r}
    \toprule
    \toprule
    \diagbox{Predictors}{Stocks} & \multicolumn{1}{l}{AXP} & \multicolumn{1}{l}{BA} & \multicolumn{1}{l}{GE} & \multicolumn{1}{l}{HD} & \multicolumn{1}{l}{IBM} & \multicolumn{1}{l}{JNJ} & \multicolumn{1}{l}{JPM} & \multicolumn{1}{l}{KO} & \multicolumn{1}{l|}{XOM} & \multicolumn{1}{l}{selection per stock} \\
    \midrule
    TV    & -0.004 &  & -0.005 &  & 0.042 &  & -0.044 & 0.212 & 0.095 & 0.67 \\
    TQ    & 0.039 & 0.110 & 0.015 & 0.042 &  & 0.047 & 0.092 & -0.102 & -0.095 & 0.89 \\
    AM    &  &  & 0.024 & 0.050 &  &  &  &  &  & 0.22 \\
    MTVV   &  & 0.048 &  &  & -0.029 &  & -0.137 & -0.260 &  & 0.44 \\
    EAM   &  &  &  & -0.040 & -0.012 &  &  &  & -0.043 & 0.33 \\
    MTQV   &  & -0.123 & 0.016 &  & 0.021 &  & 0.135 & 0.168 & -0.003 & 0.67 \\
    MRV   & 0.049 &  & 0.032 &  &  &  & 0.057 &  &  & 0.33 \\
    MNRV  & 0.049 & 0.139 & 0.056 & 0.055 & 0.067 & 0.079 & 0.048 & 0.128 & 0.091 & 1.00 \\
    dur   &  &  &  &  &  &  &  &  &  & 0.00 \\
    AMI ($T_w=2$) &  &  &  &  & -0.018 &  &  &  &  & 0.11 \\
    VR ($T_w=2$) &  &  &  &  &  &  &  &  &  & 0.00 \\
    RV ($T_w=2$) &  &  &  &  & 0.017 &  & -0.021 &  &  & 0.22 \\
    MNRV2RV ($T_w=2$) &  &  &  &  &  &  &  &  &  & 0.00 \\
    MRV2RV ($T_w=2$) &  &  &  &  &  &  &  &  &  & 0.00 \\
    AMI ($T_w=6$) &  &  &  &  & -0.014 &  &  &  & -0.003 & 0.22 \\
    $\text{Roll}$ ($T_w=6$) &  &  &  &  &  & -0.054 &  &  &  & 0.11 \\
    $\text{Roll}^-$ ($T_w=6$) &  &  &  & -0.035 &  &  & -0.086 &  &  & 0.22 \\
    $\text{RollMod}$ ($T_w=6$) & -0.061 &  &  &  & -0.048 &  &  &  & -0.050 & 0.33 \\
    $\text{RollMod}^-$ ($T_w=6$) & 0.077 &  & 0.008 & 0.051 & 0.053 & 0.062 & 0.104 &  & 0.061 & 0.78 \\
    TVV ($T_w=6$) &  &  &  & 0.006 & 0.029 &  &  &  &  & 0.22 \\
    TQV ($T_w=6$) &  &  &  & -0.015 & -0.048 & -0.006 &  &  &  & 0.33 \\
    RTVV ($T_w=6$) &  &  &  &  & 0.015 &  &  &  &  & 0.11 \\
    RTQV ($T_w=6$) &  &  &  &  & 0.0003 &  &  &  &  & 0.11 \\
    RAC ($T_w=6$) &  &  &  &  &  &  &  &  &  & 0.00 \\
    VR ($T_w=6$) &  &  &  &  & -0.003 &  &  &  &  & 0.11 \\
    RV ($T_w=6$) &  &  & 0.017 & 0.038 &  &  &  &  &  & 0.22 \\
    MNRV2RV ($T_w=6$) & -0.002 & -0.056 &  &  &  & -0.037 &  &  & -0.082 & 0.44 \\
    MRV2RV ($T_w=6$) & -0.028 & 0.041 &  &  &  &  &  &  & 0.044 & 0.33 \\
    AMI ($T_w=12$) &  &  &  &  & 0.039 &  &  &  &  & 0.11 \\
    $\text{Roll}$ ($T_w=12$) &  &  & -0.005 &  &  &  & 0.027 &  &  & 0.22 \\
    $\text{Roll}^-$ ($T_w=12$) &  &  & 0.006 &  &  &  &  &  &  & 0.11 \\
    $\text{RollMod}$ ($T_w=12$) & -0.005 &  &  &  &  &  & -0.038 &  & -0.014 & 0.33 \\
    $\text{RollMod}^-$ ($T_w=12$) & 0.004 &  &  &  &  &  & 0.006 &  & 0.022 & 0.33 \\
    TVV ($T_w=12$) &  & -0.036 &  & -0.009 &  &  &  &  &  & 0.22 \\
    TQV ($T_w=12$) & -0.013 &  &  &  & -0.001 & -0.010 & -0.011 &  &  & 0.44 \\
    RTVV ($T_w=12$) &  &  &  &  &  &  &  &  & 0.003 & 0.11 \\
    RTQV ($T_w=12$) &  & 0.046 &  &  &  &  &  &  &  & 0.11 \\
    RAC ($T_w=12$) &  &  &  &  &  &  &  &  &  & 0.00 \\
    VR ($T_w=12$) &  &  &  &  & -0.004 &  &  &  & -0.032 & 0.22 \\
    RV ($T_w=12$) &  &  &  &  &  & 0.027 &  &  & 0.034 & 0.22 \\
    MNRV2RV ($T_w=12$) & 0.014 & 0.007 & 0.032 & 0.037 &  & 0.061 & 0.023 &  & 0.103 & 0.78 \\
    MRV2RV ($T_w=12$) & 0.022 &  &  &  & 0.021 & -0.005 & -0.00005 & 0.014 & -0.016 & 0.67 \\
    $\log(\sigma_{t-1})$ & 0.849 & 0.631 & 0.758 & 0.724 & 0.832 & 0.778 & 0.829 & 0.668 & 0.770 & 1.00  \\
    \midrule
    total number of selected variables & 14    & 10    & 12    & 12    & 20    & 11    & 16    & 7     & 18    &  \\
    \bottomrule
    \bottomrule
    \end{tabular}%
  \label{tab:model_estimation_results_IS_10stocks_Sigma}%
\end{table}%
 
\begin{table}[htbp]
  \centering
  \caption{Empirical estimates of the regression coefficients for the scale parameter obtained with the MLE.}
    \begin{tabular}{l|llllllllll}
    \toprule
    \toprule
    \diagbox{Predictors}{Stocks} & AXP  & BA   & GE   & HD   & IBM & JNJ  & JPM & KO   & \multicolumn{1}{l|}{XOM} & selection per stock \\
    \midrule
    TV    & \multicolumn{1}{r}{0.396} & \multicolumn{1}{r}{0.264} & \multicolumn{1}{r}{0.116} & \multicolumn{1}{r}{0.264} & \multicolumn{1}{r}{0.563} & \multicolumn{1}{r}{0.631} & \multicolumn{1}{r}{0.239} & \multicolumn{1}{r}{0.602} & \multicolumn{1}{r|}{0.239} & \multicolumn{1}{r}{1.00} \\
    TQ    & \multicolumn{1}{r}{0.469} & \multicolumn{1}{r}{0.313} & \multicolumn{1}{r}{0.173} & \multicolumn{1}{r}{0.326} & \multicolumn{1}{r}{0.457} & \multicolumn{1}{r}{0.619} & \multicolumn{1}{r}{0.259} & \multicolumn{1}{r}{0.519} & \multicolumn{1}{r|}{0.216} & \multicolumn{1}{r}{1.00} \\
    AM    & \multicolumn{1}{r}{-0.083} & \multicolumn{1}{r}{-0.167} & \multicolumn{1}{r}{0.303} & \multicolumn{1}{r}{-0.042} & \multicolumn{1}{r}{-0.287} & \multicolumn{1}{r}{-0.394} & \multicolumn{1}{r}{-0.067} & \multicolumn{1}{r}{-0.221} & \multicolumn{1}{r|}{-0.080} & \multicolumn{1}{r}{1.00} \\
    TVV   & \multicolumn{1}{r}{0.249} & \multicolumn{1}{r}{0.017} & \multicolumn{1}{r}{0.064} & \multicolumn{1}{r}{0.094} & \multicolumn{1}{r}{0.135} & \multicolumn{1}{r}{0.115} & \multicolumn{1}{r}{0.055} & \multicolumn{1}{r}{0.090} & \multicolumn{1}{r|}{0.059} & \multicolumn{1}{r}{1.00} \\
    EAM   & \multicolumn{1}{r}{-0.621} & \multicolumn{1}{r}{-0.406} & \multicolumn{1}{r}{-0.062} & \multicolumn{1}{r}{-0.451} & \multicolumn{1}{r}{-0.525} & \multicolumn{1}{r}{-1.095} & \multicolumn{1}{r}{-0.236} & \multicolumn{1}{r}{-0.966} & \multicolumn{1}{r|}{-0.350} & \multicolumn{1}{r}{1.00} \\
    TQV   & \multicolumn{1}{r}{0.195} & \multicolumn{1}{r}{0.020} & \multicolumn{1}{r}{0.076} & \multicolumn{1}{r}{0.100} & \multicolumn{1}{r}{0.151} & \multicolumn{1}{r}{0.112} & \multicolumn{1}{r}{0.053} & \multicolumn{1}{r}{0.087} & \multicolumn{1}{r|}{0.057} & \multicolumn{1}{r}{1.00} \\
    MRV   & \multicolumn{1}{r}{0.062} & \multicolumn{1}{r}{0.001} & \multicolumn{1}{r}{0.224} & \multicolumn{1}{r}{-0.003} & \multicolumn{1}{r}{0.136} & \multicolumn{1}{r}{0.188} & \multicolumn{1}{r}{0.051} & \multicolumn{1}{r}{0.086} & \multicolumn{1}{r|}{0.048} & \multicolumn{1}{r}{1.00} \\
    MNRV  & \multicolumn{1}{r}{-0.055} & \multicolumn{1}{r}{-0.081} & \multicolumn{1}{r}{0.072} & \multicolumn{1}{r}{-0.140} & \multicolumn{1}{r}{-0.021} & \multicolumn{1}{r}{0.028} & \multicolumn{1}{r}{0.009} & \multicolumn{1}{r}{-0.121} & \multicolumn{1}{r|}{-0.035} & \multicolumn{1}{r}{1.00} \\
    dur   & \multicolumn{1}{r}{-0.025} & \multicolumn{1}{r}{0.000} & \multicolumn{1}{r}{-0.026} & \multicolumn{1}{r}{0.016} & \multicolumn{1}{r}{0.000} & \multicolumn{1}{r}{0.033} & \multicolumn{1}{r}{0.020} & \multicolumn{1}{r}{0.021} & \multicolumn{1}{r|}{0.000} & \multicolumn{1}{r}{1.00} \\
    AMI ($T_w=2$) & \multicolumn{1}{r}{-0.130} & \multicolumn{1}{r}{-0.048} & \multicolumn{1}{r}{0.036} & \multicolumn{1}{r}{0.129} & \multicolumn{1}{r}{-0.386} & \multicolumn{1}{r}{-0.194} & \multicolumn{1}{r}{-0.008} & \multicolumn{1}{r}{0.098} & \multicolumn{1}{r|}{-0.239} & \multicolumn{1}{r}{1.00} \\
    VR ($T_w=2$) & \multicolumn{1}{r}{0.000} & \multicolumn{1}{r}{0.060} & \multicolumn{1}{r}{0.000} & \multicolumn{1}{r}{0.013} & \multicolumn{1}{r}{0.000} & \multicolumn{1}{r}{0.034} & \multicolumn{1}{r}{-0.001} & \multicolumn{1}{r}{0.016} & \multicolumn{1}{r|}{0.000} & \multicolumn{1}{r}{0.56} \\
    RV ($T_w=2$) & \multicolumn{1}{r}{-0.047} & \multicolumn{1}{r}{-0.048} & \multicolumn{1}{r}{0.036} & \multicolumn{1}{r}{-0.178} & \multicolumn{1}{r}{0.047} & \multicolumn{1}{r}{0.006} & \multicolumn{1}{r}{0.027} & \multicolumn{1}{r}{-0.005} & \multicolumn{1}{r|}{-0.070} & \multicolumn{1}{r}{1.00} \\
    MNRV2RV ($T_w=2$) & \multicolumn{1}{r}{0.000} & \multicolumn{1}{r}{0.048} & \multicolumn{1}{r}{0.000} & \multicolumn{1}{r}{0.012} & \multicolumn{1}{r}{0.000} & \multicolumn{1}{r}{0.084} & \multicolumn{1}{r}{-0.043} & \multicolumn{1}{r}{-0.066} & \multicolumn{1}{r|}{0.000} & \multicolumn{1}{r}{0.78} \\
    MRV2RV ($T_w=2$) & \multicolumn{1}{r}{0.000} & \multicolumn{1}{r}{0.035} & \multicolumn{1}{r}{0.000} & \multicolumn{1}{r}{0.010} & \multicolumn{1}{r}{0.000} & \multicolumn{1}{r}{0.007} & \multicolumn{1}{r}{-0.025} & \multicolumn{1}{r}{0.036} & \multicolumn{1}{r|}{0.000} & \multicolumn{1}{r}{0.78} \\
    AMI ($T_w=6$) & \multicolumn{1}{r}{0.005} & \multicolumn{1}{r}{-0.154} & \multicolumn{1}{r}{0.107} & \multicolumn{1}{r}{-0.059} & \multicolumn{1}{r}{-0.401} & \multicolumn{1}{r}{-0.188} & \multicolumn{1}{r}{0.000} & \multicolumn{1}{r}{0.111} & \multicolumn{1}{r|}{0.019} & \multicolumn{1}{r}{1.00} \\
    $\text{Roll}$ ($T_w=6$) & \multicolumn{1}{r}{0.043} & \multicolumn{1}{r}{0.035} & \multicolumn{1}{r}{-0.010} & \multicolumn{1}{r}{0.018} & \multicolumn{1}{r}{0.032} & \multicolumn{1}{r}{-0.045} & \multicolumn{1}{r}{-0.004} & \multicolumn{1}{r}{0.001} & \multicolumn{1}{r|}{0.001} & \multicolumn{1}{r}{1.00} \\
    $\text{Roll}^-$ ($T_w=6$) & \multicolumn{1}{r}{0.052} & \multicolumn{1}{r}{0.019} & \multicolumn{1}{r}{-0.016} & \multicolumn{1}{r}{0.045} & \multicolumn{1}{r}{0.011} & \multicolumn{1}{r}{-0.066} & \multicolumn{1}{r}{-0.011} & \multicolumn{1}{r}{0.001} & \multicolumn{1}{r|}{0.010} & \multicolumn{1}{r}{1.00} \\
    $\text{RollMod}$ ($T_w=6$) & \multicolumn{1}{r}{0.026} & \multicolumn{1}{r}{0.024} & \multicolumn{1}{r}{-0.022} & \multicolumn{1}{r}{-0.014} & \multicolumn{1}{r}{0.029} & \multicolumn{1}{r}{-0.051} & \multicolumn{1}{r}{-0.007} & \multicolumn{1}{r}{0.001} & \multicolumn{1}{r|}{0.005} & \multicolumn{1}{r}{1.00} \\
    $\text{RollMod}^-$ ($T_w=6$) & \multicolumn{1}{r}{0.024} & \multicolumn{1}{r}{0.013} & \multicolumn{1}{r}{-0.031} & \multicolumn{1}{r}{0.011} & \multicolumn{1}{r}{0.008} & \multicolumn{1}{r}{-0.076} & \multicolumn{1}{r}{-0.016} & \multicolumn{1}{r}{0.001} & \multicolumn{1}{r|}{0.006} & \multicolumn{1}{r}{1.00} \\
    TVV ($T_w=6$) & \multicolumn{1}{r}{0.015} & \multicolumn{1}{r}{0.026} & \multicolumn{1}{r}{-0.048} & \multicolumn{1}{r}{0.063} & \multicolumn{1}{r}{0.004} & \multicolumn{1}{r}{0.010} & \multicolumn{1}{r}{0.059} & \multicolumn{1}{r}{-0.021} & \multicolumn{1}{r|}{0.023} & \multicolumn{1}{r}{1.00} \\
    TQV ($T_w=6$) & \multicolumn{1}{r}{0.062} & \multicolumn{1}{r}{0.038} & \multicolumn{1}{r}{-0.004} & \multicolumn{1}{r}{0.095} & \multicolumn{1}{r}{-0.015} & \multicolumn{1}{r}{0.006} & \multicolumn{1}{r}{0.070} & \multicolumn{1}{r}{-0.049} & \multicolumn{1}{r|}{0.020} & \multicolumn{1}{r}{1.00} \\
    RTVV ($T_w=6$) & \multicolumn{1}{r}{0.132} & \multicolumn{1}{r}{0.081} & \multicolumn{1}{r}{0.080} & \multicolumn{1}{r}{0.182} & \multicolumn{1}{r}{0.080} & \multicolumn{1}{r}{0.152} & \multicolumn{1}{r}{0.093} & \multicolumn{1}{r}{0.168} & \multicolumn{1}{r|}{0.147} & \multicolumn{1}{r}{1.00} \\
    RTQV ($T_w=6$) & \multicolumn{1}{r}{0.131} & \multicolumn{1}{r}{0.081} & \multicolumn{1}{r}{0.080} & \multicolumn{1}{r}{0.182} & \multicolumn{1}{r}{0.080} & \multicolumn{1}{r}{0.152} & \multicolumn{1}{r}{0.092} & \multicolumn{1}{r}{0.168} & \multicolumn{1}{r|}{0.147} & \multicolumn{1}{r}{1.00} \\
    RAC ($T_w=6$) & \multicolumn{1}{r}{-0.042} & \multicolumn{1}{r}{-0.063} & \multicolumn{1}{r}{-0.029} & \multicolumn{1}{r}{-0.071} & \multicolumn{1}{r}{-0.020} & \multicolumn{1}{r}{-0.013} & \multicolumn{1}{r}{0.053} & \multicolumn{1}{r}{-0.078} & \multicolumn{1}{r|}{-0.101} & \multicolumn{1}{r}{1.00} \\
    VR ($T_w=6$) & \multicolumn{1}{r}{0.103} & \multicolumn{1}{r}{0.032} & \multicolumn{1}{r}{-0.010} & \multicolumn{1}{r}{0.115} & \multicolumn{1}{r}{-0.018} & \multicolumn{1}{r}{-0.020} & \multicolumn{1}{r}{0.033} & \multicolumn{1}{r}{-0.052} & \multicolumn{1}{r|}{0.043} & \multicolumn{1}{r}{1.00} \\
    RV ($T_w=6$) & \multicolumn{1}{r}{-0.116} & \multicolumn{1}{r}{-0.104} & \multicolumn{1}{r}{0.116} & \multicolumn{1}{r}{-0.176} & \multicolumn{1}{r}{-0.036} & \multicolumn{1}{r}{-0.099} & \multicolumn{1}{r}{-0.015} & \multicolumn{1}{r}{-0.114} & \multicolumn{1}{r|}{-0.128} & \multicolumn{1}{r}{1.00} \\
    MNRV2RV ($T_w=6$) & \multicolumn{1}{r}{0.011} & \multicolumn{1}{r}{0.090} & \multicolumn{1}{r}{-0.151} & \multicolumn{1}{r}{0.058} & \multicolumn{1}{r}{0.009} & \multicolumn{1}{r}{0.044} & \multicolumn{1}{r}{-0.004} & \multicolumn{1}{r}{-0.087} & \multicolumn{1}{r|}{-0.018} & \multicolumn{1}{r}{1.00} \\
    MRV2RV ($T_w=6$) & \multicolumn{1}{r}{0.097} & \multicolumn{1}{r}{0.165} & \multicolumn{1}{r}{0.031} & \multicolumn{1}{r}{0.188} & \multicolumn{1}{r}{0.231} & \multicolumn{1}{r}{0.107} & \multicolumn{1}{r}{0.068} & \multicolumn{1}{r}{0.142} & \multicolumn{1}{r|}{0.132} & \multicolumn{1}{r}{1.00} \\
    AMI ($T_w=12$) & \multicolumn{1}{r}{-0.032} & \multicolumn{1}{r}{-0.133} & \multicolumn{1}{r}{0.162} & \multicolumn{1}{r}{-0.036} & \multicolumn{1}{r}{-0.259} & \multicolumn{1}{r}{-0.030} & \multicolumn{1}{r}{-0.021} & \multicolumn{1}{r}{0.152} & \multicolumn{1}{r|}{-0.015} & \multicolumn{1}{r}{1.00} \\
    $\text{Roll}$ ($T_w=12$) & \multicolumn{1}{r}{-0.019} & \multicolumn{1}{r}{0.005} & \multicolumn{1}{r}{-0.023} & \multicolumn{1}{r}{0.000} & \multicolumn{1}{r}{0.047} & \multicolumn{1}{r}{0.009} & \multicolumn{1}{r}{0.011} & \multicolumn{1}{r}{0.002} & \multicolumn{1}{r|}{0.009} & \multicolumn{1}{r}{1.00} \\
    $\text{Roll}^-$ ($T_w=12$) & \multicolumn{1}{r}{0.074} & \multicolumn{1}{r}{0.059} & \multicolumn{1}{r}{-0.013} & \multicolumn{1}{r}{0.081} & \multicolumn{1}{r}{0.041} & \multicolumn{1}{r}{0.031} & \multicolumn{1}{r}{0.014} & \multicolumn{1}{r}{0.003} & \multicolumn{1}{r|}{0.040} & \multicolumn{1}{r}{1.00} \\
    $\text{RollMod}$ ($T_w=12$) & \multicolumn{1}{r}{-0.021} & \multicolumn{1}{r}{0.012} & \multicolumn{1}{r}{-0.026} & \multicolumn{1}{r}{-0.040} & \multicolumn{1}{r}{0.045} & \multicolumn{1}{r}{0.006} & \multicolumn{1}{r}{0.009} & \multicolumn{1}{r}{0.002} & \multicolumn{1}{r|}{0.013} & \multicolumn{1}{r}{1.00} \\
    $\text{RollMod}^-$ ($T_w=12$) & \multicolumn{1}{r}{0.026} & \multicolumn{1}{r}{0.051} & \multicolumn{1}{r}{-0.029} & \multicolumn{1}{r}{0.020} & \multicolumn{1}{r}{0.039} & \multicolumn{1}{r}{0.018} & \multicolumn{1}{r}{0.006} & \multicolumn{1}{r}{0.002} & \multicolumn{1}{r|}{0.031} & \multicolumn{1}{r}{1.00} \\
    TVV ($T_w=12$) & \multicolumn{1}{r}{-0.125} & \multicolumn{1}{r}{-0.049} & \multicolumn{1}{r}{-0.143} & \multicolumn{1}{r}{-0.051} & \multicolumn{1}{r}{-0.132} & \multicolumn{1}{r}{-0.141} & \multicolumn{1}{r}{-0.019} & \multicolumn{1}{r}{-0.285} & \multicolumn{1}{r|}{-0.069} & \multicolumn{1}{r}{1.00} \\
    TQV ($T_w=12$) & \multicolumn{1}{r}{-0.059} & \multicolumn{1}{r}{-0.031} & \multicolumn{1}{r}{-0.096} & \multicolumn{1}{r}{-0.021} & \multicolumn{1}{r}{-0.159} & \multicolumn{1}{r}{-0.150} & \multicolumn{1}{r}{-0.008} & \multicolumn{1}{r}{-0.308} & \multicolumn{1}{r|}{-0.070} & \multicolumn{1}{r}{1.00} \\
    RTVV ($T_w=12$) & \multicolumn{1}{r}{-0.008} & \multicolumn{1}{r}{0.035} & \multicolumn{1}{r}{0.010} & \multicolumn{1}{r}{-0.001} & \multicolumn{1}{r}{0.006} & \multicolumn{1}{r}{0.033} & \multicolumn{1}{r}{0.030} & \multicolumn{1}{r}{-0.028} & \multicolumn{1}{r|}{0.056} & \multicolumn{1}{r}{1.00} \\
    RTQV ($T_w=12$) & \multicolumn{1}{r}{-0.008} & \multicolumn{1}{r}{0.036} & \multicolumn{1}{r}{0.010} & \multicolumn{1}{r}{-0.002} & \multicolumn{1}{r}{0.005} & \multicolumn{1}{r}{0.032} & \multicolumn{1}{r}{0.030} & \multicolumn{1}{r}{-0.027} & \multicolumn{1}{r|}{0.056} & \multicolumn{1}{r}{1.00} \\
    RAC ($T_w=12$) & \multicolumn{1}{r}{-0.097} & \multicolumn{1}{r}{-0.007} & \multicolumn{1}{r}{-0.063} & \multicolumn{1}{r}{-0.124} & \multicolumn{1}{r}{0.036} & \multicolumn{1}{r}{0.132} & \multicolumn{1}{r}{0.019} & \multicolumn{1}{r}{-0.054} & \multicolumn{1}{r|}{0.031} & \multicolumn{1}{r}{1.00} \\
    VR ($T_w=12$) & \multicolumn{1}{r}{0.087} & \multicolumn{1}{r}{0.029} & \multicolumn{1}{r}{0.011} & \multicolumn{1}{r}{0.046} & \multicolumn{1}{r}{0.031} & \multicolumn{1}{r}{0.043} & \multicolumn{1}{r}{0.062} & \multicolumn{1}{r}{-0.003} & \multicolumn{1}{r|}{0.085} & \multicolumn{1}{r}{1.00} \\
    RV ($T_w=12$) & \multicolumn{1}{r}{-0.171} & \multicolumn{1}{r}{-0.167} & \multicolumn{1}{r}{0.028} & \multicolumn{1}{r}{-0.241} & \multicolumn{1}{r}{-0.118} & \multicolumn{1}{r}{-0.267} & \multicolumn{1}{r}{-0.074} & \multicolumn{1}{r}{-0.227} & \multicolumn{1}{r|}{-0.258} & \multicolumn{1}{r}{1.00} \\
    MNRV2RV ($T_w=12$) & \multicolumn{1}{r}{0.034} & \multicolumn{1}{r}{0.138} & \multicolumn{1}{r}{-0.050} & \multicolumn{1}{r}{0.043} & \multicolumn{1}{r}{0.161} & \multicolumn{1}{r}{0.217} & \multicolumn{1}{r}{0.050} & \multicolumn{1}{r}{0.136} & \multicolumn{1}{r|}{0.091} & \multicolumn{1}{r}{1.00} \\
    MRV2RV ($T_w=12$) & \multicolumn{1}{r}{0.163} & \multicolumn{1}{r}{0.228} & \multicolumn{1}{r}{0.170} & \multicolumn{1}{r}{0.190} & \multicolumn{1}{r}{0.376} & \multicolumn{1}{r}{0.309} & \multicolumn{1}{r}{0.131} & \multicolumn{1}{r}{0.376} & \multicolumn{1}{r|}{0.257} & \multicolumn{1}{r}{1.00} \\
    \midrule
    total number of selected variables & \multicolumn{1}{r}{39} & \multicolumn{1}{r}{42} & \multicolumn{1}{r}{39} & \multicolumn{1}{r}{42} & \multicolumn{1}{r}{41} & \multicolumn{1}{r}{42} & \multicolumn{1}{r}{42} & \multicolumn{1}{r}{42} & \multicolumn{1}{r|}{41} &  \\
    \midrule
    \midrule
    \end{tabular}%
  \label{tab:model_estimation_results_IS_10stocks_K_MLE}%
\end{table}%

\begin{table}[htbp]
  \centering
  \caption{Empirical estimates of the regression coefficients for the scale parameter obtained with the MLE.}
    \begin{tabular}{l|llllllllll}
    \toprule
    \toprule
    \diagbox{Predictors}{Stocks} & AXP  & BA  & GE   & HD  & IBM  & JNJ  & JPM  & KO   & \multicolumn{1}{l|}{XOM} & selection per stock \\
    \midrule
    TV    & \multicolumn{1}{r}{-0.033} & \multicolumn{1}{r}{-0.138} & \multicolumn{1}{r}{0.058} & \multicolumn{1}{r}{-0.064} & \multicolumn{1}{r}{0.114} & \multicolumn{1}{r}{-0.055} & \multicolumn{1}{r}{-0.075} & \multicolumn{1}{r}{0.314} & \multicolumn{1}{r|}{0.253} & \multicolumn{1}{r}{1.00} \\
    TQ    & \multicolumn{1}{r}{0.094} & \multicolumn{1}{r}{0.325} & \multicolumn{1}{r}{-0.033} & \multicolumn{1}{r}{0.100} & \multicolumn{1}{r}{0.012} & \multicolumn{1}{r}{0.090} & \multicolumn{1}{r}{0.166} & \multicolumn{1}{r}{-0.190} & \multicolumn{1}{r|}{-0.110} & \multicolumn{1}{r}{1.00} \\
    AM    & \multicolumn{1}{r}{-0.010} & \multicolumn{1}{r}{0.018} & \multicolumn{1}{r}{0.046} & \multicolumn{1}{r}{0.039} & \multicolumn{1}{r}{0.021} & \multicolumn{1}{r}{0.006} & \multicolumn{1}{r}{0.021} & \multicolumn{1}{r}{0.029} & \multicolumn{1}{r|}{0.027} & \multicolumn{1}{r}{1.00} \\
    TVV   & \multicolumn{1}{r}{-0.045} & \multicolumn{1}{r}{0.157} & \multicolumn{1}{r}{-0.127} & \multicolumn{1}{r}{-0.018} & \multicolumn{1}{r}{0.051} & \multicolumn{1}{r}{-0.035} & \multicolumn{1}{r}{-0.224} & \multicolumn{1}{r}{-0.378} & \multicolumn{1}{r|}{-0.043} & \multicolumn{1}{r}{1.00} \\
    EAM   & \multicolumn{1}{r}{-0.026} & \multicolumn{1}{r}{-0.039} & \multicolumn{1}{r}{-0.007} & \multicolumn{1}{r}{-0.055} & \multicolumn{1}{r}{-0.060} & \multicolumn{1}{r}{-0.018} & \multicolumn{1}{r}{-0.029} & \multicolumn{1}{r}{-0.008} & \multicolumn{1}{r|}{-0.056} & \multicolumn{1}{r}{1.00} \\
    TQV   & \multicolumn{1}{r}{0.045} & \multicolumn{1}{r}{-0.312} & \multicolumn{1}{r}{0.142} & \multicolumn{1}{r}{0.011} & \multicolumn{1}{r}{-0.241} & \multicolumn{1}{r}{0.048} & \multicolumn{1}{r}{0.219} & \multicolumn{1}{r}{0.279} & \multicolumn{1}{r|}{-0.148} & \multicolumn{1}{r}{1.00} \\
    MRV   & \multicolumn{1}{r}{0.028} & \multicolumn{1}{r}{0.024} & \multicolumn{1}{r}{0.021} & \multicolumn{1}{r}{0.018} & \multicolumn{1}{r}{-0.007} & \multicolumn{1}{r}{0.019} & \multicolumn{1}{r}{0.055} & \multicolumn{1}{r}{0.023} & \multicolumn{1}{r|}{-0.015} & \multicolumn{1}{r}{1.00} \\
    MNRV  & \multicolumn{1}{r}{0.099} & \multicolumn{1}{r}{0.137} & \multicolumn{1}{r}{0.060} & \multicolumn{1}{r}{0.070} & \multicolumn{1}{r}{0.199} & \multicolumn{1}{r}{0.060} & \multicolumn{1}{r}{0.061} & \multicolumn{1}{r}{0.094} & \multicolumn{1}{r|}{0.149} & \multicolumn{1}{r}{1.00} \\
    dur   & \multicolumn{1}{r}{-0.002} & \multicolumn{1}{r}{-0.031} & \multicolumn{1}{r}{0.000} & \multicolumn{1}{r}{-0.005} & \multicolumn{1}{r}{0.046} & \multicolumn{1}{r}{-0.007} & \multicolumn{1}{r}{-0.001} & \multicolumn{1}{r}{-0.005} & \multicolumn{1}{r|}{0.003} & \multicolumn{1}{r}{1.00} \\
    AMI ($T_w=2$) & \multicolumn{1}{r}{0.030} & \multicolumn{1}{r}{0.009} & \multicolumn{1}{r}{-0.012} & \multicolumn{1}{r}{0.000} & \multicolumn{1}{r}{0.045} & \multicolumn{1}{r}{0.019} & \multicolumn{1}{r}{0.005} & \multicolumn{1}{r}{-0.045} & \multicolumn{1}{r|}{0.043} & \multicolumn{1}{r}{1.00} \\
    VR ($T_w=2$) & \multicolumn{1}{r}{-0.005} & \multicolumn{1}{r}{-0.009} & \multicolumn{1}{r}{0.000} & \multicolumn{1}{r}{-0.007} & \multicolumn{1}{r}{0.014} & \multicolumn{1}{r}{0.001} & \multicolumn{1}{r}{0.020} & \multicolumn{1}{r}{-0.014} & \multicolumn{1}{r|}{-0.007} & \multicolumn{1}{r}{1.00} \\
    RV ($T_w=2$) & \multicolumn{1}{r}{-0.037} & \multicolumn{1}{r}{-0.050} & \multicolumn{1}{r}{0.011} & \multicolumn{1}{r}{-0.006} & \multicolumn{1}{r}{-0.068} & \multicolumn{1}{r}{-0.007} & \multicolumn{1}{r}{-0.071} & \multicolumn{1}{r}{0.031} & \multicolumn{1}{r|}{-0.015} & \multicolumn{1}{r}{1.00} \\
    MNRV2RV ($T_w=2$) & \multicolumn{1}{r}{0.010} & \multicolumn{1}{r}{0.001} & \multicolumn{1}{r}{0.001} & \multicolumn{1}{r}{-0.001} & \multicolumn{1}{r}{0.026} & \multicolumn{1}{r}{-0.010} & \multicolumn{1}{r}{-0.033} & \multicolumn{1}{r}{-0.015} & \multicolumn{1}{r|}{-0.007} & \multicolumn{1}{r}{1.00} \\
    MRV2RV ($T_w=2$) & \multicolumn{1}{r}{-0.024} & \multicolumn{1}{r}{0.009} & \multicolumn{1}{r}{0.001} & \multicolumn{1}{r}{0.008} & \multicolumn{1}{r}{0.023} & \multicolumn{1}{r}{0.015} & \multicolumn{1}{r}{0.017} & \multicolumn{1}{r}{0.039} & \multicolumn{1}{r|}{-0.007} & \multicolumn{1}{r}{1.00} \\
    AMI ($T_w=6$) & \multicolumn{1}{r}{0.009} & \multicolumn{1}{r}{-0.016} & \multicolumn{1}{r}{-0.009} & \multicolumn{1}{r}{-0.009} & \multicolumn{1}{r}{-0.047} & \multicolumn{1}{r}{-0.032} & \multicolumn{1}{r}{0.006} & \multicolumn{1}{r}{0.002} & \multicolumn{1}{r|}{-0.106} & \multicolumn{1}{r}{1.00} \\
    $\text{Roll}$ ($T_w=6$) & \multicolumn{1}{r}{0.069} & \multicolumn{1}{r}{0.014} & \multicolumn{1}{r}{-0.002} & \multicolumn{1}{r}{0.040} & \multicolumn{1}{r}{0.019} & \multicolumn{1}{r}{-0.087} & \multicolumn{1}{r}{0.020} & \multicolumn{1}{r}{0.050} & \multicolumn{1}{r|}{-0.001} & \multicolumn{1}{r}{1.00} \\
    $\text{Roll}^-$ ($T_w=6$) & \multicolumn{1}{r}{-0.062} & \multicolumn{1}{r}{-0.013} & \multicolumn{1}{r}{-0.046} & \multicolumn{1}{r}{-0.090} & \multicolumn{1}{r}{-0.024} & \multicolumn{1}{r}{-0.022} & \multicolumn{1}{r}{-0.066} & \multicolumn{1}{r}{0.070} & \multicolumn{1}{r|}{-0.009} & \multicolumn{1}{r}{1.00} \\
    $\text{RollMod}$ ($T_w=6$) & \multicolumn{1}{r}{-0.114} & \multicolumn{1}{r}{0.022} & \multicolumn{1}{r}{-0.072} & \multicolumn{1}{r}{-0.027} & \multicolumn{1}{r}{-0.037} & \multicolumn{1}{r}{0.019} & \multicolumn{1}{r}{-0.058} & \multicolumn{1}{r}{0.040} & \multicolumn{1}{r|}{-0.041} & \multicolumn{1}{r}{1.00} \\
    $\text{RollMod}^-$ ($T_w=6$) & \multicolumn{1}{r}{0.127} & \multicolumn{1}{r}{-0.016} & \multicolumn{1}{r}{0.139} & \multicolumn{1}{r}{0.096} & \multicolumn{1}{r}{0.049} & \multicolumn{1}{r}{0.113} & \multicolumn{1}{r}{0.125} & \multicolumn{1}{r}{0.058} & \multicolumn{1}{r|}{0.068} & \multicolumn{1}{r}{1.00} \\
    TVV ($T_w=6$) & \multicolumn{1}{r}{0.016} & \multicolumn{1}{r}{0.045} & \multicolumn{1}{r}{-0.025} & \multicolumn{1}{r}{0.070} & \multicolumn{1}{r}{-0.033} & \multicolumn{1}{r}{0.055} & \multicolumn{1}{r}{-0.003} & \multicolumn{1}{r}{0.031} & \multicolumn{1}{r|}{-0.011} & \multicolumn{1}{r}{1.00} \\
    TQV ($T_w=6$) & \multicolumn{1}{r}{-0.031} & \multicolumn{1}{r}{-0.041} & \multicolumn{1}{r}{0.024} & \multicolumn{1}{r}{-0.078} & \multicolumn{1}{r}{0.016} & \multicolumn{1}{r}{-0.074} & \multicolumn{1}{r}{-0.010} & \multicolumn{1}{r}{-0.027} & \multicolumn{1}{r|}{-0.002} & \multicolumn{1}{r}{1.00} \\
    RTVV ($T_w=6$) & \multicolumn{1}{r}{-0.011} & \multicolumn{1}{r}{-0.010} & \multicolumn{1}{r}{-0.005} & \multicolumn{1}{r}{-0.004} & \multicolumn{1}{r}{0.014} & \multicolumn{1}{r}{0.006} & \multicolumn{1}{r}{-0.016} & \multicolumn{1}{r}{-0.003} & \multicolumn{1}{r|}{-0.009} & \multicolumn{1}{r}{1.00} \\
    RTQV ($T_w=6$) & \multicolumn{1}{r}{-0.006} & \multicolumn{1}{r}{0.012} & \multicolumn{1}{r}{-0.005} & \multicolumn{1}{r}{0.004} & \multicolumn{1}{r}{0.013} & \multicolumn{1}{r}{0.008} & \multicolumn{1}{r}{0.005} & \multicolumn{1}{r}{-0.001} & \multicolumn{1}{r|}{0.000} & \multicolumn{1}{r}{1.00} \\
    RAC ($T_w=6$) & \multicolumn{1}{r}{0.005} & \multicolumn{1}{r}{-0.010} & \multicolumn{1}{r}{0.008} & \multicolumn{1}{r}{-0.013} & \multicolumn{1}{r}{-0.012} & \multicolumn{1}{r}{-0.008} & \multicolumn{1}{r}{-0.004} & \multicolumn{1}{r}{-0.011} & \multicolumn{1}{r|}{-0.002} & \multicolumn{1}{r}{1.00} \\
    VR ($T_w=6$) & \multicolumn{1}{r}{0.050} & \multicolumn{1}{r}{0.088} & \multicolumn{1}{r}{0.138} & \multicolumn{1}{r}{0.035} & \multicolumn{1}{r}{0.080} & \multicolumn{1}{r}{0.046} & \multicolumn{1}{r}{0.012} & \multicolumn{1}{r}{0.017} & \multicolumn{1}{r|}{0.038} & \multicolumn{1}{r}{1.00} \\
    RV ($T_w=6$) & \multicolumn{1}{r}{0.020} & \multicolumn{1}{r}{-0.009} & \multicolumn{1}{r}{0.026} & \multicolumn{1}{r}{0.029} & \multicolumn{1}{r}{0.006} & \multicolumn{1}{r}{0.016} & \multicolumn{1}{r}{0.012} & \multicolumn{1}{r}{-0.064} & \multicolumn{1}{r|}{0.031} & \multicolumn{1}{r}{1.00} \\
    MNRV2RV ($T_w=6$) & \multicolumn{1}{r}{-0.163} & \multicolumn{1}{r}{-0.190} & \multicolumn{1}{r}{-0.076} & \multicolumn{1}{r}{-0.080} & \multicolumn{1}{r}{-0.077} & \multicolumn{1}{r}{-0.114} & \multicolumn{1}{r}{-0.064} & \multicolumn{1}{r}{-0.112} & \multicolumn{1}{r|}{-0.243} & \multicolumn{1}{r}{1.00} \\
    MRV2RV ($T_w=6$) & \multicolumn{1}{r}{0.081} & \multicolumn{1}{r}{0.054} & \multicolumn{1}{r}{0.014} & \multicolumn{1}{r}{0.010} & \multicolumn{1}{r}{-0.031} & \multicolumn{1}{r}{0.035} & \multicolumn{1}{r}{0.025} & \multicolumn{1}{r}{0.068} & \multicolumn{1}{r|}{0.153} & \multicolumn{1}{r}{1.00} \\
    AMI ($T_w=12$) & \multicolumn{1}{r}{0.014} & \multicolumn{1}{r}{0.038} & \multicolumn{1}{r}{0.014} & \multicolumn{1}{r}{0.016} & \multicolumn{1}{r}{0.078} & \multicolumn{1}{r}{0.018} & \multicolumn{1}{r}{0.013} & \multicolumn{1}{r}{0.017} & \multicolumn{1}{r|}{0.069} & \multicolumn{1}{r}{1.00} \\
    $\text{Roll}$ ($T_w=12$) & \multicolumn{1}{r}{0.047} & \multicolumn{1}{r}{-0.052} & \multicolumn{1}{r}{0.097} & \multicolumn{1}{r}{0.041} & \multicolumn{1}{r}{-0.017} & \multicolumn{1}{r}{0.032} & \multicolumn{1}{r}{0.089} & \multicolumn{1}{r}{0.069} & \multicolumn{1}{r|}{0.010} & \multicolumn{1}{r}{1.00} \\
    $\text{Roll}^-$ ($T_w=12$) & \multicolumn{1}{r}{-0.063} & \multicolumn{1}{r}{0.033} & \multicolumn{1}{r}{-0.071} & \multicolumn{1}{r}{-0.033} & \multicolumn{1}{r}{0.016} & \multicolumn{1}{r}{0.017} & \multicolumn{1}{r}{-0.114} & \multicolumn{1}{r}{0.063} & \multicolumn{1}{r|}{-0.050} & \multicolumn{1}{r}{1.00} \\
    $\text{RollMod}$ ($T_w=12$) & \multicolumn{1}{r}{-0.068} & \multicolumn{1}{r}{0.041} & \multicolumn{1}{r}{-0.077} & \multicolumn{1}{r}{-0.058} & \multicolumn{1}{r}{-0.023} & \multicolumn{1}{r}{-0.041} & \multicolumn{1}{r}{-0.094} & \multicolumn{1}{r}{0.063} & \multicolumn{1}{r|}{-0.037} & \multicolumn{1}{r}{1.00} \\
    $\text{RollMod}^-$ ($T_w=12$) & \multicolumn{1}{r}{0.086} & \multicolumn{1}{r}{-0.024} & \multicolumn{1}{r}{0.047} & \multicolumn{1}{r}{0.051} & \multicolumn{1}{r}{0.043} & \multicolumn{1}{r}{-0.009} & \multicolumn{1}{r}{0.118} & \multicolumn{1}{r}{0.063} & \multicolumn{1}{r|}{0.086} & \multicolumn{1}{r}{1.00} \\
    TVV ($T_w=12$) & \multicolumn{1}{r}{0.010} & \multicolumn{1}{r}{-0.092} & \multicolumn{1}{r}{0.006} & \multicolumn{1}{r}{-0.037} & \multicolumn{1}{r}{0.032} & \multicolumn{1}{r}{0.100} & \multicolumn{1}{r}{0.008} & \multicolumn{1}{r}{-0.029} & \multicolumn{1}{r|}{0.001} & \multicolumn{1}{r}{1.00} \\
    TQV ($T_w=12$) & \multicolumn{1}{r}{-0.025} & \multicolumn{1}{r}{0.046} & \multicolumn{1}{r}{-0.015} & \multicolumn{1}{r}{0.031} & \multicolumn{1}{r}{-0.034} & \multicolumn{1}{r}{-0.095} & \multicolumn{1}{r}{-0.022} & \multicolumn{1}{r}{0.015} & \multicolumn{1}{r|}{-0.011} & \multicolumn{1}{r}{1.00} \\
    RTVV ($T_w=12$) & \multicolumn{1}{r}{0.009} & \multicolumn{1}{r}{0.022} & \multicolumn{1}{r}{0.005} & \multicolumn{1}{r}{-0.001} & \multicolumn{1}{r}{0.007} & \multicolumn{1}{r}{-0.007} & \multicolumn{1}{r}{-0.002} & \multicolumn{1}{r}{0.003} & \multicolumn{1}{r|}{0.013} & \multicolumn{1}{r}{1.00} \\
    RTQV ($T_w=12$) & \multicolumn{1}{r}{0.010} & \multicolumn{1}{r}{0.044} & \multicolumn{1}{r}{-0.001} & \multicolumn{1}{r}{0.008} & \multicolumn{1}{r}{-0.003} & \multicolumn{1}{r}{-0.003} & \multicolumn{1}{r}{0.016} & \multicolumn{1}{r}{0.011} & \multicolumn{1}{r|}{0.012} & \multicolumn{1}{r}{1.00} \\
    RAC ($T_w=12$) & \multicolumn{1}{r}{0.000} & \multicolumn{1}{r}{0.016} & \multicolumn{1}{r}{-0.005} & \multicolumn{1}{r}{0.008} & \multicolumn{1}{r}{0.003} & \multicolumn{1}{r}{0.012} & \multicolumn{1}{r}{0.002} & \multicolumn{1}{r}{0.013} & \multicolumn{1}{r|}{0.009} & \multicolumn{1}{r}{1.00} \\
    VR ($T_w=12$) & \multicolumn{1}{r}{-0.072} & \multicolumn{1}{r}{-0.205} & \multicolumn{1}{r}{-0.026} & \multicolumn{1}{r}{-0.029} & \multicolumn{1}{r}{-0.056} & \multicolumn{1}{r}{-0.063} & \multicolumn{1}{r}{-0.037} & \multicolumn{1}{r}{-0.022} & \multicolumn{1}{r|}{-0.070} & \multicolumn{1}{r}{1.00} \\
    RV ($T_w=12$) & \multicolumn{1}{r}{0.038} & \multicolumn{1}{r}{0.041} & \multicolumn{1}{r}{0.001} & \multicolumn{1}{r}{0.024} & \multicolumn{1}{r}{0.066} & \multicolumn{1}{r}{0.054} & \multicolumn{1}{r}{0.022} & \multicolumn{1}{r}{0.097} & \multicolumn{1}{r|}{0.042} & \multicolumn{1}{r}{1.00} \\
    MNRV2RV ($T_w=12$) & \multicolumn{1}{r}{0.183} & \multicolumn{1}{r}{0.149} & \multicolumn{1}{r}{0.083} & \multicolumn{1}{r}{0.073} & \multicolumn{1}{r}{-0.033} & \multicolumn{1}{r}{0.194} & \multicolumn{1}{r}{0.121} & \multicolumn{1}{r}{0.131} & \multicolumn{1}{r|}{0.211} & \multicolumn{1}{r}{1.00} \\
    MRV2RV ($T_w=12$) & \multicolumn{1}{r}{-0.045} & \multicolumn{1}{r}{-0.019} & \multicolumn{1}{r}{-0.007} & \multicolumn{1}{r}{0.016} & \multicolumn{1}{r}{0.091} & \multicolumn{1}{r}{-0.060} & \multicolumn{1}{r}{-0.028} & \multicolumn{1}{r}{-0.071} & \multicolumn{1}{r|}{-0.074} & \multicolumn{1}{r}{1.00} \\
    $\log(\sigma_{t-1})$ & \multicolumn{1}{r}{0.752} & \multicolumn{1}{r}{0.480} & \multicolumn{1}{r}{0.736} & \multicolumn{1}{r}{0.674} & \multicolumn{1}{r}{0.445} & \multicolumn{1}{r}{0.739} & \multicolumn{1}{r}{0.761} & \multicolumn{1}{r}{0.645} & \multicolumn{1}{r|}{0.545} & \multicolumn{1}{r}{1.00} \\
    \midrule
    total number of selected variables & \multicolumn{1}{r}{43} & \multicolumn{1}{r}{43} & \multicolumn{1}{r}{43} & \multicolumn{1}{r}{43} & \multicolumn{1}{r}{43} & \multicolumn{1}{r}{43} & \multicolumn{1}{r}{43} & \multicolumn{1}{r}{43} & \multicolumn{1}{r|}{43} &  \\
    \midrule
    \midrule
    \end{tabular}%
  \label{tab:model_estimation_results_IS_10stocks_sigma_MLE}%
\end{table}%

\subsection{Out-of-sample VaR forecast}
The coefficient estimates $\widehat{\bm{\beta}}$ obtained on the in-sample period are used to compute a one-step ahead VaR prediction in the out-of-sample period. Specifically, the VaR of each stock at a risk level $\alpha$ at time $t$ given $\bm{x}_{t-1}$ and $\widehat{\bm{\beta}}$ is obtained as 
%
\begin{equation}
   \widehat{\text{VaR}_t}(\alpha)  = \frac{\widehat{\sigma}_t(\bm{x}_{t-1},\widehat{\bm{\beta}})}{\widehat{k}_t(\bm{x}_{t-1},\widehat{\bm{\beta}})} \left(\left( 1 - \frac{\alpha - F_t(\widehat{u}_t) }{1 - F_t(\widehat{u}_t)}\right)^{-\widehat{k}_t(\bm{x}_{t-1},\widehat{\bm{\beta}})} - 1\right) + \widehat{u}_t.
\end{equation}
where $F_t(\widehat{u}_t)$ is the probability of exceeding the threshold $\widehat{u}_t$ and is fixed to 90\%.
The coverage rate of $\{\widehat{\text{VaR}_t}(\alpha)\}_{t=T_{is} + 1}^{T_{is} + T_{os}}$ for the out-of sample period is obtained as follows,
\begin{equation}
    \text{Coverage Rate} = \frac{ \sum_{t=T_{is} + 1}^{T_{is} + T_{os}} \mathbbm{1}\{ l_t \leq  \widehat{\text{VaR}_t}(\alpha) \}}{T_{os}}.
\end{equation}
Table~\ref{tab:VaR_coverage_rates} shows the coverage rate of $\{\widehat{\text{VaR}_t}(\alpha)\}_{t=T_{is} + 1}^{T_{is} + T_{os}}$ at the risk level $\alpha$ for various $\alpha\in[90\%,100\%)$. 
We resort to the Kolmogorov–Smirnov (K-S) test to test the goodness of fit of the predicted GPD over the out-of-sample period, i.e., we test whether $\{\widehat{F}(y_t | y_t >0 ) = \text{GPD}(y_t ;\bm{x}_{t-1},\widehat{\bm{\beta}})\}$ follows a standard uniform distribution. 
The p-values of the K-S tests in Table~\ref{tab:VaR_coverage_rates} indicate that we reject the regression model on three stocks out of nine at the 1\% significance level. 

\begin{table}[htbp]
  \centering
  \caption{Out-of-sample VaR Coverage Rates and p-values for the K-S Test.}
  \resizebox{0.9\textwidth}{!}{%
    \begin{tabular}{lrrrrrrrrrrrr|r}
    \toprule
    \toprule
     \diagbox{Stock Names}{VaR risk level} & 0.9 & 0.91  & 0.92  & 0.93  & 0.94  & 0.95  & 0.96  & 0.97  & 0.98  & 0.99  & 0.999 & 0.9999 & \multicolumn{1}{l}{K-S Test p-values} \\
    \midrule
    AXP   & 0.8999 & 0.9111 & 0.9224 & 0.9314 & 0.9414 & 0.9520 & 0.9631 & 0.9729 & 0.9827 & 0.9913 & 0.9986 & 0.9996 & 0.208 \\
    BA    & 0.9000 & 0.9098 & 0.9210 & 0.9309 & 0.9417 & 0.9522 & 0.9630 & 0.9738 & 0.9836 & 0.9930 & 0.9985 & 0.9996 & 0.0431 \\
    GE    & 0.8999 & 0.9171 & 0.9254 & 0.9376 & 0.9475 & 0.9622 & 0.9709 & 0.9811 & 0.9889 & 0.9943 & 0.9989 & 0.9999 & $2.68\times10^{-12}$ \\
    HD    & 0.9001 & 0.9097 & 0.9217 & 0.9339 & 0.9453 & 0.9551 & 0.9652 & 0.9756 & 0.9840 & 0.9934 & 0.9990 & 0.9998 & 0.0023 \\
    IBM   & 0.9003 & 0.9094 & 0.9204 & 0.9312 & 0.9414 & 0.9509 & 0.9614 & 0.9702 & 0.9809 & 0.9905 & 0.9980 & 0.9993 & 0.8954 \\
    JNJ   & 0.9000 & 0.9083 & 0.9195 & 0.9284 & 0.9382 & 0.9480 & 0.9571 & 0.9682 & 0.9786 & 0.9888 & 0.9977 & 0.9995 & 0.3322 \\
    JPM   & 0.9000 & 0.9094 & 0.9176 & 0.9294 & 0.9385 & 0.9500 & 0.9612 & 0.9711 & 0.9815 & 0.9904 & 0.9982 & 0.9997 & 0.3993 \\
    KO    & 0.8996 & 0.9109 & 0.9180 & 0.9273 & 0.9380 & 0.9489 & 0.9594 & 0.9709 & 0.9813 & 0.9909 & 0.9980 & 0.9994 & 0.3786 \\
    XOM   & 0.8988 & 0.9056 & 0.9136 & 0.9218 & 0.9303 & 0.9396 & 0.9502 & 0.9630 & 0.9742 & 0.9863 & 0.9972 & 0.9992 & $6.32\times10^{-8}$ \\
    \bottomrule
    \bottomrule
    \end{tabular}%
    }
  \label{tab:VaR_coverage_rates}%
\end{table}%

\section{Conclusion}
\label{sec:conclusion}
This paper proposes a novel extreme value regression framework to study the dynamics of high-frequency tail risk. The proposed model allows for both stationary and local unit-root predictors to capture the persistence of high-frequency extreme losses. We propose a two-step regularized approach to perform automatic variable selection, and establish the oracle property of the corresponding ALMLE in selecting stationary and local unit-root predictors. We use the proposed approach to investigate the predictive content of 42 liquidity and volatility indicators on the distribution of extreme losses for nine large liquid U.S. stocks. Our variable selection procedure reveals that the severity of tail risk is strongly associated to low price impact in periods of high volatility of liquidity and volatility of volatility. These findings can contribute to timely alert high-frequency traders of rising risk levels and facilitate improvements of their algorithmic trading practices for financial risk management. Moreover, it provides incentives for market markers to absorb liquidity demand in periods of instability. Finally, it suggests a set of predictors to regulators investigating trading activities that can help defining proper regulation guidelines and safeguard financial stability.

\bibliographystyle{apalike}
\bibliography{biblio}
\newpage
\appendix
\section*{Appendices}
\section{Proofs}
\label{appendix:proofs}
\subsection{MLE - Gradient and Hessian matrix of the loglikelihood function $\mathcal{L}(\cdot)$}
\label{append:MLE}
\label{append:MLE_gradien_hessian}
The gradient function of  $\mathcal{L}(\bm{\beta}; \{ y_t \}, \{ \bm{z}_{t-1} \})$ is given by
\begin{align}
  & \frac{\partial \mathcal{L}(\bm{\beta}; \{ y_t \}, \{ \bm{z}_{t-1} \})}{\partial \bm{\beta}} 
  = 
  \sum_{t = 1}^T \bm{\psi}_t (\bm{\beta}),
  \\
  & \bm{\psi}_t (\bm{\beta}) 
  := \mathbbm{1}\{y_t>0\} \left( \,
   \bm{X}_{t-1} \begin{bmatrix}
   g_{k,t}(\bm{\beta})
  \\
  \vdots
  \\
  g_{k,t}(\bm{\beta})
  \\
 g_{\sigma,t}(\bm{\beta})
  \\
  g_{\sigma,t}(\bm{\beta})
  \\
  \vdots
  \\
   g_{\sigma,t}(\bm{\beta})
  \\
 g_{\sigma,t}(\bm{\beta})
  \end{bmatrix}
  +
  g_{\sigma,t}(\bm{\beta}) \sum_{i = 0}^{t-1}\begin{bmatrix}
  0
  \\
  \vdots
  \\
  0
  \\
 \beta_{2,p+1}^{i}
 \\
 \beta_{2,p+1}^{i} z_{1,t-1-i} 
  \\
  \vdots
  \\
 \beta_{2,p+1}^{i} z_{p,t-1-i} 
 \\
 \sum_{j=1}^{p}\beta_{2,j}\, i\,\beta_{2,p+1}^{i-1}\,z_{j,t-1-i}
  \end{bmatrix} 
  \right), 
 \end{align}
\begin{equation}
\begin{aligned}
  & \bm{X}_t := \text{diag}(\bm{x}_t),
  \\
  & \bm{x}_t : =[1, \bm{z}_t', 1, \bm{z}_t', \log(\sigma_t(\bm{\beta}))]' , 
\\
& g_{k,t}(\bm{\beta}) \resizebox{.9\hsize}{!}{$ := 
    \left( \frac{1}{k_t(\bm{\beta})}  - 2\right)\log\left(1 + k_t(\bm{\beta})\frac{y_t}{\sigma_t(\bm{\beta})}\right)  + \left( \frac{1}{k_t(\bm{\beta})} - 1 - 2 k_t(\bm{\beta})\right)\left(\frac{1}{1 + k_t(\bm{\beta})\frac{y_t}{\sigma_t(\bm{\beta})}} - 1 \right)$},
    \\
& g_{\sigma,t}(\bm{\beta})  :=   
    \frac{1}{ k_t(\bm{\beta})}\left( 1 - \frac{1}{1 + k_t(\bm{\beta})\frac{y_t}{\sigma_t(\bm{\beta})}}\right) - \frac{1}{1 + k_t(\bm{\beta})\frac{y_t}{\sigma_t(\bm{\beta})}},
    \\
& g_{A,i,t} := \resizebox{.9\hsize}{!}{$ \left[ 0, \ldots, 0 , \beta_{2,p+1}^{i}, \beta_{2,p+1}^{i} z_{1,t-1-i}, \ldots, \beta_{2,p+1}^{i} z_{p,t-1-i},
 \sum_{j=1}^{p}\beta_{2,j}\, i\,\beta_{2,p+1}^{i-1}\,z_{j,t-1-i}\right]', $}
\end{aligned}
  \label{eq:gradient_loglik}
\end{equation}
where $\text{diag}(\bm{x}_t)$ denotes the square diagonal matrix with the elements of vector $\bm{x}_t$ on the main diagonal.  
Using the gradient information in ~\eqref{eq:gradient_loglik}, we give the Hessian matrix of $\mathcal{L}(\bm{\beta}; \{ y_t \}, \{ \bm{z}_{t-1} \})$ below.
\begin{equation}
\label{eq:hessian_loglik}
\begin{aligned}
H_{\mathcal{L}}(\bm{\beta})     & := 
        \frac{\partial^2 \mathcal{L}(\bm{\beta}; \{ y_t \}, \{ \bm{z}_{t-1} \})}{\partial \bm{\beta}\partial \bm{\beta}' }
         = 
         \sum_{t = 1}^T \frac{\partial {\psi}_t (\bm{\beta})}{\partial \bm{\beta}'}
        \\
\frac{\partial {\psi}_t (\bm{\beta})}{\partial \bm{\beta}'}
        & = 
    \mathbbm{1}\{y_t>0\} \left( \,
    \bm{X}_{t-1} H_g
    + 
 \sum_{i = 0}^{t-1} \text{diag}\left( g_{A,i,t} \right) \left( \bm{I}_{(2p+3)\times 1}\bigotimes \frac{\partial g_{\sigma,t}(\bm{\beta})}{\partial \bm{\beta}'} \right)
   + 
   g_{\sigma,t}(\bm{\beta}) H_A \right),  
  \\
  H_g 
  & :=
  \begin{bmatrix}
   \frac{\partial g_{k,t}(\bm{\beta})}{\partial \bm{\beta}'}
  \\
  \vdots
  \\
  \frac{\partial g_{k,t}(\bm{\beta})}{\partial \bm{\beta}'}
  \\
 \frac{\partial g_{\sigma,t}(\bm{\beta})}{\partial \bm{\beta}'}
  \\
  \vdots
  \\
  \frac{\partial g_{\sigma,t}(\bm{\beta})}{\partial \bm{\beta}'}
  \end{bmatrix},
   \qquad 
   H_A
   := \resizebox{.6\hsize}{!}{$
   \sum_{i = 0}^{t-1} \begin{bmatrix}
  0 & \dots & 0 & 0 & 0 & \dots & 0 & 0
  \\
 0 & \dots & 0 & \vdots &  & \ddots & \vdots& \vdots
  \\
 0 & \dots & 0 & 0 & 0 & \dots & 0 & 0
  \\
 0 & \dots & 0 & 0 & 0 & \dots & 0 & i\,\beta_{2,p+1}^{i-1}
 \\
 0 & \dots & 0 &0 & 0 & \dots & 0 & i\,\beta_{2,p+1}^{i-1} z_{1,t-1-i} 
  \\
 0 & \dots & 0 & \vdots &  & \ddots & \vdots & \vdots
  \\
 0 & \dots & 0 & 0 & 0 & \dots & 0 &  i\,\beta_{2,p+1}^{i-1} z_{p,t-1-i} 
 \\
0 & \dots & 0 & i\,\beta_{2,p+1}^{i-1} & i\,\beta_{2,p+1}^{i-1}\,z_{1,t} & \dots & i\,\beta_{2,p+1}^{i-1}\,z_{p,t} & \sum_{j=1}^{p}\beta_{2,j}\, i\,(i-1)\,\beta_{2,p+1}^{i-2}\,z_{j,t-i-i}
  \end{bmatrix} 
  $}
  \end{aligned}
\end{equation}

\begin{equation}
\begin{aligned}
  \frac{\partial g_{k,t}(\bm{\beta})}{\partial \bm{\beta}'}
  & = \,\begin{bmatrix}
   \frac{\partial g_{k,t}(\bm{\beta})}{\partial \text{logit}(k_t)}
   & 
   \dots
   &
   \frac{\partial g_{k,t}(\bm{\beta})}{\partial \text{logit}(k_t)}
   &
   \frac{\partial g_{k,t}(\bm{\beta})}{\partial \log(\sigma_t)}
   &
   \dots
   &
   \frac{\partial g_{k,t}(\bm{\beta})}{\partial \log(\sigma_t)}
   \end{bmatrix}\,\text{diag}(\bm{X}_{t-1})
     \\
  \frac{\partial g_{\sigma,t}(\bm{\beta})}{\partial \bm{\beta}'}
  & = \,\begin{bmatrix}
   \frac{\partial g_{\sigma,t}(\bm{\beta})}{\partial \text{logit}(k_t)}
   & 
   \dots
   &
   \frac{\partial g_{\sigma,t}(\bm{\beta})}{\partial \text{logit}(k_t)}
   &
   \frac{\partial g_{\sigma,t}(\bm{\beta})}{\partial \log(\sigma_t)}
   &
   \dots
   &
   \frac{\partial g_{\sigma,t}(\bm{\beta})}{\partial \log(\sigma_t)}
   \end{bmatrix}\,\text{diag}(\bm{X}_{t-1})
  \end{aligned}
\end{equation}

\begin{equation}
\label{eq:H_k_sig}
\begin{aligned}
   \frac{\partial g_{k,t}(\bm{\beta})}{\partial \text{logit}(k_t)}
   & =  (2 - \frac{1}{k_t})\log(1 + k_t\frac{y_t}{\sigma_t})
   +
   (1 + k_t\frac{y_t}{\sigma_t})^{-1}( (1 - 2\,k_t)^2 - (\frac{1}{k_t} - 2\,k_t)(1- 2\,k_t) )
   \\
   & + 
   (1 + k_t\frac{y_t}{\sigma_t})^{-2}\frac{y_t}{\sigma_t}(2k_t^2 + k_t-1)
   \\
    \frac{\partial g_{k,t}(\bm{\beta})}{\partial \log(\sigma_t)}
   & = 
   k_t\frac{y_t}{\sigma_t}( (1-\frac{1}{k_t})(1 + k_t\frac{y_t}{\sigma_t})^{-1} + (\frac{1}{k_t} -1 -2\,k_t) (1 + k_t\frac{y_t}{\sigma_t})^{-2} )
   \\
   \frac{\partial g_{\sigma,t}(\bm{\beta})}{\partial \text{logit}(k_t)}
   & = (1 + k_t\frac{y_t}{\sigma_t})^{-2} \frac{y_t}{\sigma_t}(1+k_t)(1 - 2\,k_t)
   + 
   ((1 + k_t\frac{y_t}{\sigma_t})^{-1} -1)(1-2\,k_t)\frac{1}{k_t}
   \\
   \frac{\partial g_{\sigma,t}(\bm{\beta})}{\partial \log(\sigma_t)}
   & = -\frac{y_t}{\sigma_t} (1 + k_t\frac{y_t}{\sigma_t})^{-2}(1+k_t),
\end{aligned}
\end{equation}
where $\bigotimes$ denotes the Kronecker product operator, and we suppress the coefficients in $k_t:=k_t(\bm{\beta)}$, $\sigma_t:=\sigma_t(\bm{\beta)}$ and the conditional variables in $ {\psi}_t(\bm{\beta}) := {\psi} (\bm{\beta}; L_{t-1}, \bm{Z}_{t-1}, L_{t-2}, \bm{Z}_{t-2}, \ldots) $ for ease of the notations. We also denote specifically that $k_t^o:=k_t(\bm{\beta^o})$ and $\sigma_t^o:=\sigma_t(\bm{\beta}^o)$ to ease the notation.

\subsection{MLE - Proofs}
\label{append:MLE_proofs}
\begin{proposition}
\label{prop:assumpB2_continuity}
Under Assumptions~\ref{assump:prob_space}, \ref{assump:correct_model_speci}, \ref{assump:linear_processes_Zt}, \ref{assump:limiting_Zt}(1) and \ref{assump:narrow_Theta}, for any $\varepsilon >0$ and any interior $ \widetilde{\bm{\beta}}$ in $\Theta$, it holds that there exists an $\delta >0$ such that
\begin{equation}
 \sup_{ \norm{\bm{\beta} -  \widetilde{\bm{\beta}}} < \delta} \norm{\bm{\psi}_t(\bm{\beta}) - \bm{\psi}_t( \widetilde{\bm{\beta}} )} <\varepsilon,
\end{equation}
where $\norm{\cdot}$ is the Euclidean norm.
\end{proposition}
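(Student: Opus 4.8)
The plan is to read this as a \emph{pathwise} continuity statement for the score contribution $\bm{\psi}_t(\bm{\beta})$ at a fixed $t$: for (almost) every realization the deterministic map $\bm{\beta}\mapsto\bm{\psi}_t(\bm{\beta})$ is continuous, with $\delta$ allowed to depend on the path. The approach is to exhibit $\bm{\psi}_t$ from the explicit gradient in~\eqref{eq:gradient_loglik} as a finite combination of elementary continuous functions of $\bm{\beta}$, and then invoke continuity at the interior point $\widetilde{\bm{\beta}}$. The only dependence on $\bm{\beta}$ enters through $k_t(\bm{\beta})$, $\sigma_t(\bm{\beta})$, the diagonal matrix $\bm{X}_{t-1}$ (whose non-constant entries are the fixed predictors $\bm{z}_{t-1}$ and $\log\sigma_{t-1}(\bm{\beta})$), and the autoregressive block $g_{A,i,t}$; the indicator $\mathbbm{1}\{y_t>0\}$ and the realized quantities $y_t,\bm{z}_{t-1-i},\log\sigma_0$ are constant in $\bm{\beta}$. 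So it suffices to verify continuity of each building block on the interior of $\Theta$ and to compose.

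First I would record that $k_t(\bm{\beta})$ in~\eqref{Model:k_sigma_models_rewrite} is a logistic transform of an affine function of $\bm{\beta}$, hence $C^\infty$ and, on the compact $\Theta$ of Assumption~\ref{assump:narrow_Theta} with a.s.\ finite predictors (Assumptions~\ref{assump:linear_processes_Zt} and~\ref{assump:limiting_Zt}(1)), valued in a compact subinterval of $(0,0.5)$, so that $1/k_t(\bm{\beta})$ is itself continuous and bounded. For $\sigma_t(\bm{\beta})$ I would unfold the recursion: since $t$ is finite, $\log\sigma_t(\bm{\beta})=\sum_{i=0}^{t-1}\beta_{2,p+1}^{\,i}\bigl(\beta_{2,0}+\sum_{j=1}^{p}\beta_{2,j}z_{j,t-1-i}\bigr)+\beta_{2,p+1}^{\,t}\log\sigma_0$, a finite sum that is polynomial in $\beta_{2,p+1}$ with coefficients affine in the remaining parameters, hence continuous. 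Compactness of $\Theta$ keeps $\beta_{2,p+1}$ in a compact subinterval of $(-1,1)$, so $\log\sigma_t(\bm{\beta})$ is bounded on $\Theta$ and $\sigma_t(\bm{\beta})$ is bounded away from $0$ and $\infty$.

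Next I would substitute these into $g_{k,t}$, $g_{\sigma,t}$ and $g_{A,i,t}$. The crucial observation is that, because $k_t(\bm{\beta})>0$, $y_t\ge 0$ and $\sigma_t(\bm{\beta})>0$, the argument $1+k_t(\bm{\beta})\,y_t/\sigma_t(\bm{\beta})\ge 1$ stays bounded away from zero, so the terms $\log\bigl(1+k_t y_t/\sigma_t\bigr)$ and $\bigl(1+k_t y_t/\sigma_t\bigr)^{-1}$ appearing in $g_{k,t},g_{\sigma,t}$ are continuous in $\bm{\beta}$. Finite sums and products of continuous functions are continuous, and the inner sum $\sum_{i=0}^{t-1}$ is finite for fixed $t$, so $\bm{\psi}_t(\bm{\beta})$ is continuous on the interior of $\Theta$. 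I expect the only non-routine verifications to be exactly these domain conditions---positivity of the denominator, boundedness of $\sigma_t$, and $k_t$ bounded away from $0$ and $0.5$---all of which follow from the sign constraints $0<k_t<0.5$, $\sigma_t>0$ together with the compactness of Assumption~\ref{assump:narrow_Theta}; the rest is routine composition.

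Finally, continuity of $\bm{\psi}_t$ at the interior point $\widetilde{\bm{\beta}}$ delivers the statement directly. Given $\varepsilon>0$, I would pick $\delta>0$ small enough that the open ball $\{\bm{\beta}:\norm{\bm{\beta}-\widetilde{\bm{\beta}}}<\delta\}$ lies inside $\Theta$ (possible since $\widetilde{\bm{\beta}}$ is interior) and that $\norm{\bm{\psi}_t(\bm{\beta})-\bm{\psi}_t(\widetilde{\bm{\beta}})}<\varepsilon/2$ throughout the ball; taking the supremum then yields a bound $\le\varepsilon/2<\varepsilon$, as required. This proposition serves to verify the continuity hypothesis needed for the uniform convergence argument underpinning the consistency result of Theorem~\ref{thm:MLE_consistency}.
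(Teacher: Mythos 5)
Your argument is correct and follows essentially the same route as the paper's: exhibit $\bm{\psi}_t$ as a composition of elementary continuous functions of $\bm{\beta}$, and check the domain conditions ($k_t$ bounded away from $0$ and $0.5$, $\sigma_t$ bounded away from $0$ and $\infty$, hence $1+k_t y_t/\sigma_t\ge 1$) using the compactness of $\Theta$ from Assumption~\ref{assump:narrow_Theta} and the a.s.\ finiteness of the predictors. You are in fact more explicit than the paper on the two points that actually need checking — unfolding the $\log\sigma_t$ recursion into a finite polynomial in $\beta_{2,p+1}$, and locating exactly where each denominator stays away from zero. The one substantive difference is interpretive: the paper reads the proposition as \emph{uniform} continuity on $\Theta$ and therefore supplements the continuity argument with a boundedness-in-probability ("tightness") claim for $g_{k,t}$, $g_{\sigma,t}$ and $g_{A,i,t}$ (via Jensen's inequality for $g_{k,t}$); that tightness is then cited separately in the proofs of Proposition~\ref{prop:mean_of_score} and Theorem~\ref{thm:MLE_asym}. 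Your pathwise reading proves exactly the displayed inequality (which is literally continuity at the interior point $\widetilde{\bm{\beta}}$, with $\delta$ realization-dependent), so it is faithful to the statement, but it does not deliver the boundedness by-product the paper packages into this proposition and relies on downstream; if you wanted to match that, a one-line addition noting that each building block is $O_p(1)$ uniformly over $\Theta$ (which your domain checks already essentially establish) would close the gap.
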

\begin{proof}
This proposition claims that $\bm{\psi}_t(\cdot)$ is uniformly continuous in $\bm{\beta}\in\Theta$. To prove this proposition, we show that $\bm{\psi}_t(\cdot)$ is continuous and tight in $\bm{\beta}\in\Theta$ below.
\\
First, viewing the formula of $\bm{\psi}_t(\cdot)$ in Appendix~\ref{append:MLE_gradien_hessian}, we know that $\bm{\psi}_t(\cdot)$ is a composition of continuous functions in $\bm{\beta}$ and hence is continuous in $\bm{\beta}$.
Secondly, by Assumptions~\ref{assump:prob_space}, \ref{assump:linear_processes_Zt}, \ref{assump:limiting_Zt}(1) and~\ref{assump:narrow_Theta}, we know that $\bm{\beta}$ and $\bm{z}_{t-1}$ are bounded in probability and thereby $k_t(\cdot)$ and $\sigma_t(\cdot)$ are bounded and their lower bounds are above zero in probability. It follows that we get that $g_{\sigma,t}(\cdot)$ and $g_{A,i,t}(\cdot)$ are also bounded in probability. Additionally, under Assumption~\ref{assump:correct_model_speci}, we know that $|y_t|$ is bounded in probability. By Jensen's inequality, we obtain that $g_{k,t}(\cdot)$ is bounded in probability. Therefore, we obtain that $\bm{\psi}_t(\cdot)$ is bounded in probability in $\bm{\beta}\in\Theta$ and thereby accomplish this proof. 
\end{proof}
\begin{proposition}
\label{prop:mean_of_score}
Under Assumptions~\ref{assump:prob_space}, \ref{assump:correct_model_speci}, \ref{assump:linear_processes_Zt}, \ref{assump:limiting_Zt}(1) and \ref{assump:narrow_Theta}, it holds that $\{\mathbb{E}[\bm{\psi}_t(\bm{\beta})], \bm{\beta}\in\Theta\}$ has unique zero at $\bm{\beta} = \bm{\beta}^{o*}$ and for any $\epsilon >0$,
\begin{equation}
\label{eq:score_truecoefs_convergence_inprob}
    \lim_{T\rightarrow\infty} P\left\{\norm{\frac{1}{T}\,\sum_{t=1}^{T}\; \bm{\psi}_t(\bm{\beta}^{o*}) }>\epsilon \right\} = 0,  
\end{equation} 
\end{proposition}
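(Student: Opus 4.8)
The plan is to exploit the fact that, evaluated at the truth, the score $\{\bm{\psi}_t(\bm{\beta}^{o*})\}$ is a martingale difference sequence with respect to $\{\mathcal{F}_t\}$. I would establish this first and then use it for both assertions of the proposition.

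For the zero at $\bm{\beta}^{o*}$, I would compute $\mathbb{E}[\bm{\psi}_t(\bm{\beta}^{o*})\mid\mathcal{F}_{t-1}]$ directly from the expression in Appendix~\ref{append:MLE_gradien_hessian}. On the event $\{y_t=0\}$ the factor $\mathbbm{1}\{y_t>0\}$ annihilates the term, while on $\{y_t>0\}$ the quantity $\bm{\psi}_t(\bm{\beta}^{o*})$ is exactly the gradient $\nabla_{\bm{\beta}}\log f_t(y_t;\bm{\beta}^{o*})$, and Assumption~\ref{assump:correct_model_speci} guarantees that, conditionally on $\mathcal{F}_{t-1}$ and on $\{y_t>0\}$, the variable $y_t$ has the $\text{GPD}(\cdot;k_t,\sigma_t)$ density evaluated at $\bm{\beta}^{o*}$. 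The usual score identity $\int\nabla_{\bm{\beta}}\log f_t\cdot f_t\,dy=\nabla_{\bm{\beta}}\int f_t\,dy=0$ then produces a vanishing conditional mean; the interchange of differentiation and integration is legitimate because $0<k_t<1/2$ keeps the GPD density smooth in its parameters and supplies the moments needed for dominated convergence, as in the regularity conditions of \cite{smith1985maximum}. Multiplying by $P\{y_t>0\mid\mathcal{F}_{t-1}\}$ yields $\mathbb{E}[\bm{\psi}_t(\bm{\beta}^{o*})\mid\mathcal{F}_{t-1}]=0$, and iterated expectations give $\mathbb{E}[\bm{\psi}_t(\bm{\beta}^{o*})]=0$. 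For uniqueness I would read the normalized expected score $\bm{\beta}\mapsto T^{-1}\sum_t\mathbb{E}[\bm{\psi}_t(\bm{\beta})]$ as the gradient of the expected normalized log-likelihood. The definiteness of $H_{\mathcal{L}}$ throughout the compact $\Theta$ (Assumption~\ref{assump:narrow_Theta}), together with the positive definiteness of $T^{-1}\sum_t\bm{z}_t^*\bm{z}_t^{*'}$ (Assumption~\ref{assump:limiting_Zt}(1)) which secures identifiability of the regression directions, makes this objective strictly concave on $\Theta$; hence its gradient can vanish at only one point, and since it vanishes at $\bm{\beta}^{o*}$ this is the unique zero.

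For the convergence in probability I would use the martingale structure again. Because the increments of $S_T:=\sum_{t=1}^T\bm{\psi}_t(\bm{\beta}^{o*})$ are orthogonal, $\mathrm{Var}(T^{-1}S_T)=T^{-2}\sum_{t=1}^T\mathbb{E}[\bm{\psi}_t(\bm{\beta}^{o*})\bm{\psi}_t(\bm{\beta}^{o*})']$, so it suffices to bound $\mathbb{E}[\norm{\bm{\psi}_t(\bm{\beta}^{o*})}^2]$ uniformly in $t\le T$. The functions $g_{k,t}$ and $g_{\sigma,t}$ have finite conditional second moments because $0<k_t<1/2$ yields a finite-variance GPD, and the design vector $\bm{x}_{t-1}$ --- including the autoregressive entry $\log(\sigma_{t-1})$ and the geometric sum appearing in the score --- has bounded second moment since $|\beta_{2,p+1}|<1$ makes the weights summable and the standardization by $\sqrt{T}$ keeps $\mathbb{E}[(z_{j,t}/\sqrt{T})^2]\le\Sigma^{(0)}_{jj}$ uniformly for $t\le T$. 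This gives $\sum_{t=1}^T\mathbb{E}[\norm{\bm{\psi}_t(\bm{\beta}^{o*})}^2]=O(T)$, whence $\mathrm{Var}(T^{-1}S_T)=O(T^{-1})\to0$, and Chebyshev's inequality delivers \eqref{eq:score_truecoefs_convergence_inprob}.

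The main obstacle is the non-stationarity introduced by the unit-root predictors: their laws depend on both $t$ and $T$, so no stationary--ergodic law of large numbers applies and the moment bounds must be tracked explicitly. The decisive point is that dividing the $I(1)$ regressors by $\sqrt{T}$ keeps their second moments bounded uniformly over $t\le T$, which is exactly what forces the predictable quadratic variation to grow only at rate $O(T)$; normalizing the martingale by $T$ rather than $\sqrt{T}$ then annihilates it. A secondary difficulty is controlling the infinite-horizon autoregressive contributions in the score, where the geometric decay guaranteed by $|\beta_{2,p+1}|<1$ (hence by Assumption~\ref{assump:narrow_Theta}) is essential for square-integrability.
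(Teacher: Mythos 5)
Your proposal is correct and shares the paper's skeleton: the score at $\bm{\beta}^{o*}$ is a martingale difference sequence by the conditional score identity, a law of large numbers then yields \eqref{eq:score_truecoefs_convergence_inprob}, and the curvature condition on $H_{\mathcal{L}}$ gives uniqueness. Both halves are, however, executed by different means. For the convergence, the paper invokes Theorem~1 of Cs\"org\H{o} (1968), verifying its hypotheses via the tightness of $\bm{\psi}_t(\cdot)$ from Proposition~\ref{prop:assumpB2_continuity}; you instead use orthogonality of the martingale increments plus Chebyshev, which requires the uniform second-moment bound $\sup_{t\le T}\mathbb{E}\bigl[\norm{\bm{\psi}_t(\bm{\beta}^{o*})}^2\bigr]=O(1)$ that you derive from $0<k_t<1/2$, the $\sqrt{T}$-standardization of the $I(1)$ regressors, and the geometric decay implied by $|\beta_{2,p+1}|<1$. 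This is the more elementary and self-contained route, and it makes explicit the moment bookkeeping that the paper leaves implicit (the paper passes from boundedness in probability of $\bm{\psi}_t$ to a bound on its expectation without comment, which is the same soft spot as your unproved second-moment claim — neither argument fully closes it). For uniqueness, the paper argues by contradiction through the mean value theorem applied to the \emph{random} score, concluding $\bm{\psi}_t(\bm{\beta})\neq\bm{\psi}_t(\bm{\beta}^{o*})$ almost surely from positive definiteness of $H_{\mathcal{L}}(\overline{\bm{\beta}})$, whereas you phrase the same curvature fact as strict concavity of the expected normalized log-likelihood, whose gradient can vanish at most once. Your version is arguably cleaner: the paper's step from an almost-sure inequality of the random scores to an inequality of their expectations is not spelled out, while passing the quadratic form $(\bm{\beta}-\bm{\beta}^{o*})'H_{\mathcal{L}}(\overline{\bm{\beta}})(\bm{\beta}-\bm{\beta}^{o*})>0$ through the expectation settles the point directly.
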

\begin{proof}
First, the equality~\eqref{eq:score_truecoefs_convergence_inprob} claims that $\frac{1}{T}\,\sum_{t=1}^{T}\; \bm{\psi}_t(\bm{\beta}^{o*})$ converges to zero in probability. Let us prove the equality~\eqref{eq:score_truecoefs_convergence_inprob} using the conditions in Theorem~1 of~\cite{csorgHo1968strong}.
\\
Under Assumptions~\ref{assump:prob_space} and \ref{assump:correct_model_speci}, we take the expectation with respect to the true conditional probability function of $y_t$ and obtain that 
\begin{equation}
    \mathbb{E}\left[ \bm{\psi}_t(\bm{\beta}^{o*})\middle| \mathcal{F}_{t-1} \right] = \bm{0}, \qquad \text{for}\; t=1,\ldots, T.
\end{equation}
We also have that 
\begin{equation}
\begin{aligned}
    \,\sum_{t=1}^{T}\; \frac{1}{t^2}\mathbb{E}\left[\bm{\psi}_t(\bm{\beta}^{o*}) \right]
    & \leq \sum_{t=1}^{T}\; \frac{1}{t^2}\, M\,\bm{1}_{2p+3}
    \\
    & < \infty,
\end{aligned}
\end{equation}
where $M\in\mathbb{R}$ is a large finite number and $\bm{1}_{2p+3}$ denotes a vector of $(2p+3)$ ones; the second last inequality is obtained by the tightness of $\bm{\psi}_t(\cdot)$ in Proposition~\ref{prop:assumpB2_continuity}; the last inequality is obtained by $\sum_{t=1}^{T}\; \frac{1}{t^2}<\infty$.
Therefore, by applying Theorem~1 of~\cite{csorgHo1968strong} we conclude that $\frac{1}{T}\,\sum_{t=1}^{T}\; \bm{\psi}_t(\bm{\beta}^{o*})$ converges to zero in probability.
\\
Secondly, we are going to show the uniqueness of $\bm{\beta}^{o*}$ in $\Theta$ such that $\mathbb{E}[\bm{\psi}_t(\bm{\beta}^{o*})] = 0$ by contradiction. Suppose there is a $\bm{\beta}\in\Theta$ and $\bm{\beta}\neq\bm{\beta}^{o*}$ such that 
\begin{equation}
  \mathbb{E}[\bm{\psi}_t(\bm{\beta})] = 0 .
  \label{eq:contradict_zero_beta}
\end{equation}
On the other hand, by the mean value theorem it holds that 
\begin{equation}
\bm{\psi}_t(\bm{\beta}) - \bm{\psi}_t(\bm{\beta}^{o*})
    =
    H_{\mathcal{L}}(\overline{\bm{\beta}})\left( \bm{\beta} - \bm{\beta}^{o*}\right), 
\end{equation}
where $\overline{\bm{\beta}}$ is between $\bm{\beta}$ and $\bm{\beta}^{o*}$. Since $H_{\mathcal{L}}(\cdot)$ is positive definite in $\Theta$ almost surely by Assumption~\ref{assump:narrow_Theta} and $\bm{\beta} \neq \bm{\beta}^{o*}$, we obtain that $\bm{\psi}_t(\bm{\beta}) \neq \bm{\psi}_t(\bm{\beta}^{o*})$ almost surely and contradict~\eqref{eq:contradict_zero_beta}. Hence, we conclude the uniqueness of $\bm{\beta}^{o*}$ in $\Theta$.
\end{proof}


\subsubsection{Proof of Theorem~\ref{thm:MLE_consistency}}
\begin{proof}
First, let us prove that $\widehat{\bm{\beta}}^{\text{mle}}$ is in $\Theta$.
By Proposition~\ref{prop:mean_of_score}, we can get that for any $\bm{\epsilon}\in\mathbb{R}^{2p+3}$ and $\bm{\epsilon} > \bm{0}$
, there exists $T_N$ such that for $T>T_N$ we have 
\begin{equation}
   \abs{ \frac{1}{T}\,\sum_{t=1}^{T}\; \bm{\psi}_t(\bm{\beta}^{o*})} < \bm{\epsilon}. 
\end{equation}
Under Assumptions~\ref{assump:magnitude_of_coefs} and \ref{assump:narrow_Theta}, we know that $\bm{\beta}^{o*}$ is an interior point in $\Theta$. Since $\bm{\beta}^{o*}$ is an interior point in $\Theta$ and $\frac{1}{T}\,\sum_{t=1}^{T}\; \bm{\psi}_t(\cdot)$ is uniformly continuous in $\Theta$ by Proposition~\ref{prop:assumpB2_continuity}, there exists $\delta > 0$ and
a ball $B(\bm{\beta}^{o*},\delta):= \{\bm{\beta}\in\Theta|\norm{\bm{\beta}^{o*} - \bm{\beta}} <\delta \}$ such that 
\begin{equation}
    \abs{ \frac{1}{T}\,\sum_{t=1}^{T}\; \bm{\psi}_t(\bm{\beta}^{o*}) -  \frac{1}{T}\,\sum_{t=1}^{T}\; \bm{\psi}_t(\bm{\beta})} < \frac{1}{2}\bm{\epsilon}. 
\end{equation}
Moreover, there exist $\bm{\beta}_1, \bm{\beta}_2 \in B(\bm{\beta}^{o*},\delta)$ such that if $ \abs{ \frac{1}{T}\,\sum_{t=1}^{T}\; \bm{\psi}_t(\bm{\beta}^{o*})} \neq \bm{0}$, then
\begin{align}
     & 
     \bm{0}
     < \frac{1}{T}\,\sum_{t=1}^{T}\; \bm{\psi}_t(\bm{\beta}^{o*}) -  \frac{1}{T}\,\sum_{t=1}^{T}\; \bm{\psi}_t(\bm{\beta}_1) 
     \leq \frac{1}{2} \abs{ \frac{1}{T}\,\sum_{t=1}^{T}\; \bm{\psi}_t(\bm{\beta}^{o*})}
     \\
     - & \frac{1}{2} \abs{ \frac{1}{T}\,\sum_{t=1}^{T}\; \bm{\psi}_t(\bm{\beta}^{o*})}
     \leq \frac{1}{T}\,\sum_{t=1}^{T}\; \bm{\psi}_t(\bm{\beta}^{o*}) -  \frac{1}{T}\,\sum_{t=1}^{T}\; \bm{\psi}_t(\bm{\beta}_2) 
     < \bm{0}
\end{align}
which results in
\begin{align}
          & \frac{1}{T}\,\sum_{t=1}^{T}\; \bm{\psi}_t(\bm{\beta}^{o*}) 
          - \frac{1}{2} \abs{ \frac{1}{T}\,\sum_{t=1}^{T}\; \bm{\psi}_t(\bm{\beta}^{o*})}
        \leq \frac{1}{T}\,\sum_{t=1}^{T}\; \bm{\psi}_t(\bm{\beta}_1) 
        < \frac{1}{T}\,\sum_{t=1}^{T}\; \bm{\psi}_t(\bm{\beta}^{o*}) 
        \\
        & \frac{1}{T}\,\sum_{t=1}^{T}\; \bm{\psi}_t(\bm{\beta}^{o*}) 
        < \frac{1}{T}\,\sum_{t=1}^{T}\; \bm{\psi}_t(\bm{\beta}_2)
        \leq \frac{1}{T}\,\sum_{t=1}^{T}\; \bm{\psi}_t(\bm{\beta}^{o*}) 
          + \frac{1}{2} \abs{ \frac{1}{T}\,\sum_{t=1}^{T}\; \bm{\psi}_t(\bm{\beta}^{o*})}.
\end{align}
Hence we find $\abs{ \frac{1}{T}\,\sum_{t=1}^{T}\; \bm{\psi}_t(\bm{\beta}_1) }   < \abs{ \frac{1}{T}\,\sum_{t=1}^{T}\; \bm{\psi}_t(\bm{\beta}^{o*})} < \bm{\epsilon}$ or $\abs{ \frac{1}{T}\,\sum_{t=1}^{T}\; \bm{\psi}_t(\bm{\beta}_2) }   < \abs{ \frac{1}{T}\,\sum_{t=1}^{T}\; \bm{\psi}_t(\bm{\beta}^{o*})} < \bm{\epsilon}$. Continue this process, and we can find a sequence of points in $\Theta$ has decreasing values of $\abs{ \frac{1}{T}\,\sum_{t=1}^{T}\; \bm{\psi}_t(\cdot)}$. There exists a subsequence of the resulted point sequence and the limit of the subsequence is $\widehat{\bm{\beta}}^{\text{mle}} $ in $\Theta$ with $\abs{ \frac{1}{T}\,\sum_{t=1}^{T}\; \bm{\psi}_t(\widehat{\bm{\beta}}^{\text{mle}})} = 0$. Since $\lim_{T\rightarrow\infty} \frac{1}{T}\,\sum_{t=1}^{T}\; \bm{\psi}_t(\bm{\beta}^{o*}) = \bm{0}$, then we obtain $ \lim_{T\rightarrow\infty}\widehat{\bm{\beta}}^{\text{mle}} = \bm{\beta}^{o*}$ and conclude this proof.

\end{proof}

\subsubsection{Proof of Theorem~\ref{thm:MLE_asym}}
\begin{proof}
Apply the Taylor expansion of $\sum_{t=1}^{T}\; \bm{\psi}_t(\widehat{\bm{\beta}}^{\text{mle}})$ at $\bm{\beta}^{o*}$ and the mean value theorem, we obtain
\begin{equation}
   \frac{1}{\sqrt{T}}\sum_{t=1}^{T}\; \bm{\psi}_t(\widehat{\bm{\beta}}^{\text{mle}}) = \frac{1}{\sqrt{T}} \sum_{t=1}^{T}\; \bm{\psi}_t( \bm{\beta}^{o*} ) + \frac{1}{T}H_{\mathcal{L}}(\overline{\bm{\beta}}) \,\sqrt{T}\left( \widehat{\bm{\beta}}^{\text{mle}} -  \bm{\beta}^{o*} \right),
\end{equation}
where $\overline{\bm{\beta}}$ is between $\widehat{\bm{\beta}}^{\text{mle}}$ and $\bm{\beta}^{o*}$.
Since $\sum_{t=1}^{T}\; \bm{\psi}_t(\widehat{\bm{\beta}}^{\text{mle}}) = 0$, the above expansion results in 
\begin{equation}
    \left( \frac{1}{T}H_{\mathcal{L}}(\overline{\bm{\beta}})\right)^{-1}\,\frac{1}{\sqrt{T}} \sum_{t=1}^{T}\; \bm{\psi}_t( \bm{\beta}^{o*} )  = \sqrt{T}\left( \widehat{\bm{\beta}}^{\text{mle}} -  \bm{\beta}^{o*} \right).
\end{equation}
$\frac{1}{T}H_{\mathcal{L}}(\cdot)$ is uniformly continuous in $\Theta$, which can be proved analogously to the proof of Proposition~\ref{prop:assumpB2_continuity} with knowing $y_t^2 \in O_p(1)$ thanks to $0<k_t(\bm{\beta}^{o*})<0.5$. By the continuous mapping theorem and knowing $\lim_{T\rightarrow\infty} \widehat{\bm{\beta}}^{\text{mle}} = \bm{\beta}^{o*}$ from Theorem~\ref{thm:MLE_consistency}, we obtain that
$$
\lim_{T\rightarrow\infty} \left( \frac{1}{T}H_{\mathcal{L}}(\overline{\bm{\beta}})\right)^{-1} = \left( \frac{1}{T}H_{\mathcal{L}}( \bm{\beta}^{o*} )\right)^{-1}.
$$
From Assumptions \ref{assump:limiting_score_Z} and \ref{assump:weakconverge_HL} we know that
\begin{equation}
\left\{
    \begin{aligned}
    & 
    \frac{1}{\sqrt{T}} \sum^T_{t=1}\, \bm{\psi}_t(\bm{\beta}^{o*})
  \overset{\mathcal{D}}{\sim}
  S_{\psi}
    \\
    & 
    \frac{1}{T}H_{\mathcal{L}}(\bm{\beta}^{o*}) \overset{\mathcal{D}}{\sim} \Omega_{H}
    \end{aligned}
    \right.
\end{equation}
and thus by Slutsky's theorem, we obtain that
\begin{equation}
    \sqrt{T}\left( \bm{\beta}^{o*} - \widehat{\bm{\beta}}^{\text{mle}} \right) \overset{\mathcal{D}}{\sim} \Omega_{H}^{-1} \; S_{\psi}.
\end{equation}
\end{proof}

\subsection{ALMLE - Proofs}
\label{append:ALMLE}
\subsubsection{Proof of Theorem~\ref{thm:ALMLE_nece_cond}} 
\begin{proof}
We prove this theorem by contradiction. In the proof below, we show that truly inactive predictors have a non-zero probability to get selected since $\lambda_{k, T}, \lambda_{\sigma, T}\in O(T^{\frac{1}{2}})$ if there is no $\widehat{\bm{\beta}}^{\text{al}}(\lambda_{k, T}, \lambda_{\sigma, T})$ with $\lambda_{k, T}, \lambda_{\sigma, T}\in O(T^{\frac{1}{2}})$ such that the condition~\eqref{eq:Thm3_DETcondition} is met.

We set $w_{k,i}, w_{\sigma,j}$, e.g. using the MLE in section~\ref{sec:MLE_Inference}, such that $\sqrt{T}(\frac{1}{w_{k,i}} - \beta_{1,i}^{*o})=O_p(1) $ and $\sqrt{T}(\frac{1}{w_{\sigma,j}} - \beta_{2,j}^{*o})=O_p(1) $, for $i=1,\ldots, p$, $j=1,\ldots, p+1$.

Let us go over the tuning parameter grid $\{(\lambda_{k, T}, \lambda_{\sigma, T}): \lambda_{k, T}\in S_{\lambda_{k,T}}, \lambda_{\sigma, T}\in S_{\lambda_{\sigma,T}},\}$. $\lambda_{k, T,\max}$ and $\lambda_{\sigma, T,\max}$ are chosen large enough such that $\widehat{\bm{\beta}}^{\text{al}}(\lambda_{k, T,\max}, \lambda_{\sigma, T,\max})= \bm{0}$, i.e., no predictors get selected. By the Karush-Kuhn-Tucker (KKT) optimality condition and with $(\lambda_{k, T,\max}, \lambda_{\sigma, T,\max})$, we have
\begin{equation}
\label{eq:KKT_ALMLE_maxtuning}
\left\{
    \begin{aligned}
   \abs{\frac{\partial \mathcal{L} }{\partial \widehat{\beta}_{1,i}^{\text{al}} }} 
    & \leq \lambda_{k, T,\max}\, w_{k,i} ,
    & (1,i)\in \mathcal{A}_{k}\cup \mathcal{A}_{k}^c ,
    \\
   \abs{\frac{\partial \mathcal{L} }{\partial \widehat{\beta}_{2,j}^{\text{al}} }} 
    & \leq \lambda_{\sigma, T,\max}\, w_{\sigma,j} ,
    & (2,j)\in \mathcal{A}_{\sigma} \cup \mathcal{A}_{\sigma}^c ,
    \\
    \end{aligned}
    \right.
\end{equation}
where we denote $\mathcal{A}_{k}: = \left\{ (1,i): i\geq 1, \beta_{1,i}^{*o} \neq 0\right\}$, $\mathcal{A}_{k}^c := \left\{ (1,i): i\geq 1,  \beta_{1,i}^{*o} = 0\right\}$, $\mathcal{A}_{\sigma} := \left\{ (2,j): j\geq 1,  \beta_{2,j}^{*o} \neq 0\right\}$, and $\mathcal{A}_{\sigma}^c := \left\{ (2,j): j\geq 1, \beta_{2,j}^{*o} = 0\right\}$. We rewrite $\frac{\partial \mathcal{L} }{\partial \widehat{\bm{\beta}}^{\text{al}}(\lambda_{k, T,\max}, \lambda_{\sigma, T,\max}) }$ using the mean value theorem because $\mathcal{L}(\cdot; \{ y_t \}, \{ \bm{z}_t^* \})$ is twice continuously differentiable, and substitute this rewriting into the above KKT condition and get
\begin{equation}
\label{eq:KKT_ALMLE_maxtuning_expansion}
\left\{
    \begin{aligned}
    \abs{
    \frac{\partial \mathcal{L} }{\partial \beta_{1,i}^{*o} } 
    - \sum_{i_g=0}^{p}\frac{\partial^2 \mathcal{L} }{\partial \widehat{\beta}_{1,i}^{\text{al}} \partial \widehat{\bm{\beta}}_{1,i_g}^{\text{al}'} }\overline{\bm{\beta}}_{1,i_g}^{(2\,n_{\lambda_k}\,n_{\lambda_{\sigma}})} 
    - \sum_{j_g=0}^{p+1}\frac{\partial^2 \mathcal{L} }{\partial \widehat{\beta}_{1,i}^{\text{al}} \partial \widehat{\bm{\beta}}_{2,j_g}^{\text{al}'}  }\overline{\bm{\beta}}_{2,j_g}^{(2\,n_{\lambda_k}\,n_{\lambda_{\sigma}})} 
    }   
    \leq & \lambda_{k, T,\max}\, w_{k,i} ,
    \\
    & (1,i)\in \mathcal{A}_{k} \cup \mathcal{A}_{k}^c ,
    \\
    \abs{
    \frac{\partial \mathcal{L} }{\partial \beta_{2,j}^{*o} } 
    - \sum_{i_g=0}^{p}\frac{\partial^2 \mathcal{L} }{\partial \widehat{\beta}_{2,j}^{\text{al}} \partial \widehat{\bm{\beta}}_{1,i_g}^{\text{al}'}  }\overline{\bm{\beta}}_{1,i_g}^{(2\,n_{\lambda_k}\,n_{\lambda_{\sigma}}-1)}  
    - \sum_{j_g=0}^{p+1}\frac{\partial^2 \mathcal{L} }{\partial \widehat{\beta}_{2,j}^{\text{al}} \partial \widehat{\bm{\beta}}_{2,j_g}^{\text{al}'}  }\overline{\bm{\beta}}_{2,j_g}^{(2\,n_{\lambda_k}\,n_{\lambda_{\sigma}}-1)}  
    }
    \leq & \lambda_{\sigma, T,\max}\, w_{\sigma,j} ,
    \\
    & (2,j)\in \mathcal{A}_{\sigma} \cup \mathcal{A}_{\sigma}^c,
    \\
    \end{aligned}
    \right.
\end{equation}
where $\overline{\bm{\beta}}_{1,i_g}^{(2\,n_{\lambda_k}\,n_{\lambda_{\sigma}})}, \overline{\bm{\beta}}_{1,i_g}^{(2\,n_{\lambda_k}\,n_{\lambda_{\sigma}} - 1)} \in \left[\bm{\beta}_{1,i_g}^{*o},\widehat{\bm{\beta}}^{\text{al}}_{1,i_g}(\lambda_{k, T,\max}, \lambda_{\sigma, T,\max})\right]$, and $\overline{\bm{\beta}}_{1,j_g}^{(2\,n_{\lambda_k}\,n_{\lambda_{\sigma}})}, \overline{\bm{\beta}}_{1,j_g}^{(2\,n_{\lambda_k}\,n_{\lambda_{\sigma}} - 1)} \in [\bm{\beta}_{2,j_g}^{*o},$ $\widehat{\bm{\beta}}^{\text{al}}_{2,j_g}(\lambda_{k, T,\max}, \lambda_{\sigma, T,\max})]$. In fact, we can find a pair of $(\lambda_{k, T,\max},\lambda_{\sigma, T,\max})$ in $O(T)$.

Lower the tuning parameters from $(\lambda_{k, T,\max},\lambda_{\sigma, T,\max})$, and $\widehat{\bm{\beta}}^{\text{al}}$ is able to screen variables. Truly inactively predictors always meet the inequality of the KKT condition with $\frac{\lambda_{k, T}}{\sqrt{T}}\rightarrow \infty$ and $\frac{\lambda_{\sigma, T}}{\sqrt{T}}\rightarrow \infty$, thereby not being selected. Coming to $\lambda_{k, T}, \lambda_{\sigma, T} \in O(T^{\frac{1}{2}})$, we get that $\lambda_{k, T}\, w_{k,i}=O_p(T^{\frac{1}{2}}) $ for $(1,i)\in \mathcal{A}_{k}$; $\lambda_{k, T}\, w_{k,i}=O_p(T) $ for $(1,i)\in \mathcal{A}_{k}^c$; $\lambda_{\sigma, T}\, w_{\sigma,j}=O_p(T^{\frac{1}{2}}) $ for $(2,j)\in \mathcal{A}_{\sigma}$; $\lambda_{\sigma, T}\, w_{\sigma,i}=O_p(T) $ for $(2,i)\in \mathcal{A}_{\sigma}^c$. If with any $\lambda_{k, T}\in O(T^{\frac{1}{2} })$ and $\lambda_{\sigma, T}\in O(T^{\frac{1}{2} })$, there is no $\widehat{\bm{\beta}}^{\text{al}}(\lambda_{k, T}, \lambda_{\sigma, T})$ such that the condition~\eqref{eq:Thm3_DETcondition} is met, then there exists $(1,i_a) \in \mathcal{A}_{k}$ or $(2,j_a) \in \mathcal{A}_{\sigma}$ such that 
\begin{equation}
\abs{\frac{\partial \mathcal{L} }{\partial \widehat{\beta}_{1,i_a}^{\text{al}} }} 
   \leq \lambda_{k, T}\, w_{k,i_a}
\end{equation}
or
\begin{equation}
\abs{\frac{\partial \mathcal{L} }{\partial \widehat{\beta}_{2,j_a}^{\text{al}} }} 
    \leq \lambda_{\sigma, T}\, w_{\sigma,j_a}
\end{equation}
but $\widehat{\beta}_{1,i_a}^{\text{al}} = 0$ or $\widehat{\beta}_{2,j_a}^{\text{al}} = 0$ correspondingly for any $\lambda_{k, T}, \lambda_{\sigma, T}\in O(T^{\frac{1}{2}})$. 

Under the condition~\eqref{eq:Thm3_DETcondition} is broken, if $\abs{\frac{\partial \mathcal{L} }{\partial \widehat{\beta}_{1,i_a}^{\text{al}} }} \leq \lambda_{k, T}\, w_{k,i_a}$ with $\widehat{\beta}_{1,i_a}^{\text{al}} = 0$ and any $\lambda_{k, T}, \lambda_{\sigma, T}\in O(T^{\frac{1}{2}})$, we get that $\lambda_{k, T}\, w_{k,i}=O_p(T^{\frac{1}{2} }) $ for $(1,i)\in \mathcal{A}_{k}$ and $\lambda_{k, T}\, w_{k,i}=O_p(T) $ for $(1,i)\in \mathcal{A}_{k}^c$. Then each truly inactively predictor but correlated with the $i_a$-th predictor of the shape model gains a non-zero probability to get selected due to
\begin{equation}
    P\left\{ \abs{\frac{\partial \mathcal{L} }{\partial \widehat{\beta}_{1,i}^{\text{al}} }} 
   \geq \lambda_{k, T}\, w_{k,i}, i\in  \mathcal{A}_{k}^c \middle| \widehat{\beta}_{1,i_a}^{\text{al}} = 0,\; \lambda_{k, T}\in O(T^{\frac{1}{2}}),\; \lambda_{\sigma, T}\in O(T^{\frac{1}{2}})  
    \right\} \neq 0.
\end{equation}
The above probability is obtained easily by rewriting $\abs{\frac{\partial \mathcal{L} }{\partial \widehat{\beta}_{1,i}^{\text{al}} }}$ using the mean value theorem as done in~\eqref{eq:KKT_ALMLE_maxtuning_expansion} and obtaining that $\abs{\frac{\partial \mathcal{L} }{\partial \widehat{\beta}_{1,i}^{\text{al}} }} \in O_p(T)$. For each $\widehat{\beta}_{1,i}^{\text{al}}\neq 0, i\in  \mathcal{A}_{k}^c$ with $\lambda_{k, T}\in O(T^{\frac{1}{2}})$, the bias of $\widehat{\beta}_{1,i}^{\text{al}}$ is $O_p(1)$. Lowering $\lambda_{k, T}$, it follows that $\widehat{\beta}_{1,i}^{\text{al}} = O_p(T^{-\gamma})$ meets the equality of the KKT condition with $\lambda_{k, T}\in O(T^{\frac{1}{2} - \gamma})$ and remains selected for $\gamma \in [0, \frac{1}{2}]$.

The same reasoning applies if $\abs{\frac{\partial \mathcal{L} }{\partial \widehat{\beta}_{2,j_a}^{\text{al}} }} \leq \lambda_{\sigma, T}\, w_{\sigma,j_a}$ with $\widehat{\beta}_{1,j_a}^{\text{al}} = 0$ and any $\lambda_{k, T}, \lambda_{\sigma, T}\in O(T^{\frac{1}{2}})$, we get that each truly inactively predictor but correlated with the $j_a$-th predictor of the scale model then gains a non-zero probability to get selected, i.e.,
\begin{equation}
    P\left\{ \abs{\frac{\partial \mathcal{L} }{\partial \widehat{\beta}_{2,j}^{\text{al}} }} 
    \geq \lambda_{\sigma, T}\, w_{\sigma,j}, j\in  \mathcal{A}_{\sigma}^c \middle| \widehat{\beta}_{2,j_a}^{\text{al}} = 0 \; \lambda_{k, T}\in O(T^{\frac{1}{2}}),\; \lambda_{\sigma, T}\in O(T^{\frac{1}{2}})  
    \right\} \neq 0.
\end{equation}
Therefore, the model selection consistency cannot be achieved if the condition~\eqref{eq:Thm3_DETcondition} is not met, and we finish the proof.
\end{proof}
\subsubsection{Proof of Theorem~\ref{thm:ALMLE_tal_oracle}}
\begin{proof}
In this proof, we first prove the selection consistency of $\widehat{\bm{\beta}}^{k,al}$ on truly active predictors of the shape model, i.e., 
\begin{equation}
\label{eq:prove_k_over_sele}
    \lim_{T\rightarrow \infty}\,P\left\{
    \widehat{\beta}^{k,al}_{1,i} \neq 0, \forall (1,i)\in\mathcal{A}_k
    \right\}=1
\end{equation}
which is also equivalent to $\lim_{T\rightarrow \infty}\,P\left\{
\mathcal{A}^{k,al}_{k,T} \supseteq \mathcal{A}_k\right\}=1$ with
$\mathcal{A}^{k,al}_{k,T}:=\{(1,i): \widehat{\bm{\beta}}^{k,al}_{1,i} \neq 0\}$. It follows that this over-selection possibility for the shape model due to $\widehat{\bm{\beta}}^{k,al}$ is proved to be curbed to zero by $\widehat{\bm{\beta}}^{tal}$, and the oracle property of $\widehat{\bm{\beta}}^{tal}$ is obtained lastly.

Firstly, we rewrite the objective function to obtain $\widehat{\bm{\beta}}^{k,al}$ as follows: according to Assumption~\ref{assump:kmle_tal}, there exists $\bm{\beta}^{k,o}$, and
\begin{equation}
\label{eq:Vk_nu_ALV2}
    V^{(k)}(\bm{\nu}^{(k)})
     =
    - \mathcal{L}\left(\bm{\beta}^{k,o} + \frac{\bm{\nu}^{(k)}}{\sqrt{T}}; \{ y_t \}, \{ \bm{z}_t^* \}\right) + \lambda_{k, T}\sum_{i=1}^{p} \widetilde{w}_{k,i}\abs{\beta_{1,i}^{k,o} + \frac{\nu_{1,i}^{(k)}}{\sqrt{T}}},
\end{equation}
where $\bm{\nu}^{(k)}:=[\nu_{1,0}^{(k)},\ldots,\nu_{1,p}^{(k)}, \nu_{2,0}^{(k)},0,\ldots,0]'\in\mathbb{R}^{p+2}\times\bm{0}_{(p+1)}$. We obtain that $\widehat{\bm{\nu}}^{(k)} = \argmin{\bm{\nu}^{(k)}\in\mathbb{R}^{p+2}\times\bm{0}_{(p+1)}} V^{(k)}(\bm{\nu}^{(k)})  - V^{(k)}(\bm{0})$ such that equivalently
\begin{equation}
    \widehat{\bm{\beta}}^{k,al} = \bm{\beta}^{k,o} + \frac{\widehat{\bm{\nu}}^{(k)}}{\sqrt{T}}.
\end{equation}
Specifically,
\begin{equation}
\label{eq:k_al_objective_difference_with_trueObj}
    \begin{aligned}
    & 
    V^{(k)}(\bm{\nu}^{(k)})  
    - 
    V^{(k)}(\bm{0})
    \\
    & =
    - \left(
    \mathcal{L}(\bm{\beta}^{k,o} + \frac{\bm{\nu}^{(k)}}{\sqrt{T}}; \{ y_t \}, \{ \bm{z}_t^* \})
    -
    \mathcal{L}\left(\bm{\beta}^{k,o}; \{ y_t \}, \{ \bm{z}_t^* \}\right) 
    \right)
    \\
    & +
    \frac{\lambda_{k, T}}{\sqrt{T}}\sum_{i=1}^{p} \widetilde{w}_{k,i}\,\sqrt{T}\left( \abs{\beta_{1,i}^{k,o} + \frac{\nu_{1,i}^{(k)}}{\sqrt{T}}} - \abs{\beta_{1,i}^{k,o}}
    \right).
    \end{aligned}
\end{equation}
According to Assumption~\ref{assump:kmle_tal} and Theorem~\ref{thm:MLE_asym}, we have that $\sqrt{T}(\widehat{\beta}^{k,mle}_{1,i} - \beta_{1,i}^{k,o}) = O_p(1)$ and $\sqrt{T}(\widehat{\beta}^{mle}_{1,i} -\beta_{1,i}^{*o}) = O_p(1)$. Considering that
\begin{equation}
 \lim_{T\rightarrow\infty}  \sqrt{T}\left( \abs{\beta_{1,i}^{k,o} + \frac{\nu_{1,i}^{(k)}}{\sqrt{T}}} - \abs{\beta_{1,i}^{k,o}}
    \right)
    =\left\{
    \begin{aligned}
    & 
    \nu_{1,i}^{(k)}\,\,\text{sgn}(\beta_{1,i}^{k,o}),\quad \text{if}\, \beta_{1,i}^{k,o}\neq 0,
    \\
    & 
    \abs{\nu_{1,i}^{(k)}},\quad \text{if}\, \beta_{1,i}^{k,o}= 0,
    \end{aligned}
    \right.
\end{equation}
and under Assumption~\ref{assump:NEW_tuning} we obtain that
\begin{equation}
\label{eq:kal_penality_stochastic_orders}
    \frac{\lambda_{k, T}}{\sqrt{T}} \widetilde{w}_{k,i} \sqrt{T}\left( \abs{\beta_{1,i}^{k,o} + \frac{\nu_{1,i}^{(k)}}{\sqrt{T}}} - \abs{\beta_{1,i}^{k,o}}
    \right)
    =\left\{
    \begin{aligned}
    & 
    O_p(T^{-\gamma_1})\, \nu_{1,i}^{(k)}\,\,\text{sgn}(\beta_{1,i}^{k,o}),\quad \text{if}\, \beta_{1,i}^{*o}\neq 0,
    \\
    & 
    O_p(T^{\frac{1}{2}-\gamma_1})\, \nu_{1,i}^{(k)}\,\,\text{sgn}(\beta_{1,i}^{k,o}),\quad \text{if}\, \beta_{1,i}^{*o} = 0\;\text{but}\;\beta_{1,i}^{k,o}\neq 0,
    \\
    & 
    O_p(T^{1 -\gamma_1})\, \abs{\nu_{1,i}^{(k)}},\quad \text{if}\, \beta_{1,i}^{*o} = 0\;\text{and}\;\beta_{1,i}^{k,o}= 0.
    \end{aligned}
    \right.
\end{equation}
Also, we know that
\begin{equation}
    - \left(
    \mathcal{L}(\bm{\beta}^{k,o} + \frac{\bm{\nu}^{(k)}}{\sqrt{T}}; \{ y_t \}, \{ \bm{z}_t^* \})
    -
    \mathcal{L}\left(\bm{\beta}^{k,o}; \{ y_t \}, \{ \bm{z}_t^* \}\right) 
    \right) = O_p(1)\,\max\{ (\nu_{1,i}^{(k)})^2\}
\end{equation}
by Taylor's expansion and the tightness of $\frac{1}{T}H_{\mathcal{L}}(\cdot) $ in $\Theta$ as shown in the proof of Theorem~\ref{thm:MLE_asym}.
Substitute~Eq.~\eqref{eq:kal_penality_stochastic_orders} back into Eq.~\eqref{eq:k_al_objective_difference_with_trueObj}, and by Slutsky's theorem we obtain that 
\begin{equation}
\label{eq:Vk_nuk_limit}
 \resizebox{.9\hsize}{!}{$
   V^{(k)}(\nu_{1,i}^{(k)})  
    - 
    V^{(k)}(\bm{0})
    =
    \left\{
  \begin{aligned}
  &
    O_p(T), & \text{if}\quad\frac{\nu_{1,i}^{(k)}}{\sqrt{T}} = O_p(1), \forall (1,i)\in\mathcal{A}^{k,al}_{k}\,\text{and}\,  \nu_{1,i}^{(k)} = O_p(1),\forall (1,i)\notin\mathcal{A}^{k,al}_{k};
  \\
  &
    O_p(T^{1-2\gamma}), & \text{if}\quad \nu_{1,i}^{(k)}  = O_p(T^{\frac{1}{2} - \gamma}), \forall (1,i)\in\mathcal{A}^{k,al}_{k}\,\text{with}\,\, 0<2\gamma<\gamma_1\,\text{and}\,  \nu_{1,i}^{(k)} = O_p(1),\forall (1,i)\notin\mathcal{A}^{k,al}_{k};
  \\
  &
    O_p(T^{1-\gamma_1}), & \text{if}\quad \nu_{1,i}^{(k)}  = O_p(T^{\frac{1}{2} - \gamma}), \forall (1,i)\in\mathcal{A}^{k,al}_{k}\,\text{with}\,\, \frac{\gamma_1}{2}\leq\gamma\leq\frac{1}{2}\,\text{and}\,  \nu_{1,i}^{(k)} = O_p(1),\forall (1,i)\notin\mathcal{A}^{k,al}_{k};
  \\
  &
    O_p(T^{1-2\gamma}), & \text{if}\quad \nu_{1,i}^{(k)}  = O_p(T^{\frac{1}{2} - \gamma}), \forall (1,i)\in\mathcal{A}^{k,al}_{k}\,\text{with}\,\, 0<\gamma<\gamma_1\,\text{and}\,  \nu_{1,i}^{(k)} = 0,\forall (1,i)\notin\mathcal{A}^{k,al}_{k};
  \\
  & 
 O_p(T^{1-\gamma-\gamma_1}), & \text{if}\quad \nu_{1,i}^{(k)}  = O_p(T^{\frac{1}{2} - \gamma}), \forall (1,i)\in\mathcal{A}^{k,al}_{k}\,\text{with}\,\, \gamma_1\leq\gamma\leq\frac{1}{2}\,\text{and}\,  \nu_{1,i}^{(k)} = 0,\forall (1,i)\notin\mathcal{A}^{k,al}_{k},
  \end{aligned}  
  \right.
  $}
\end{equation}
where $\mathcal{A}^{k,al}_{k}:=\{(1,i): {\beta}^{k,o}_{1,i} \neq 0\}$. 
Therefore, we have that 
\begin{equation}
    \widehat{\nu}_{1,i}^{(k)}  = O_p(1), \forall (1,i)\in\mathcal{A}^{k,al}_{k}\;\text{and}\,  \widehat{\nu}_{1,i}^{(k)} = 0,\forall (1,i)\notin\mathcal{A}^{k,al}_{k},
\end{equation}
minimizing $ V^{(k)}(\bm{\nu}^{(k)}) - V^{(k)}(\bm{0})$, and hence that
\begin{equation}
    \lim_{T\rightarrow\infty} P\left\{  \widehat{\beta}^{k,al}_{1,i} =\beta_{1,i}^{k,o} + \frac{\widehat{\nu}_{1,i}^{(k)}}{\sqrt{T}}\neq 0, \forall (1,i)\in \mathcal{A}^{k,al}_{k}\right\} = P\left\{ \beta_{1,i}^{k,o} \neq 0, \forall (1,i)\in \mathcal{A}^{k,al}_{k}\right\} = 1.
\end{equation}
Thus we obtain $ \lim_{T\rightarrow \infty}\,P\left\{
    \widehat{\beta}^{k,al}_{1,i} \neq 0, \forall (1,i)\in\mathcal{A}_k
    \right\}=1$.

Secondly, we show the asymptotic behaviour of $\widehat{\bm{\beta}}^{tal}$. 
Write $\bm{\beta}=\bm{\beta}^{o*} + \frac{\bm{\nu}}{\sqrt{T}}$ for $\bm{\beta}\in\Theta$ with $\bm{\nu}:=[\nu_{1,0},\ldots,\nu_{1,p}, \nu_{2,0},$ $\ldots,\nu_{2,p+1}]'\in\mathbb{R}^{2p+3}$, and rewrite the objective function to obtain $\widehat{\bm{\beta}}^{tal}$ as follows:
\begin{equation}
\label{eq:V_nu_ALV2}
    V(\bm{\nu})  
    = 
     - \mathcal{L}\left(\bm{\beta}^{o*} + \frac{\bm{\nu}}{\sqrt{T}}; \{ y_t \}, \{ \bm{z}_t^* \}\right)  
     + 
     \widehat{\lambda}_{k, T}\sum_{i=1}^{p} \widetilde{w}_{k,i}\abs{\beta_{1,i}^{o*} + \frac{\nu_{1,i}}{\sqrt{T}}}
     + 
     \lambda_{\sigma, T}\sum_{j=1}^{p+1} \widetilde{w}_{\sigma,j}\abs{\beta_{2,j}^{o*} + \frac{\nu_{2,j}}{\sqrt{T}}}.
\end{equation}
We obtain $\widehat{\bm{\nu}}:= \argmin{\bm{\nu}\in\mathbb{R}^{2p+3}} V(\bm{\nu}) - V(\bm{0})$ such that
\begin{equation}
    \widehat{\bm{\beta}}^{tal} = \bm{\beta}^{*o} + \frac{\widehat{\bm{\nu}}}{\sqrt{T}}.
\end{equation}
Specifically,
\begin{equation}
    \begin{aligned}
    & 
    V(\bm{\nu})  
    - 
    V(\bm{0})
    \\
    & =
    - \left(
    \mathcal{L}(\bm{\beta}^{o*} + \frac{\bm{\nu}}{\sqrt{T}}; \{ y_t \}, \{ \bm{z}_t^* \})
    -
    \mathcal{L}\left(\bm{\beta}^{o*}; \{ y_t \}, \{ \bm{z}_t^* \}\right) 
    \right)
    \\
    & +
    \frac{ \widehat{\lambda}_{k, T}}{\sqrt{T}}\sum_{i=1}^{p} \widetilde{w}_{k,i}\,\sqrt{T}\left( \abs{\beta_{1,i}^{o*} + \frac{\nu_{1,i}}{\sqrt{T}}} - \abs{\beta_{1,i}^{o*}}
    \right)
     +
    \frac{\lambda_{\sigma, T}}{\sqrt{T}}\sum_{j=1}^{p+1} \widetilde{w}_{\sigma,j}\,\sqrt{T}\left( \abs{\beta_{2,j}^{o*} + \frac{\nu_{2,j}}{\sqrt{T}}} - \abs{\beta_{2,j}^{o*}}
    \right).
    \end{aligned}
\end{equation}
Analogously to~\eqref{eq:Vk_nuk_limit}, we can also get that
\begin{equation}
\label{eq:V_nu_limit}
   \lim_{T\rightarrow\infty} V(\bm{\nu})  
    - 
    V(\bm{0})
    =
    \left\{
  \begin{aligned}
  &
  O_p(1),
  &
  \text{if}\; \nu_{1,i}=O_p(1), \nu_{2,j}=O_p(1),\; \text{for}\;\forall (1,i)\in\mathcal{A}_{k}, \forall (2,j)\in\mathcal{A}_{\sigma},\;
  \\
  & 
  & \text{and}\,\nu_{1,i}=0, \nu_{2,j}=0,\; \text{for}\;\forall (1,i)\notin\mathcal{A}_{k}, \forall (2,j)\notin\mathcal{A}_{\sigma},
  \\
  &
  \infty,\qquad & \text{otherwise of}\quad \bm{\nu}=O_p(\bm{1}). 
  \end{aligned}  
  \right.
\end{equation}
Therefore, minimizing $V(\bm{\nu})  -  V(\bm{0})$ is equivalent to minimizing $  - \left(
    \mathcal{L}(\bm{\beta}^{o*} + \frac{\bm{\nu}}{\sqrt{T}}; \{ y_t \}, \{ \bm{z}_t^* \})
    -
    \mathcal{L}\left(\bm{\beta}^{o*}; \{ y_t \}, \{ \bm{z}_t^* \}\right) 
    \right)$ with $\nu_{1,i}=0, \nu_{2,j}=0$ for any $ (1,i)\notin\mathcal{A}_{k}, (2,j)\notin\mathcal{A}_{\sigma}$, which leads to
\begin{equation}
        \sqrt{T}\left(\widehat{\bm{\beta}}^{tal} - \bm{\beta}^{o*} \right)= \widehat{\bm{\nu}} = \argmin{\bm{\nu}\in \mathbb{R}^{2p+3}\cap\{\nu_{1,i}=0.\nu_{2,j} =0, (1,i)\notin\mathcal{A}_{k}, (2,j)\notin\mathcal{A}_{\sigma}\} } 
    V(\bm{\nu})  -  V(\bm{0})
\end{equation}
as $T\rightarrow\infty$ and thereby we obtain the asymptotic behavior of $\widehat{\bm{\nu}}$ by Theorem~\ref{thm:MLE_asym} with the restriction of $\lim_{T\rightarrow\infty}\widehat{\nu}_{1,i}=0, \lim_{T\rightarrow\infty}\widehat{\nu}_{2,j}=0$ for any $ (1,i)\notin\mathcal{A}_{k}, (2,j)\notin\mathcal{A}_{\sigma}$.
Moreover, 
\begin{equation}
\resizebox{\hsize}{!}{$
\begin{aligned}
        &
        \lim_{T\rightarrow\infty}P\left\{ \mathcal{A}_{T}^{tal} = \mathcal{A}\right\}
        \\
        & =
        \lim_{T\rightarrow\infty} P\left\{ \{
        \widehat{\beta}^{tal}_{1,i} =\beta_{1,i}^{*o} + \frac{\widehat{\nu}_{1,i}}{\sqrt{T}}\neq 0, \widehat{\beta}^{tal}_{2,j} =\beta_{2,j}^{*o} + \frac{\widehat{\nu}_{2,j}}{\sqrt{T}}\neq 0, \forall (1,i)\in \mathcal{A}_{k}, \forall (2,j)\in \mathcal{A}_{\sigma} \}
        \cap\{
        \widehat{\beta}^{tal}_{1,i} = 0, \widehat{\beta}^{tal}_{2,j} = 0, \forall (1,i)\notin \mathcal{A}_{k},\forall (2,j)\notin \mathcal{A}_{\sigma} \}
        \right\} 
        \\
        & =
       P\left\{ \{
        \beta_{1,i}^{*o}\neq 0, \beta_{2,j}^{*o}\neq 0, \forall (1,i)\in \mathcal{A}_{k}, \forall (2,j)\in \mathcal{A}_{\sigma} \}
        \cap\{
        \beta_{1,i}^{*o} = 0, \beta_{2,j}^{*o} = 0, \forall (1,i)\notin \mathcal{A}_{k},\forall (2,j)\notin \mathcal{A}_{\sigma} \}
        \right\} = 1 
\end{aligned}
$}
\end{equation}
which concludes that $\widehat{\bm{\beta}}^{tal} $ is model selection consistent. Together with the limiting distribution of  $\sqrt{T}\left(\widehat{\bm{\beta}}^{tal} - \bm{\beta}^{o*} \right)$, we conclude that $\widehat{\bm{\beta}}^{tal}$ has the oracle property.
\end{proof}

\end{document}